\newlength{\continueindent}
\definecolor{Bleu}{RGB}{0,0,204}
\newcommand{\INPUT}{\item[\textbf{Input:}]}
\newcolumntype{P}[1]{>{\centering\arraybackslash}p{#1}}
\theoremstyle{plain}
\theoremstyle{definition} 
\newtheorem{example}{Example} 
\renewcommand{\qedsymbol}{$\square$}
\newtheorem{theorem}{Theorem}
\newtheorem{corollary}{Corollary}
\newtheorem{lemma}{Lemma}
\newtheorem*{remark*}{Remark}
\providecommand{\keywords}[1]
{
  \small	
  \textbf{\textit{Keywords:}} #1
}
\DeclareMathOperator*{\argmin}{argmin}
\DeclareMathOperator*{\esssup}{ess\,sup}
\DeclareMathOperator*{\logit}{logit}
\DeclareMathOperator*{\expit}{expit}
\DeclareFontFamily{U}{jkpmia}{}
\DeclareFontShape{U}{jkpmia}{m}{it}{<->s*jkpmia}{}
\DeclareFontShape{U}{jkpmia}{bx}{it}{<->s*jkpbmia}{}
\DeclareMathAlphabet{\mathfrak}{U}{jkpmia}{m}{it}
\SetMathAlphabet{\mathfrak}{bold}{U}{jkpmia}{bx}{it}
\title{Automatic Debiased Machine Learning for Smooth Functionals\\ of Nonparametric M-Estimands}
\author[1,2]{Lars van der Laan\thanks{\footnotesize Corresponding author: lvdlaan@uw.edu}}
\author[2]{Aur\'elien Bibaut}
\author[2,3]{Nathan Kallus}
\author[1]{Alex Luedtke}
\affil[1]{Department of Statistics, University of Washington}
\affil[2]{Netflix Research}
\affil[3]{Cornell Tech, Cornell University}
\newcommand{\slightspacing}{\setstretch{1.175}}
\begin{document}

\maketitle

\slightspacing

\begin{abstract}
\singlespacing
We develop a unified framework for \emph{automatic debiased machine learning} (autoDML) for inference on a broad class of statistical parameters. The framework applies to any smooth functional of a nonparametric \emph{M-estimand}, defined as the minimizer of a population risk over an infinite-dimensional linear space. Examples include counterfactual regression, quantile, and survival functions, as well as conditional average treatment effects. Rather than requiring manual derivation of influence functions, our approach automates the construction of debiased estimators using three ingredients: the gradient and Hessian of the loss function and a linear approximation of the target functional. Estimation reduces to solving two risk minimization problems, one for the M-estimand and one for a Riesz representer. The framework accommodates Neyman-orthogonal loss functions that depend on nuisance parameters and extends to vector-valued M-estimands through joint risk minimization. We characterize the efficient influence function and construct efficient autoDML estimators via one-step correction, targeted minimum loss estimation, and sieve-based plug-in methods. Under quadratic risk, these estimators satisfy double robustness for linear functionals. We further show that they are robust to mild misspecification of the M-estimand model, incurring only second-order bias. We illustrate the method by estimating long-term survival probabilities under a semiparametric two-parameter beta-geometric failure model.
\end{abstract}

\keywords{\footnotesize Automatic debiasing, Neyman orthogonality, M-estimation, semiparametrics, causal inference}

\section{Introduction}

\slightspacing

Inference on real-valued summaries of probability distributions is central to many scientific applications, including treatment effect estimation, survival analysis, and policy learning. These summaries are typically low-dimensional functionals of high- or infinite-dimensional nuisance functions \citep{vanderLaanRose2011, DoubleML}. A key example is the average treatment effect (ATE) under full confounding adjustment, which is the difference of partial averages of the outcome regression function \citep{bang2005doubly}. More broadly, such nuisance functions are often characterized as minimizers of a population risk, such as mean squared error in regression. It is appealing to use machine learning methods to estimate this function, given their flexibility and practical success. However, simply reporting the functional applied to the machine learning estimate may result in bias and slow convergence, and leaves the question of inference unresolved.

Several methods have been developed to ``de-bias" the na\"ive plug-in estimator to achieve \(\sqrt{n}\)-consistency and asymptotic normality under suitable conditions. These methods include one-step estimation \citep{pfanzagl1985contributions, bickel1993efficient}, estimating equations and double machine learning \citep{robins1995analysis, robinsCausal, vanderlaanunified, DoubleML}, and targeted minimum loss-based estimation \citep{laan_rubin_2006, vanderLaanRose2011}. For certain sieve estimators, plug-in methods may suffice, though inference still requires related asymptotic analysis \citep{shen1997methods, spnpsieve, sieveOneStepPlugin, SieveQiu, discussionMLEMark, undersmoothedHAL}. These frameworks typically involve two stages: (i) estimating nuisance functions using flexible machine learning tools, such as regularized empirical risk minimization, and (ii) a debiasing step to enable valid inference.  The latter often requires manual derivation of the efficient influence function for the target parameter, along with additional nuisance estimation tailored to that parameter \citep{bickel1993efficient}. This derivation is often complex and parameter-specific, requiring specialized expertise and case-by-case analysis.

A growing body of work has sought to automate debiased estimation and inference in semiparametric statistics and causal inference. These methods aim to simplify the construction of debiased machine learning estimators for novel parameters, thereby avoiding complex derivations based on differential geometry and functional analysis. One line of work uses numerical differentiation to approximate the efficient influence function for pathwise differentiable functionals \citep{frangakis2015deductive, luedtke2015discussion, carone2019toward, ichimura2022influence, jordan2022empirical}. However, these methods require careful tuning and can be computationally demanding, since they involve perturbing the data distribution and numerically differentiating the parameter for each observation, which may limit their practical applicability. More recently, \citet{luedtke2024simplifying} proposed a framework based on automatic differentiation of pathwise differentiable functionals \citep{rall1981automatic}. This framework applies to parameters that can be expressed as compositions of primitive functions with known (precomputed) Hilbert-valued pathwise derivatives, such as regression functions and densities \citep{luedtke2024one}.

A widely used approach for regression functionals is \textit{automatic debiased machine learning} (autoDML) \citep{chernozhukov2022automatic}. The key insight is that, for a smooth functional of a regression function, constructing the efficient influence function and debiasing an estimator requires only one additional nuisance parameter: the Riesz representer of the linearized functional \citep{williams2025riesz, hines2025learning}. Because this representer is characterized as the solution to a risk minimization problem, it can be estimated directly using flexible machine learning methods that accommodate custom loss functions, including random forests \citep{breiman2001random, chernozhukov2021automatic}, gradient-boosted trees \citep{BoostingFruend, lee2025rieszboost}, and neural networks \citep{abdi1999neural}. Beyond simplifying debiasing, autoDML can also improve practical performance by targeting the Riesz representer directly, sometimes outperforming more indirect estimation strategies \citep{chernozhukov2018auto, chernozhukov2021automatic, van2024stabilized}. To date, however, autoDML has been developed primarily for regression settings, including generalized regression models \citep{chernozhukov2022automatic, ichimura2022influence, chernozhukov2024automaticRiesz}, regression under covariate shift \citep{chernozhukov2023automatic}, sequential regression \citep{chernozhukov2022automatic}, and linear inverse problems \citep{bennett2023source, van2025automatic}.

Despite these advances, existing autoDML methods remain largely confined to regression settings, in which the target parameter is a functional of a real-valued regression function defined as the minimizer of a known loss and involving no additional nuisance components. However, many parameters in semiparametric statistics arise as smooth functionals of more general \textit{M-estimands}, that is, population risk minimizers over infinite-dimensional linear spaces. Examples include counterfactual regression functions \citep{rubin2007doubly}, density ratios \citep{hines2025learning}, quantile functions \citep{wang2018quantile}, hazard and survival functions \citep{westling2024inference}, and pseudo-outcome regressions, such as the conditional average treatment effect \citep{foster2023orthogonal, yang2023forster}. Moreover, existing autoDML methods rely on problem-specific Riesz representers to enable automatic debiasing. It is therefore natural to ask which structural features of a parameter allow automatic debiased estimation by such methods.


\subsection{Contributions of this work}

We develop a unified autoDML framework for inference on smooth functionals of M-estimands defined through general risk functions. The framework accommodates (Neyman-orthogonal) loss functions that depend on unknown nuisance components requiring data-adaptive estimation, as well as vector-valued M-estimands obtained by jointly minimizing a single coupled objective over product spaces. Drawing on semiparametric efficiency theory, we characterize the efficient influence function and show that the structure of these problems admits automatic debiasing via Riesz representers. In particular, we show that the Riesz representation theorem applies to the linearized functional under the inner product induced by the Hessian of the population risk. This yields a unified perspective on automatic debiasing, showing that it is determined jointly by the local behavior of the target functional and the curvature of the population risk for the M-estimand.

\vspace{0.3em}
Our main contributions are as follows:
\begin{enumerate}
    \item[(i)] We develop a general autoDML framework for inference on smooth functionals of infinite-dimensional M-estimands. We propose estimators based on one-step estimation, targeted minimum loss-based estimation, and sieve-based plug-in estimation. For quadratic risk functions, these estimators satisfy a form of double robustness for linear functionals.

    \item[(ii)] We formalize smooth functionals of M-estimands as \emph{nonparametric projection parameters}. For this class of parameters, we derive a functional von Mises expansion and characterize the efficient influence function and the corresponding efficiency bound.

    \item[(iii)] We establish the asymptotic efficiency of autoDML estimators for this class of parameters.

    \item[(iv)] We study the effect of model misspecification and show that the approximation error induced by misspecification of the M-estimand model is second order.
\end{enumerate}

Together, our contributions address three limitations of existing autoDML methods: (1) they do not allow nuisance components in loss functions; (2) they do not allow functionals of multiple M-estimands that jointly minimize a coupled objective; and (3) they lack a formal efficiency theory based on projection parameters. The most closely related work is Section~3 of \citet{chernozhukov2024automaticRiesz}, which extends autoDML estimators for regression functionals to functionals of real-valued M-estimands that separately minimize known loss functions, none of which depend on nuisance parameters. By formalizing the notion of a Hessian Riesz representer, we show that this setting arises naturally as a special case of our framework. Related work is discussed in more detail in Section~\ref{sec:relatedwork}.

This paper is organized as follows. Section~\ref{sec:problemsetup} introduces the generalized autoDML framework, presents illustrative examples, and reviews related literature. Section~\ref{sec::eif} derives the functional von Mises expansion and the efficient influence function for functionals of M-estimands. Section~\ref{sec::autodml} presents the proposed autoDML estimators, with the corresponding asymptotic theory developed in Section~\ref{sec::theory}. Section~\ref{sec::exp} illustrates the flexibility and practical performance of the framework through an application to long-term survival in the beta-geometric model.

\section{Overview of the generalized autoDML framework}


\label{sec:problemsetup}

 \label{sec::overview}
\subsection{Proposed Approach}

Suppose \( Z_1, Z_2, \ldots, Z_n \) are i.i.d.\ draws from \( Z \sim P_0 \), where \( P_0 \) is a distribution on \( \mathcal{Z} \) belonging to a nonparametric, convex model \( \mathcal{P} \). Our goal is to automate the construction of debiased estimators of a feature of an M-estimand \( \theta_{P_0} \), given by
\[
E_{P_0}[m(Z,\theta_{P_0})],
\]
where \( m : \mathcal{Z} \times \mathcal{H} \to \mathbb{R} \) is a known smooth functional. For each \(P \in \mathcal{P}\), the M-estimand \(\theta_P\) takes values in a normed linear space \((\mathcal{H}, \|\cdot\|_{\mathcal{H}})\) and is defined as the unique solution to the infinite-dimensional constrained M-estimation problem:
\begin{equation}
    \theta_P = \argmin_{\theta \in \mathcal{H}} L_P(\theta, \eta_P), \label{eqn::popminimizer}
\end{equation}
where \(L_P(\theta, \eta) := E_P[\ell_{\eta}(Z, \theta)]\) is a risk functional, \(\ell_{\eta} : \mathcal{Z} \times \mathcal{H} \to \mathbb{R}\) is a loss function, and \(\eta_P\) is a nuisance function valued in a normed linear space \((\mathcal{N}, \|\cdot\|_{\mathcal{N}})\). The estimand corresponds to the parameter \(\Psi(P_0)\), where \(\Psi : P \mapsto \psi_P(\theta_P)\) is defined over \(\mathcal{P}\), and \(\psi_P(\theta) := E_P[m(Z, \theta)]\) is a functional on \(\mathcal{H}\). Our framework supports joint minimization over multiple function classes and nonlinear functionals of several M-estimands. In particular, \( \theta_P \) and \( \eta_P \) may be vector-valued, for instance, when \( \mathcal{H} \) and \( \mathcal{N} \) are tensor products of \( L^2 \)-spaces. For notational simplicity, we write \( S_0 \) for any summary \( S_{P_0} \) of the true distribution \( P_0 \), such as \( \eta_0 \equiv \eta_{P_0} \) and $\psi_{0} \equiv \psi_{P_0}$. We denote \(P f := \int f(z)\,dP(z)\) and \(P_n\) as the empirical measure of \(\{Z_i\}_{i=1}^n\).

 We assume the risk \(L_0(\theta, \eta)\) is twice Fréchet differentiable in \(\theta\) and once differentiable in \(\eta\), with derivatives
\[
\begin{aligned}
\partial_{\theta} L_0(\theta_0, \eta_0)(h) 
&:= \left.\frac{d}{dt} L_0(\theta_0 + t h, \eta_0) \right|_{t=0}, \\
\partial_{\theta}^2 L_0(\theta_0, \eta_0)(h_1, h_2) 
&:= \left.\frac{d}{dt} \frac{d}{ds} L_0(\theta_0 + t h_1 + s h_2, \eta_0) \right|_{t=0,\, s=0}.
\end{aligned}
\]
Such smoothness requirements also appear in parametric M-estimation theory to permit inference. The precise differentiability assumptions are detailed in Section~\ref{sec::eif}. For this overview, we assume that \(\ell_{\eta_0}(\cdot,\theta)\) is twice G\^ateaux differentiable in \(\theta\), with \(\dot{\ell}_{\eta_0}(\theta_0) : \mathcal H \to L^2(P_0)\) and \(\ddot{\ell}_{\eta_0}(\theta_0) : \mathcal H \times \mathcal H \to L^1(P_0)\), and that, for all \(h,h_1,h_2 \in \mathcal H\),
\[
\partial_\theta L_0(\theta_0,\eta_0)(h) = P_0 \dot{\ell}_{\eta_0}(\theta_0)(h),
\qquad
\partial_\theta^2 L_0(\theta_0,\eta_0)(h_1,h_2)
=
P_0 \ddot{\ell}_{\eta_0}(\theta_0)(h_1,h_2).
\]

We assume that the loss \(\ell_{\eta_0}\) is Neyman-orthogonal in \(\eta_0\), so that small errors in estimating \(\eta_0\) do not affect the risk \(L_0(\cdot, \eta_0)\) to first order \citep{foster2023orthogonal}. Formally, we require that the cross-derivative of the risk satisfies \(\partial_{\eta} \partial_{\theta} L_0(\theta_0, \eta_0)(\cdot, \cdot) = 0\). In parametric models, this condition ensures that the M-estimator of \(\theta_0\) behaves asymptotically as if \(\eta_0\) were known, yielding root-\(n\) consistency and asymptotic normality. Neyman orthogonality imposes no substantive restriction, as any loss can typically be orthogonalized without altering the underlying risk. Orthogonal losses arise naturally in a wide range of settings, including causal inference and missing data problems \citep{rubin2007doubly, foster2023orthogonal}.



Let \(\eta_n\) and \(\theta_n\) be estimators of the nuisance function \(\eta_0\) and the M-estimand \(\theta_0\). For example, \(\theta_n\) could be obtained by empirical risk minimization based on the orthogonal risk \(\theta \mapsto \tfrac{1}{n}\sum_{i=1}^n \ell_{\eta_n}(Z_i, \theta)\). The plug-in estimator \(\psi_n(\theta_n) := \tfrac{1}{n}\sum_{i=1}^n m(Z_i, \theta_n)\) may then be used to estimate \(\Psi(P_0)\). However, when \(\theta_n\) is obtained using flexible statistical learning tools, this estimator typically lacks both \(n^{1/2}\) convergence and asymptotic normality due to bias from first-order dependence on the M-estimation error \(\theta_n - \theta_0\) \citep{vanderLaanRose2011, chernozhukov2018double}, even when \(\eta_0\) is known. To address this sensitivity, debiasing methods are typically required to remove the first-order bias of the plug-in estimator \(\psi_n(\theta_n)\).

For the special case in which the target parameter is a linear functional \(\psi_0\) of the outcome regression \(\theta_0 : x \mapsto E_0[Y \mid X = x]\), with \(Z = (X,Y)\), \citet{chernozhukov2021automatic} developed an automatic approach to debiasing the plug-in estimator \(\psi_0(\theta_n)\). They showed, by the Riesz representation theorem, that the plug-in error \(\psi_0(\theta_n) - \psi_0(\theta_0)\) can be written as \(\langle \alpha_0, \theta_n - \theta_0 \rangle_{L^2(P_{0,X})} = \int \alpha_0(x)\{\theta_n(x) - \theta_0(x)\}\,dP_{0,X}(x) = \int \alpha_0(x)\{\theta_n(x) - y\}\,dP_0(z)\) for an appropriate representer \(\alpha_0\). This yields the automatic DML estimator
\[
\psi_0(\theta_n) - \frac{1}{n}\sum_{i=1}^n \alpha_n(X_i)\{\theta_n(X_i) - Y_i\},
\]
introduced in \citet{chernozhukov2018double, chernozhukov2021automatic}. The second term corrects the plug-in error through a one-step bias correction \citep{bickel1993efficient}, thereby enabling fast rates and valid inference for \(\Psi(P_0)\), even when \(\alpha_0\) and \(\theta_0\) are estimated at slow rates. This estimator generalizes the well-known augmented inverse probability weighted estimator \citep{robinsCausal, robins1995analysis, bruns2023augmented}. A key contribution of \citet{chernozhukov2021automatic} is the observation that the Riesz representer \(\alpha_0\) can itself be characterized as the solution to a risk minimization problem:
\[
\arg\min_{\alpha \in \mathcal{H}} E_0\{\alpha(X)^2 - 2\,m(Z,\alpha)\} =
\arg\min_{\alpha \in \mathcal{H}} E_0\big[\{\alpha(X)\}^2 - 2\alpha(X)\alpha_0(X)\big]
=
\arg\min_{\alpha \in \mathcal{H}} E_0\big[\{\alpha(X) - \alpha_0(X)\}^2\big].
\]
This follows from the defining property of the Riesz representer, namely \(E_0\{m(Z,\alpha)\} = E_0\{\alpha(X)\alpha_0(X)\}\)  \citep{williams2025riesz}. Based on this characterization, they develop autoDML estimators by estimating \(\theta_0\) using the squared-error loss \(\ell_{\eta} : (z,\theta) \mapsto \{y - \theta(x)\}^2\), and estimating \(\alpha_0\) via \textit{Riesz regression}, that is, by minimizing the Riesz loss \((z,\alpha) \mapsto \alpha(x)^2 - 2\,m(z,\alpha)\).

We generalize the autoDML procedure to smooth functionals of M-estimands. Our key insight is that, for a generic M-estimand, the plug-in error \(\psi_0(\theta_n) - \psi_0(\theta_0)\) is naturally represented using the Hessian inner product \(\partial_{\theta}^2 L_0(\theta_0, \eta_0)(\cdot, \cdot)\) induced by the risk functional. Applying the Riesz representation theorem, we show that the first-order error satisfies \(\dot{\psi}_0(\theta_0)(\theta_n - \theta_0) = \partial_{\theta}^2 L_0(\theta_0, \eta_0)(\theta_n - \theta_0, \alpha_0)\), where \(\dot{\psi}_0(\theta_0): \mathcal{H} \rightarrow \mathbb{R}\) is the Fréchet derivative of \(\psi_0\) and \(\alpha_0 \in \mathcal{H}\) is its Riesz representer under the Hessian inner product. To derive a bias correction, we combine the first-order optimality condition \(\partial_{\theta} L_0(\theta_0, \eta_0)(\alpha_0) = 0\) with the Neyman-orthogonality property \(\partial_{\eta} \partial_{\theta} L_0(\theta_0, \eta_0)(\eta_n - \eta_0, \alpha_0) = 0\). Together, these yield the Hessian-gradient relation \(\partial_{\theta}^2 L_0(\theta_0, \eta_0)(\theta_n - \theta_0, \alpha_0) \approx \partial_{\theta} L_0(\theta_n, \eta_n)(\alpha_0)\). Hence, to leading order, the estimation error satisfies \(\psi_0(\theta_n) - \psi_0(\theta_0) \approx P_0\, \dot{\ell}_{\eta_n}(\theta_n)(\alpha_0)\), which motivates debiasing via the directional derivative of the loss.

\begin{figure}[htb!]
\centering
\resizebox{.8\textwidth}{!}{
\begin{tikzpicture}[
  x=1cm,y=1cm,
  font=\sffamily,
  >=Stealth,
  arrow/.style={semithick,->,line cap=butt,line join=miter},
  block/.style={rectangle,rounded corners=2pt,inner sep=6pt,minimum height=9mm,
                text width=42mm,align=center},
  blockL/.style={block,draw=orange!70!black,fill=orange!20},
  blockM/.style={block,draw=orange!70!black,fill=orange!15},
  blockY/.style={block,draw=yellow!60!black,fill=yellow!20},
  blockOut/.style={block,draw=green!55!black,fill=green!20,text width=50mm},
  spine/.style={semithick,draw=black!70}
]

\node[blockL]   (mest)      at (0,0)           {Estimate M-estimand};
\node[blockY]   (hess)      at (0,-2.1)        {Loss Hessian};

\node[blockM]   (plugin)    at (6,0)           {Estimate functional};
\node[blockM]   (riesz)     at (6,-2.1)        {Estimate Riesz representer};
\node[blockY]   (grad)      at (6,-4.2)        {Loss Gradient};

\node[blockOut] (debiased)  at (12.6,-2.1)     {Debias Estimator\\\footnotesize(One-step / TMLE)};

\coordinate (split) at ($(mest.east)!0.60!(plugin.west)$);

\draw[arrow] (mest.east) -- (split) -- (plugin.west);

\draw[spine] (split) -- (split |- grad.west);

\draw[arrow] (split |- riesz.west) -- (riesz.west);
\draw[arrow] (split |- grad.west)  -- (grad.west);

\draw[arrow] (hess.east) -- (riesz.west);

\draw[arrow] (riesz.south) -- (grad.north);

\draw[arrow] (plugin.east) -- ++(0.9,0) |- (debiased.west);
\draw[arrow] (grad.east)   -- ++(0.9,0) |- (debiased.west);

\end{tikzpicture}
}
\caption{\textbf{Given a specification of the orthogonal loss function and the target functional of the M-estimand}, the diagram illustrates the autoDML workflow.}
\end{figure}

We propose three \emph{generalized autoDML} estimators based on one-step estimation, targeted minimum loss-based estimation, and sieve-based plug-in estimation, each incorporating an automatic bias correction. Given an estimator \(\alpha_n\) of the Hessian Riesz representer \(\alpha_0\), these estimators take, either explicitly or implicitly, the form
\[
\widehat{\psi}_n := \psi_n(\theta_n) - \frac{1}{n}\sum_{i=1}^n \dot{\ell}_{\eta_n}(\theta_n)(\alpha_n)(Z_i),
\]
where the second term corrects the leading bias \(P_0 \dot{\ell}_{\eta_n}(\theta_n)(\alpha_0)\). To estimate \(\alpha_0\) automatically, we propose a \textit{generalized Riesz regression} approach based on the optimization problem
\[
\alpha_0 = \arg\min_{\alpha \in \mathcal{H}} \partial_{\theta}^2L_0(\theta_0, \eta_0)(\alpha, \alpha) - 2\,\dot{\psi}_0(\theta_0)(\alpha),
\]
which often corresponds to a quadratic loss of the form $\ddot{\ell}_{\eta_0}(\theta_0)(\alpha, \alpha)(z) - 2 \dot{m}_{\theta_0}(z,\alpha)$, where \(\dot{\psi}_0(\theta_0)(\alpha) = E_0[\dot{m}_{\theta_0}(Z,\alpha)]\) is the G\^ateaux derivative of \(\psi_0\) at \(\theta_0\) in the direction \(\alpha\). Given the nuisance estimator \(\eta_n\), the autoDML estimator \(\widehat{\psi}_n\) is automatic in the sense that its construction is agnostic to the specific functional and orthogonal loss; the same estimation pipeline applies across a broad class of smooth functionals. The procedure reduces debiased inference to two learning tasks: one for the M-estimand \(\theta_0\) and another for the Hessian representer \(\alpha_0\). The M-estimand can be learned via risk minimization using the loss \(\ell_{\eta_0}\), while the learning task for \(\alpha_0\) requires only the computation of the Hessian \(\ddot{\ell}_{\eta_n}(\theta_n)\) and a linear approximation of the target functional. Notably, the gradient and Hessian of the loss are often computed as part of standard optimization procedures for the M-estimand (e.g., gradient descent, boosting, or M-estimation) and are therefore readily available. Moreover, these derivatives can often be obtained analytically through straightforward calculus (see Lemma~\ref{lemma:smoothlosssuff} in Section~\ref{sec::eif}) or computed algorithmically via automatic differentiation \citep{rall1981automatic}.



Assuming \(n^{-1/4}\)-rate conditions on the nuisance estimators \(\eta_n\), \(\theta_n\), and \(\alpha_n\), we establish that the autoDML estimators admit the asymptotically linear expansion
\[
\widehat{\psi}_n - \Psi(P_0) = P_n \chi_0 + o_p(n^{-1/2}),
\quad \text{where} \quad
\chi_0(\cdot) := -\dot{\ell}_{\eta_0}(\theta_0)(\alpha_0)(\cdot) + m(\cdot, \theta_0) - \Psi(P_0),
\]
with \(\chi_0\) denoting the efficient influence function of \(\Psi\). For linear functionals and quadratic risk functions, we show that this expansion holds under a doubly robust rate condition:
\[
\partial^2_{\theta} L_0(\theta_0, \eta_0)(\theta_n - \theta_0, \alpha_n - \alpha_0) = o_p(n^{-1/2}),
\]
thereby extending the double robustness property for linear functionals of regression functions \citep{bang2005doubly, DoubleML}. Moreover, the autoDML estimator \(\widehat{\psi}_n\) is regular and nonparametrically efficient for \(\Psi(P_0)\), and \(n^{1/2}(\widehat{\psi}_n - \Psi(P_0))\) converges in distribution to a mean-zero normal random variable with variance \(\sigma_0^2 := \operatorname{Var}_0(\chi_0(Z))\). Wald-type confidence intervals and hypothesis tests can be constructed automatically using the influence-function-based variance estimator $\frac{1}{n} \sum_{i=1}^n \left\{ \chi_n(Z_i) \right\}^2,$
where \(\chi_n(\cdot) := -\dot{\ell}_{\eta_n}(\theta_n)(\alpha_n)(\cdot) + m(\cdot, \theta_n) - \psi_n(\theta_n)\).

\subsection{Connections to classical theory}

\subsubsection{Connection to one-step estimation in parametric models}

Our generalized autoDML estimator extends the classical one-step estimator from parametric models to infinite-dimensional settings. Consider the classical M-estimation problem of inferring a finite-dimensional M-estimand \(\theta_0 \in \mathbb{R}^d\), defined by \(\theta_0 := \argmin_{\theta \in \mathbb{R}^d} P_0 \ell(\cdot,\theta)\) for a known loss function \(\ell\). Given a preliminary estimator \(\theta_n\) of \(\theta_0\), the classical one-step estimator \citep{vandervaart2000asymptotic} is
\[
\widehat{\theta}_n
:=
\theta_n
-
\left(\frac{1}{n}\sum_{i=1}^n \nabla^2 \ell(Z_i,\theta_n)\right)^{-1}
\frac{1}{n}\sum_{i=1}^n \nabla \ell(Z_i,\theta_n),
\]
where \(\nabla \ell(\cdot,\theta_n) \in \mathbb{R}^d\) and \(\nabla^2 \ell(\cdot,\theta_n) \in \mathbb{R}^{d \times d}\) denote the gradient and Hessian of \(\theta \mapsto \ell(\cdot,\theta)\) at \(\theta_n\), respectively.

Now fix a linear functional \(\psi(\theta) := b^\top \theta\) for some \(b \in \mathbb{R}^d\). The induced one-step estimator of \(\psi(\theta_0)=b^\top\theta_0\) is
\[
\widehat{\psi}_n
=
b^\top \theta_n
-
\frac{1}{n}\sum_{i=1}^n
\left\langle
\nabla \ell(Z_i,\theta_n),
\left(\frac{1}{n}\sum_{i=1}^n \nabla^2 \ell(Z_i,\theta_n)\right)^{-1} b
\right\rangle_{\mathbb{R}^d}.
\]
To connect this with our framework, note that the first and second derivatives of the loss may be written as \(\dot{\ell}_{\theta_n}(v)(\cdot)=\langle \nabla \ell(\cdot,\theta_n), v \rangle_{\mathbb{R}^d}\) and \(\ddot{\ell}_{\theta_n}(v_1,v_2)(\cdot)=v_1^\top \nabla^2 \ell(\cdot,\theta_n) v_2\). Moreover, the vector on the right-hand side of the inner product satisfies
\[
\alpha_n
:=
\left(\frac{1}{n}\sum_{i=1}^n \nabla^2 \ell(Z_i,\theta_n)\right)^{-1} b,
\qquad
b^\top v
=
\alpha_n^\top
\left(\frac{1}{n}\sum_{i=1}^n \nabla^2 \ell(Z_i,\theta_n)\right)v
\quad \text{for all } v \in \mathbb{R}^d.
\]
Thus, \(\alpha_n\) is the empirical Hessian Riesz representer of the linear functional \(v \mapsto b^\top v\). The one-step bias correction can therefore be written as \(-P_n \dot{\ell}(\theta_n)(\alpha_n)\), so \(\widehat{\psi}_n\) is a special case of the autoDML estimator with \(\mathcal H = \mathbb{R}^d\).

\subsubsection{Connection to the canonical gradient in nonparametric models}

We illustrate how the Hessian Riesz representer recovers the classical canonical gradient for a pathwise differentiable functional of a density. Let \(\theta_0 := dP_0/d\mu\) denote the density of \(P_0\) with respect to a dominating measure \(\mu\) on \(\mathcal{Z}\), and let \(\mathcal{H} = L^2(\mu)\). Consider the least-squares risk and corresponding loss
\[
L_0(\theta) = \frac{1}{2}\int \theta(z)^2 \, d\mu(z) - E_0[\theta(Z)], \quad \ell(z,\theta) = \frac{1}{2}\int \theta(u)^2 \, d\mu(u) - \theta(z).
\]
We use the least-squares risk because it permits optimization over the unconstrained linear space \(L^2(\mu)\), which aligns with the linear-space setup of our framework.

In this example, the Hessian induces the usual \(L^2(\mu)\) inner product, so the Hessian Riesz representer \(\alpha_0\) of \(\dot{\psi}(\theta_0)\) is simply its \(L^2(\mu)\) Riesz representer:
\[
\dot{\psi}(\theta_0)(h) = \int h(z)\alpha_0(z)\,d\mu(z)
\qquad \text{for all } h \in \mathcal{H}.
\]
For any direction \(h \in L^\infty(\mu)\) satisfying \(\int h(z)\,d\mu(z)=0\), the submodel \((\theta_0+th: t \in \mathbb{R})\) is locally a valid density submodel through $\theta_0$ with score function \(s_h=h/\theta_0\) at $t = 0$. Along such submodels,
\[
\dot{\psi}(\theta_0)(h)
=
E_0\bigl[s_h(Z)\{\alpha_0-P_0\alpha_0\}(Z)\bigr]
\qquad \text{for all } h \in \mathcal{H} \text{ such that } \int h(z)\,d\mu(z)=0.
\]
Thus, \(\alpha_0 - P_0\alpha_0\) is the \(L^2(P_0)\) Riesz representer, or canonical gradient, of the pathwise derivative map \(s_h \mapsto \dot{\psi}(\theta_0)(h)\) on the tangent space \(L_0^2(P_0)\) of mean-zero, finite-variance score functions \citep{bickel1993efficient}. Under mild conditions, the classical tangent-space formulation \citep{bickel1993efficient} implies that \(\Psi\) is pathwise differentiable, with efficient influence function
\[
\chi_0(z)
=
\alpha_0(z) - E_0[\alpha_0(Z)] + m(z,\theta_0) - \Psi(P_0).
\]
Our M-estimand framework recovers the same efficient influence function, since
\[
-\dot{\ell}(\theta_0)(\alpha_0)(z)
=
\alpha_0(z) - \int \theta_0(u)\alpha_0(u)\,d\mu(u)
=
\alpha_0(z) - E_0[\alpha_0(Z)].
\]

Although our autoDML framework assumes that \(\mathcal{H}\) is linear, the same geometric idea extends to convex parameter spaces by defining the representer relative to the linear closure of the tangent cone \(\{h-\theta_0 : h \in \mathcal{H}\}\), provided the M-estimand satisfies the first-order optimality condition \(\partial_{\theta}L_0(\theta_0,\eta_0)(h)=0\) for all \(h \in \mathcal{H}\). For example, one could instead work with the log-likelihood loss over a convex class of densities. In that setting, \(\alpha_0\) coincides with the canonical gradient.

\subsection{Examples}
\label{sec::examples}

This section presents several examples of parameters in semiparametric statistics and causal inference that fall within our framework.

Our first example illustrates that continuous linear functionals of regressions, weight functions such as density ratios, and Riesz representers all fall within our framework \citep{chernozhukov2018double}. In particular, the autoDML framework of \citet{chernozhukov2022automatic} for regression functionals arises as a special case of our approach, including estimands such as counterfactual means and the average treatment effect. We consider the data structure \(Z = (X, A, Y)\), where \(X \in \mathbb{R}^d\) denotes covariates, \(A \in \{0,1\}\) is a binary treatment indicator, and \(Y \in \mathbb{R}\) is the outcome. The outcome regression is defined by \(\mu_P(a,x) := E_P[Y \mid A=a, X=x]\). This example serves as a running illustration throughout the paper, and all examples are examined in greater detail in Appendix~\ref{sec:examplesback}.

\renewcommand{\theexample}{1a}

\begin{example}[Functionals of regressions and representers]
Suppose that \(E_0[Y^2] < \infty\). Then \(\mu_0\) is an M-estimand corresponding to the squared-error loss \(\ell_{\eta}(z, \theta) := \frac{1}{2}\{y - \theta(x)\}^2\), with \(\mathcal{H} := L^2(P_{0,X,A})\), where \(P_{0,X,A}\) denotes the marginal distribution of \((X, A)\) under \(P_0\). The loss derivative \(\dot{\ell}_{\eta}(\theta)(\alpha)\) equals \(z \mapsto \alpha(x)\{\theta(x) - y\}\), and the Hessian inner product \(\partial_{\theta}^2 L_0(\theta_0, \eta_0)\) equals the \(L^2(P_{0,X,A})\) inner product. The Hessian Riesz representer satisfies \(\alpha_0 := \arg\min_{\alpha \in \mathcal{H}} E_0\left[\alpha(X, A)^2 - 2\,\dot{m}_{\theta_0}(Z, \alpha)\right]\). The ATE, defined as \(E_0[\mu_0(1,X) - \mu_0(0,X)]\), corresponds to the functional \(m : (z, \theta) \mapsto \theta(1,x) - \theta(0,x)\). More generally, suppose that \(\theta_0\) is the Riesz representer in \(L^2(P_0)\) of the continuous linear functional \(\theta \mapsto E_0[r(Z, \theta)]\), where \(r : \mathcal{Z} \times \mathcal{H} \to \mathbb{R}\) is linear in its second argument and satisfies \(|E_0[r(Z, \theta)]| \leq M \|\theta\|_{L^2(P_0)}\) for all \(\theta \in \mathcal{H}\), for some constant \(M < \infty\). Then \(\theta_0\) is an M-estimand corresponding to the quadratic loss \(\ell(\theta,z) := \frac{1}{2}\theta(x)^2 - r(z,\theta)\) \citep{chernozhukov2018auto}; more generally, one may use Riesz losses based on Bregman divergences \citep{hines2025learning}. Examples of Riesz representers include inverse probability weights and density ratios \citep{williams2025riesz, hines2025learning}, while the outcome regression is recovered by taking \(r(z,\theta) = y\theta(x)\). \qedsymbol
\end{example}

Our framework extends the autoDML approach of \citet{chernozhukov2021automatic} for generalized linear regression models by allowing loss functions that depend on unknown nuisance parameters. It accommodates functionals of treatment effect modifiers, such as the conditional average treatment effect (CATE), conditional relative risk, and conditional log-odds difference, which arise as M-estimands under suitable orthogonal losses \citep{nekipelov2022regularised, van2024combining}. Although these treatment effect modifiers can be expressed in terms of the outcome regression, a key advantage of targeting them directly through the loss is that doing so can simplify estimation under semiparametric models by allowing the model class \(\mathcal{H}\) to be tailored to the relevant nuisance components. For example, partially linear models for the outcome regression correspond to optimizing over a linear class for the CATE.

\renewcommand{\theexample}{1b}

\begin{example}[Functionals in semiparametric regression models]
\label{example::splogistic}
Let \(g: \mathbb{R} \to \mathbb{R}\) be a known link function, and suppose the outcome regression \(\mu_0\) satisfies the semiparametric model \(g(\mu_0(A, X)) = h_0(X) + A \tau_0(X)\), where \(h_0 := g(\mu_0(0, \cdot))\) is unrestricted and \(\tau_0 \in \mathcal{T} \subset L^2(P_{0,X})\). In the binary case with a logistic link, \cite{chernozhukov2021automatic} use the binomial log-likelihood loss \(\ell_{\text{loglik}}(z, \theta) := \log(1 + \exp(\theta(a, x))) - y \theta(a, x)\) to estimate functionals of \(\mu_0\). However, often we care only about a functional \(\psi_0(\tau_0)\) of the treatment effect modifier \(\tau_0\). Rather than modeling the entire outcome regression, we can directly target \(\tau_0\), which is an M-estimand for the orthogonal loss \citep{nekipelov2022regularised}:
\[
\ell_{\eta_0}(z, \theta) := \frac{1}{\nu_0(a, x)} \left\{ \log \left(1 + \exp(- \{a - \pi_0(x)\} \theta(x) - h_0(x)) \right) - y \{a - \pi_0(x)\} \theta(x) \right\},
\]
where \(\eta_0 = (\pi_0, \mu_0)\), \(\pi_0(x) := P_0(A = 1 \mid X = x)\), \(h_0(x) := \pi_0(x) \logit \mu_0(1, x) + (1 - \pi_0(x)) \logit \mu_0(0, x)\), and \(\nu_0(a, x) := \mu_0(a, x)\{1 - \mu_0(a, x)\}\). The corresponding loss for the Hessian Riesz representer \(\alpha_0\) is given by \((z, \alpha) \mapsto \{a - \pi_0(x)\}^2 \alpha(a, x)^2 - 2\,\dot{m}_{\theta_0}(z, \alpha)\), which can be implemented using Riesz regression with overlap weights \citep{li2019overlapWeights}. \qedsymbol
\end{example}

Our framework also supports smooth functionals of pseudo-outcome regressions, which arise from Neyman-orthogonal quadratic losses. This class of functionals lies outside the scope of prior work such as \cite{chernozhukov2021automatic, chernozhukov2022automatic, chernozhukov2024automaticRiesz}.


\renewcommand{\theexample}{1c}

\begin{example}[Functionals of pseudo-outcome regressions]
\label{example::orthoLS}
Many Neyman-orthogonal loss functions take the form $(z, \theta) \mapsto \ell_{\eta_0}(z, \theta) = \frac{1}{2} w_{\eta_0}(z)\left\{\zeta_{\eta_0}(z) - \theta(x)\right\}^2,$ where $w_{\eta_0}(Z)$ is a pseudo-weight and $\zeta_{\eta_0}(Z)$ is a pseudo-outcome \citep{rubin2007doubly, morzywolek2023general, yang2023forster}. The loss derivative is $\dot{\ell}_{\eta}(\theta)(\alpha) = z \mapsto w_{\eta}(z)\alpha(x)\{\theta(x) - \zeta_{\eta}(z)\}$, and the Hessian Riesz representer $\alpha_0$ solves $\argmin_{\alpha \in \mathcal{H}} E_0[w_{\eta_0}(Z)\alpha(X,A)^2 - 2\,\dot{m}_{\theta_0}(Z, \alpha)]$. This includes the DR-learner \citep{van2015OptRule, kennedy2023towards, van2023causal} and R-learner \citep{nie2021quasi} for the CATE $\theta_0: x \mapsto \mu_0(1, x) - \mu_0(0, x)$. The DR-learner uses $w_{\eta_0}(z) = 1$ and $\zeta_{\eta_0}(z) = \mu_0(1, x) - \mu_0(0, x) + \frac{a - \pi_0(x)}{\pi_0(x)\{1 - \pi_0(x)\}}\{y - \mu_0(a, x)\},$
where $\eta_0 = (\pi_0, \mu_0)$, with $\mu_0: (a, x) \mapsto E_0[Y \mid A = a, X = x]$ and $\pi_0: x \mapsto P_0(A = 1 \mid X = x)$. The R-learner uses $w_{\eta_0}(z) = (a - \pi_0(x))^2$ and $\zeta_{\eta_0}(z) = \frac{y - m_0(x)}{a - \pi_0(x)}$, where $\eta_0 = (\pi_0, m_0)$ and $m_0: x \mapsto E_0[Y \mid X = x]$. Pseudo-outcome regressions also arise in CATE estimation under unmeasured confounding with instrumental variables \citep{syrgkanis2019machine} and in proximal causal inference \citep{yang2023forster, sverdrup2023proximal}. \qedsymbol
\end{example}

Finally, a key generalization of our approach is that it extends autoDML to functionals of multiple M-estimands that are jointly defined through a coupled risk minimization problem. As a special case, the following example shows that our framework also covers functionals of multiple M-estimands that solve distinct risk minimization problems, even when the target parameter depends on all of them jointly, including the setting considered in Section~3 of \citet{chernozhukov2024automaticRiesz}.

\renewcommand{\theexample}{2}
\begin{example}[Separable product spaces and vector-valued M-estimands]
Suppose that \(\mathcal{H} = \mathcal{H}_1 \times \cdots \times \mathcal{H}_K\), where each \(\mathcal{H}_k\) is a Hilbert space, and write \(\theta = (\theta_1,\dots,\theta_K) \in \mathcal{H}\) and \(\theta_0 = (\theta_{0,1},\dots,\theta_{0,K}) \in \mathcal{H}\). Consider a loss of the form
\[
\ell_{\eta_0}(z,\theta) := \sum_{k=1}^K \ell_{\eta_0}^{(k)}(z,\theta_k),
\]
with componentwise directional derivative \(\dot{\ell}_{\eta_0}^{(k)}(\theta_{0,k})(h_k)(z) := \partial_{\theta_k}\ell_{\eta_0}^{(k)}(z,\theta_{0,k})(h_k)\). Then the population risk is additively separable across components, with \(L_0(\theta,\eta_0) = \sum_{k=1}^K L_0^{(k)}(\theta_k,\eta_0)\), where \(L_0^{(k)}(\theta_k,\eta_0) := P_0\ell_{\eta_0}^{(k)}(\cdot,\theta_k)\), so each \(\theta_{0,k}\) minimizes its own component risk. Moreover, for \(h=(h_1,\dots,h_K)\in\mathcal H\),
\[
\dot{\ell}_{\eta_0}(\theta_0)(h)
=
\sum_{k=1}^K \dot{\ell}_{\eta_0}^{(k)}(\theta_{0,k})(h_k),
\qquad
\partial_{\theta}^2 L_0(\theta,\eta_0)(h,h')
=
\sum_{k=1}^K \partial_{\theta_k}^2 L_0^{(k)}(\theta_k,\eta_0)(h_k,h_k').
\]
The Hessian Riesz representer in the product space \(\mathcal{H}\) is therefore given componentwise by
\[
\alpha_{0,\mathcal H}
=
\bigl(\alpha_{0,\mathcal H_1}^{(1)},\dots,\alpha_{0,\mathcal H_K}^{(K)}\bigr),
\]
where \(\alpha_{0,\mathcal H_k}^{(k)}\) is the Riesz representer of the linear functional \(h_k \mapsto \partial_{\theta_k}\psi_0(\theta_0)(h_k) := \partial_\theta \psi_0(\theta_0)(0,\dots,0,h_k,0,\dots,0)\) with respect to the \(k\)-th component Hessian inner product \((h_k,h_k') \mapsto \partial_{\theta_k}^2 L_0^{(k)}(\theta_{0,k},\eta_0)(h_k,h_k')\). Furthermore,
\[
\dot{\ell}_{\eta_0}(\theta_0)(\alpha_0)
=
\sum_{k=1}^K \dot{\ell}_{\eta_0}^{(k)}(\theta_{0,k})(\alpha_{0,\mathcal H_k}^{(k)}).
\]
 \qed
\end{example}

More generally, our framework accommodates functionals of vector-valued M-estimands defined by a single coupled objective function, in which case the separable structure of the previous example need not hold. In Section~\ref{sec::exp::survival}, we show that the long-term mean survival probability under the semiparametric beta-geometric model also falls within this framework \citep{hubbard2021beta}. In that example, the target parameter is a nonlinear functional of two infinite-dimensional shape parameters that are estimated jointly, and thus lies outside the parameter classes considered in prior autoDML work.

\subsection{Related work}

\label{sec:relatedwork}

Our work builds on the growing literature on automatic debiased machine learning. \citet{chernozhukov2022automatic} introduce autoDML estimators for functionals of the outcome regression, and \citet{chernozhukov2021automatic} extend this approach to generalized linear models. The most closely related work is Section~3 of \citet{chernozhukov2024automaticRiesz}, which outlines autoDML estimators for functionals of real-valued M-estimands that separately minimize known loss functions, none of which depend on nuisance parameters. These estimators rely on the influence function representations developed by \citet{ichimura2022influence} for such functionals.

Our framework generalizes these approaches in several key directions. First, by leveraging Neyman orthogonality, we allow for loss functions that may depend on nuisance parameters requiring data-driven estimation. Second, we allow functionals of vector-valued M-estimands that jointly minimize a single objective function. To enable this, we accommodate M-estimands defined over general linear spaces, such as tensor product spaces, rather than restricting to \(L^2\). Third, we impose more refined and transparent conditions than prior work. For instance, \citet{chernozhukov2024automaticRiesz} assume the existence of certain Riesz representers and require a quadratic remainder condition on derivative approximations (see Assumption~8). In contrast, under our conditions, the existence of the representer is guaranteed, and control of the remainder terms follows directly from functional H\"older smoothness. 

As another key contribution, we formalize the class of parameters estimated by autoDML as \emph{nonparametric projection parameters}. For this class, we characterize the EIF and establish that autoDML estimators are regular and efficient. In contrast, neither \citet{chernozhukov2022automatic} nor \citet{chernozhukov2024automaticRiesz} investigate the efficiency properties of their estimators.

Beyond one-step corrections—the primary focus of prior work—we further introduce two new classes of estimators: automatic targeted minimum loss-based (autoTML) and automatic sieve-based plug-in (autoSieve) estimators. While the one-step estimator applies an explicit bias correction to the plug-in estimator, autoTML directly debiases the M-estimator itself, ensuring that the resulting plug-in estimator is efficient. Although \citet{chernozhukov2022automatic} proposed a special case of the autoTML estimator for linear functionals under squared error loss, they did not consider extensions to general loss functions or target functionals. Finally, for sieve estimation, we propose a novel automatic undersmoothing procedure that eliminates the need for manual tuning or rate selection.

Our framework builds on and substantially generalizes the asymptotic theory of sieve-based M-estimation without nuisance components \citep{shen1997methods, spnpsieve}. The Hessian inner product and Riesz representer \(\alpha_0\) arise implicitly in the analysis of the asymptotic normality of sieve-based plug-in estimators (e.g., Section~4.1.1 of \citealt{chen2014sieve}; \citealt{sieveOneStepPlugin}). In these works, the Riesz representer is primarily used as a technical device in asymptotic arguments. In contrast, we find that this representer plays a central role in automatic debiasing for semiparametric statistics—through its appearance in the efficient influence function of \(\Psi\) and its characterization as the minimizer of a risk functional. As a special case, our results show that sieve plug-in estimators can be viewed as one-step estimators with a specific influence function correction. Consequently, using the machinery of our general framework, we provide simple proofs of classical results in sieve theory for known loss functions and extend the theory to orthogonal losses involving data-driven nuisance estimation \citep{shen1997methods, sieveOneStepPlugin, spnpsieve, qiu2021universal}.

\section{Functional bias expansion and statistical efficiency}
 \label{sec::eif}

\subsection{Differentiability conditions for loss functions and target parameters}

In this section, we formalize the smoothness conditions for the risk functional \((\theta, \eta) \mapsto L_0(\theta, \eta)\) and the target functional \(\psi_0\) in terms of Fr\'echet differentiability. In the next subsection, we present a functional von Mises expansion for \(\psi_0\) and derive the efficient influence function of \(\Psi\).

In the following conditions, let \(\rho_{\mathcal{H}}(\cdot)\) be an auxiliary norm on \(\mathcal{H}\), taking values in \([0, \infty]\). For example, \(\rho_{\mathcal{H}}(\cdot)\) may be an essential supremum norm, while \(\|\cdot\|_{\mathcal{H}}\) may be an \(L^2\) norm. All norms may depend on \(P_0\); for instance, \(\|\cdot\|_{\mathcal{H}}\) may be the \(L^2(P_0)\) norm.

\begin{enumerate}[label=\bf{A\arabic*)}, ref={A\arabic*}, series=cond1]
\item \textit{(Existence and uniqueness).}
For all \(P \in \mathcal{P}\), \(\theta_P = \arg\min_{\theta \in \mathcal{H}} L_P(\theta, \eta_P)\) exists and is unique.
\label{cond::uniqueness}
\end{enumerate}
Hereafter, define the convex subsets \(\mathcal{H}_{\mathcal{P}} := \operatorname{conv}(\{\theta_P : P \in \mathcal{P}\})\) and \(\mathcal{N}_{\mathcal{P}} := \operatorname{conv}(\{\eta_P : P \in \mathcal{P}\})\), where \(\operatorname{conv}(\cdot)\) denotes the convex hull. In what follows, some of our conditions will only be required to hold for over \(\theta \in \mathcal{H}_{\mathcal{P}}\) and \(\eta \in \mathcal{N}_{\mathcal{P}}\). 

Our next conditions concern the functional differentiability of the target and risk. A map \(T : \mathcal{F} \to \mathcal{G}\) between normed spaces \((\mathcal{F}, \|\cdot\|_{\mathcal{F}})\) and \((\mathcal{G}, \|\cdot\|_{\mathcal{G}})\) is \emph{Fr\'echet differentiable} at \(f_0 \in \mathcal{F}\) if there exists a continuous linear operator \(\partial_f T(f_0): \mathcal{F} \to \mathcal{G}\) such that $\| T(f_0 + h) - T(f_0) - \partial_f T(f_0)(h) \|_{\mathcal{G}} = o(\|h\|_{\mathcal{F}}) \quad \text{as } \|h\|_{\mathcal{F}} \to 0$ \citep{rudin1991functional, vanderVaartWellner}. When \(\mathcal{G} = \mathbb{R}\), the derivative \(\partial_f T(f_0)\) lies in the \emph{dual space} \(\mathcal{F}^*\), the normed linear space of continuous linear functionals \(B: \mathcal{F} \to \mathbb{R}\), equipped with the operator norm \(\|B\|_{\mathcal{F}} := \sup_{f \in \mathcal{F}} |B(f)|\). Higher-order derivatives are defined recursively: \(T\) is \emph{twice Fr\'echet differentiable} at \(f_0\) if \(f \mapsto \partial_f T(f)\) is Fr\'echet differentiable into \(\mathcal{F}^*\), with \(\partial_f^2 T(f_0): \mathcal{F} \times \mathcal{F} \to \mathbb{R}\) given by this derivative.

\begin{enumerate}[label=\bf{A\arabic*)}, ref={A\arabic*}, resume=cond1]
    \item \textit{(H\"older Smoothness of target functional):} \label{cond::smoothfunctional}  The functional $\psi_0: (\mathcal{H}, \| \cdot\|_{\mathcal{H}}) \rightarrow \mathbb{R}$ is Fr\'echet differentiable over $\mathcal{H}_{\mathcal{P}} \subset \mathcal{H}$ with Lipschitz continuous derivative $\theta\mapsto \dot{\psi}_0(\theta)$. 

\item \textit{(H\"older smoothness of risk functional):}  \label{cond::targetsmoothloss}
\begin{enumerate}[label={\roman*)}, ref={\ref{cond::targetsmoothloss}\roman*}]
\item \label{cond::targetsmoothloss::one} \textit{(First-order Fr\'echet differentiability)} For all $\theta \in \mathcal{H}_{\mathcal{P}}$ and $\eta \in \mathcal{N}_{\mathcal{P}}$, there exists a continuous linear operator $\dot{\ell}_\eta(\theta): (\mathcal{H}, \|\cdot\|_{\mathcal{H}})\to L^2(P_0)$ such that, for all \(P \in \mathcal{P}\), the map \(L_P(\,\cdot\,, \eta) : (\mathcal{H}, \|\cdot\|_{\mathcal{H}})\rightarrow \mathbb{R}\) is Fr\'echet differentiable at $\theta$ with derivative satisfying $\partial_\theta L_P(\theta, \eta)(h) = P\, \dot{\ell}_\eta(\theta)(h)$ for all $h \in \mathcal{H}$. 

\item \label{cond::targetsmoothloss::two}
\textit{(Second-order Fr\'echet differentiability)} For each $\eta \in \mathcal{N}_{\mathcal{P}}$, the map \( \partial_\theta L_0(\cdot, \eta): (\mathcal{H}, \|\cdot\|_{\mathcal{H}}) \rightarrow (\mathcal{H}, \rho_{\mathcal{H}})^*\) is Fr\'echet differentiable over $ \mathcal{H}_{\mathcal{P}}$, where $\partial_\theta^2 L_0(\theta , \eta )(\cdot, \cdot)$ is symmetric.

\item \textit{(Lipschitz continuity of second derivative)} \label{cond::targetsmoothloss::three}  There exists a constant \(C < \infty\) such that, for all \(\theta, \theta' \in \mathcal{H}_{\mathcal{P}}\), \(\eta, \eta' \in \mathcal{N}_{\mathcal{P}}\), and \(h_1, h_2 \in \mathcal{H}\) with \(\|h_1\|_{\mathcal{H}} + \rho_{\mathcal{H}}(h_2) \leq 1\),
\[
\left| \partial^2_\theta L_0(\theta', \eta')(h_1, h_2) - \partial^2_\theta L_0(\theta, \eta)(h_1, h_2) \right| \leq C \left( \|\theta' - \theta\|_{\mathcal{H}} + \|\eta' - \eta\|_{\mathcal{N}} \right).
\]
\end{enumerate}
 
  \item \label{cond::nuisancesmooth} \textit{(H\"older Smoothness in nuisance parameter):} 
    \begin{enumerate}[label={\roman*)}, ref={\ref{cond::nuisancesmooth}\roman*}]
    \item \label{cond::crossderiv}  The map \( \partial_\theta L_0(\theta_0, \cdot): (\mathcal{N}, \|\cdot\|_{\mathcal{N}}) \rightarrow (\mathcal{H}, \rho_{\mathcal{H}})^*\) is Fr\'echet differentiable at each $\eta \in \mathcal{N}_{\mathcal{P}}$ with derivative \(\partial_\eta \partial_\theta L_0(\theta_0, \eta)\).

    \item \label{cond::crossderivlipschitz} \textit{(Lipschitz continuity of cross derivative).}      There exists a constant \(C < \infty\) such that, for all \(\eta \in \mathcal{N}_{\mathcal{P}}\), \(h \in \mathcal{H}\), and $g \in \mathcal{N}$ with \(\|g\|_{\mathcal{N}} + \rho_{\mathcal{H}}(h) \leq 1\),
$$\left| \partial_\eta \partial_\theta L_0(\theta_0, \eta)(g, h) - \partial_\eta \partial_\theta L_0(\theta_0, \eta_0)(g, h) \right| \leq C     \|\eta - \eta_0\|_{\mathcal{N}} .$$

   \end{enumerate} 
 
\end{enumerate}

When \((\mathcal{H}, \|\cdot\|_{\mathcal{H}})\) is a Hilbert space, Condition~\ref{cond::uniqueness} holds if the map \(\theta \mapsto L_P(\theta, \eta_P)\) is strictly convex, coercive, and lower semi-continuous \citep{ekeland1999convex}. A sufficient condition is strong convexity: there exists \(C > 0\) such that \(\partial_\theta^2 L_P(\theta, \eta_P)(h, h) \geq C \|h\|^2\) for all \(\theta, h \in \mathcal{H}\) \citep{alexanderian2019optimization}, with examples given in \cite{foster2023orthogonal, van2024combining}.  Condition~\ref{cond::smoothfunctional} ensures that the target \(\Psi(P_0) = \psi_0(\theta_0)\) is a smooth functional of the M-estimand \(\theta_0\), a key requirement for \(\sqrt{n}\)-consistent, regular estimation \citep{van1991differentiable}. Lipschitz continuity of the derivative guarantees a first-order Taylor expansion with quadratic remainder: $\psi_0(\theta) = \psi_0(\theta_0) + \dot{\psi}_0(\theta_0)(\theta - \theta_0) + O(\|\theta - \theta_0\|_{\mathcal{H}}^2),$ uniformly over \(\theta \in \mathcal{H}\). In many cases, \(\dot{\psi}_0(\theta)(h) = E_0[\dot{m}_{\theta}(Z, h)]\) for some Gâteaux derivative \(\dot{m}_{\theta}: \mathcal{Z} \times \mathcal{H} \to \mathbb{R}\); for a continuous linear functional \(\psi_0\), we have \(\dot{\psi}_0(\theta) = \psi_0\) and \(\dot{m}_{\theta} = m\).  Condition~\ref{cond::targetsmoothloss::one} ensures that \(\theta_0\) satisfies the first-order condition \(\partial_\theta L_0(\theta_0, \eta_0)(h) = 0\) for all \(h \in \mathcal{H}\). Conditions~\ref{cond::targetsmoothloss::two},~\ref{cond::targetsmoothloss::three}, and~\ref{cond::nuisancesmooth} ensure that \(\partial_\theta L_0(\theta, \eta)\) is jointly smooth in \(\theta\) and \(\eta\), and admits a Taylor expansion: $\partial_\theta L_0(\theta, \eta)(h) = \partial_\theta L_0(\theta_0, \eta_0)(h) + \partial^2_{\theta} L_0(\theta_0 , \eta_0)(\theta - \theta_0, h) + \partial_{\eta}\partial_{\theta} L_0(\theta_0 , \eta_0)(\eta - \eta_0, h) + O(\rho_{\mathcal{H}}(h)(\|\theta - \theta_0\|_{\mathcal{H}}^2 + \|\eta - \eta_0\|_{\mathcal{N}}^2))$ uniformly over $h \in \mathcal{H}$. In many cases, \(\partial^2_\theta L_0(\theta, \eta)(h_1, h_2) = P_0\, \ddot{\ell}_\eta(\theta)(h_1, h_2)\), where \(\ddot{\ell}_\eta(\theta): \mathcal{H} \times \mathcal{H} \to L^1(P_0)\) denotes the second Gâteaux derivative of the loss. However, the loss need not be twice differentiable; these conditions can also hold for nonsmooth losses, such as absolute error loss and quantile loss (see Appendix~\ref{sec:examplesback}).


To facilitate data-driven nuisance estimation, we require that the population risk admits a Neyman-orthogonal loss \citep{foster2023orthogonal}. Such losses enable fast rates for empirical risk minimization, even when nuisances are estimated at slower rates. Formally, we require:
\begin{enumerate}[label=\bf{A\arabic*)}, ref={A\arabic*}, resume=cond1]

     \item \textit{(Neyman orthogonality in nuisance parameter):} \label{cond::orthogonal}   $\partial_\eta \partial_\theta L_0(\theta_0, \eta_0)(g, h) = 0$ for each $(g ,h) \in \mathcal{N} \times \mathcal{H} $.  
\end{enumerate}

Condition~\ref{cond::orthogonal} ensures that the estimating equation $(\partial_{\theta}L_0(\theta_0, \eta_0)(h) = 0: h \in \mathcal{H})$ for $\theta_0$ is Neyman-orthogonal with respect to the nuisance function \(\eta_0\) \citep{robins1995analysis, vanderlaanunified, DoubleML}, making the solution \(\theta_0\) locally insensitive to perturbations in \(\eta_0\). While \ref{cond::orthogonal} requires orthogonality at the particular point \(\theta_0\), many loss functions are universally orthogonal, satisfying \(\partial_\eta \partial_\theta L_0(\theta, \eta_0) = 0\) for all \(\theta \in \mathcal{H}\) \citep{whitehouse2024orthogonal}. Broad classes of Neyman-orthogonal losses in causal inference and missing data settings have been developed in \citet{rubin2007doubly, foster2023orthogonal, yang2023forster, van2024combining}. A Neyman-orthogonal loss \(\ell_{\eta_0}\) often exists if the map \(P \mapsto L_P(\theta, \eta_P)\) is pathwise differentiable for each \(\theta \in \mathcal{H}\), in which case the loss can be derived from the efficient influence function (see, e.g., Theorem~1 of \cite{van2024combining}).  Profiled loss functions—formed by substituting a nuisance parameter that depends on the M-estimand—are generally Neyman-orthogonal to the nuisance components involved in profiling \citep{murphy2000profile}.

The following lemma verifies that the smoothness conditions are satisfied for a broad class of loss functions commonly studied in the literature. In this lemma, we define the linear spaces of vector-valued functions $\mathcal{H} = (L^\infty(\lambda))^{d_1}$ and $\mathcal{N} = (L^\infty(\lambda))^{d_2}$, where $\lambda$ is a measure on $\mathcal{Z} \subset \mathbb{R}^d$ that dominates each $P \in \mathcal{P}$. Define the $L^2$-norms $\|\theta\|_{\mathcal{H}} := \left( \int \|\theta(z)\|_{\mathbb{R}^{d_1}}^2 \, dP_0(z) \right)^{1/2}$ and $\|\eta\|_{\mathcal{N}} := \left( \int \|\eta(z)\|_{\mathbb{R}^{d_2}}^2 \, dP_0(z) \right)^{1/2}$, and the essential supremum norm $\rho_{\mathcal{H}}(\theta) := \operatorname*{ess\,sup}_{z \in \mathcal{Z}} \|\theta(z)\|_{\infty}$.

\begin{lemma}
\label{lemma:smoothlosssuff}
Suppose $\ell_{\eta}(\theta,z) = l\bigl(\theta(z), \eta(z), z\bigr)$, where $l : \mathbb{R}^{d_1} \times \mathbb{R}^{d_2} \times \mathbb{R}^d \to \mathbb{R}$, and let $\mathcal{C}$ denote the closure of the set $\{(a, b, z) : z \in \mathcal{Z},\, \theta \in \mathcal{H}_{\mathcal{P}},\, \eta \in \mathcal{N}_{\mathcal{P}},\, a = \theta(z),\, b = \eta(z)\}$. Assume that $\mathcal{C}$ is compact, and that $l(a,b,z)$ belongs to the class $C^{2,1}$ in $(a,b)$ on $\mathcal{C}$; that is,
\begin{itemize}
\item[(i)] $l(a,b,z)$ is twice continuously differentiable in $a$ for fixed $(b,z)$, and the Hessian $\partial_a^2 l(a,b,z) \in \mathbb{R}^{d_1 \times d_1}$ is Lipschitz continuous in $(a,b)$ with a Lipschitz constant that holds uniformly over $z$;
\item[(ii)] $l(a,b,z)$ is once continuously differentiable in $b$ for fixed $(a,z)$, and the mixed derivative $\partial_a\partial_b l(a,b,z) \in \mathbb{R}^{d_1 \times d_2}$ is Lipschitz continuous in $(a,b)$ with a Lipschitz constant that holds uniformly over $z$.
\end{itemize}
Moreover, assume that $l$, $\partial_a l$, $\partial_a^2 l$, and $\partial_a \partial_b l$ are jointly continuous in $(a, b, z)$. Then \ref{cond::targetsmoothloss} holds with $\dot{\ell}_{\eta}(\theta,z)(h) = (\partial_a l(a,\eta(z),z)\big|_{a = \theta(z)})^\top h(z)$ and  $\partial^2_\theta L_0(\theta, \eta)(h_1, h_2) = P_0\, \ddot{\ell}_\eta(\theta)(h_1, h_2),$
where $\ddot{\ell}_{\eta}(\theta,z)(h_1, h_2) = (h_1(z))^\top \partial_a^2 l(a,\eta(z),z)\big|_{a = \theta(z)} h_2(z)$, and \ref{cond::nuisancesmooth} holds with $\partial_{\eta} \partial_{\theta} L_0(\theta,\eta)(g, h) = \int  (h(z))^\top \partial_a \partial_b l(a,b,z)\big|_{a = \theta(z)}\big|_{b = \eta(z)} g(z)\, P_0(dz).$
\end{lemma}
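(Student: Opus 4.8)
The plan is to reduce each claim in \ref{cond::targetsmoothloss} and \ref{cond::nuisancesmooth} to a pointwise Taylor expansion of the finite-dimensional map $(a,b)\mapsto l(a,b,z)$, followed by integration against $P_0$, with compactness of $\mathcal{C}$ converting the local $C^{2,1}$ regularity of $l$ into uniform-in-$z$ bounds. First I would record that, since $\mathcal{C}$ is compact and $l,\partial_a l,\partial_a^2 l,\partial_a\partial_b l$ are jointly continuous, each is bounded on $\mathcal{C}$ by some $M<\infty$, while the hypothesized Lipschitz constants of $\partial_a^2 l$ and $\partial_a\partial_b l$ are uniform over $z$. For feasible $\theta\in\mathcal{H}_{\mathcal{P}}$ and $\eta\in\mathcal{N}_{\mathcal{P}}$ the values $(\theta(z),\eta(z),z)$ lie in $\mathcal{C}$, so $z\mapsto \partial_a l(\theta(z),\eta(z),z)$ is bounded; by Cauchy--Schwarz this identifies $\dot\ell_\eta(\theta)(h)(z)=\partial_a l(\theta(z),\eta(z),z)^\top h(z)$ as a bounded linear operator from $(\mathcal{H},\|\cdot\|_{\mathcal{H}})$ into $L^2(P_0)$, the candidate derivative in \ref{cond::targetsmoothloss::one}.

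Next I would verify the differentiability statements. For \ref{cond::targetsmoothloss::one}, the remainder $L_0(\theta+h,\eta)-L_0(\theta,\eta)-P_0[\partial_a l(\theta,\eta,\cdot)^\top h]$ equals, by the integral form of Taylor's theorem, a term dominated pointwise by $\tfrac{M}{2}\|h(z)\|^2$, whose integral is $\tfrac{M}{2}\|h\|_{\mathcal{H}}^2=o(\|h\|_{\mathcal{H}})$. For \ref{cond::targetsmoothloss::two} I would view $\partial_\theta L_0(\theta,\eta)$ as an element of $(\mathcal{H},\rho_{\mathcal{H}})^*$ and differentiate it in $\theta$ along $\|\cdot\|_{\mathcal{H}}$; the candidate is $\partial_\theta^2 L_0(\theta,\eta)(h_1,h_2)=P_0[h_1^\top\partial_a^2 l(\theta,\eta,\cdot)h_2]$, symmetric because $\partial_a^2 l$ is. The operator-norm remainder is $\sup_{\rho_{\mathcal{H}}(h_2)\le 1}|P_0[(\partial_a l(\theta+h_1,\eta,\cdot)-\partial_a l(\theta,\eta,\cdot)-\partial_a^2 l(\theta,\eta,\cdot)h_1)^\top h_2]|$; Lipschitz continuity of $\partial_a^2 l$ bounds the parenthesized vector pointwise by a multiple of $\|h_1(z)\|^2$, and pairing against $\|h_2(z)\|_\infty\le 1$ and integrating gives $o(\|h_1\|_{\mathcal{H}})$. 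The structural point I would stress is that measuring differentiability in the dual of $(\mathcal{H},\rho_{\mathcal{H}})$ is exactly what lets the test direction $h_2$ be pulled out in sup-norm while the perturbation $h_1$ enters quadratically in $L^2$.

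The Lipschitz bound \ref{cond::targetsmoothloss::three} and the whole nuisance condition \ref{cond::nuisancesmooth} use the same ess-sup/$L^2$ pairing. For \ref{cond::targetsmoothloss::three} I would write the difference as $P_0[h_1^\top(\partial_a^2 l(\theta',\eta',\cdot)-\partial_a^2 l(\theta,\eta,\cdot))h_2]$, bound the matrix difference by $L(\|\theta'(z)-\theta(z)\|+\|\eta'(z)-\eta(z)\|)$, insert $\|h_2(z)\|_\infty\le 1$, and apply Cauchy--Schwarz with $\|h_1\|_{\mathcal{H}}\le 1$ to reach $C(\|\theta'-\theta\|_{\mathcal{H}}+\|\eta'-\eta\|_{\mathcal{N}})$. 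Conditions \ref{cond::crossderiv} and \ref{cond::crossderivlipschitz} are the mirror image in the $b$-variable: differentiating $\partial_\theta L_0(\theta_0,\cdot)$ in $\eta$ using once-differentiability of $l$ in $b$ identifies $\partial_\eta\partial_\theta L_0(\theta_0,\eta)(g,h)=P_0[h^\top\partial_a\partial_b l(\theta_0,\eta,\cdot)g]$, and the remainder and its Lipschitz modulus are controlled by the uniform Lipschitz continuity of $\partial_a\partial_b l$, again pairing $\rho_{\mathcal{H}}(h)\le 1$ against the $L^2$ factors carried by $g$.

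The hard part will not be any single estimate but reconciling expansions that are natural in the ess-sup topology with the $L^2(P_0)$ norm $\|\cdot\|_{\mathcal{H}}$ in which Fr\'echet differentiability is asserted, together with the fact that a perturbation $\theta+h$ need not remain in $\mathcal{H}_{\mathcal{P}}$, so the intermediate points in the Taylor remainders may exit $\mathcal{C}$, where no regularity is assumed. Since $\mathcal{H}_{\mathcal{P}}$ and $\mathcal{N}_{\mathcal{P}}$ are convex, segments between feasible arguments stay in $\mathcal{C}$ and the expansions are unproblematic there; to obtain genuine differentiability in every direction $h\in\mathcal{H}$, I would first extend $\partial_a^2 l$ and $\partial_a\partial_b l$ to globally bounded, globally Lipschitz functions off $\mathcal{C}$ (for instance a McShane-type Lipschitz extension of each coordinate, reconstructing $l$ by integration), producing a loss that coincides with the original on $\mathcal{C}$ and hence leaves $\theta_P$, the risk at feasible arguments, and all derivatives at points of $\mathcal{H}_{\mathcal{P}}\times\mathcal{N}_{\mathcal{P}}$ unchanged. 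With global $C^{2,1}$ bounds the remainder estimates hold uniformly over all $h$, and since every quantity in \ref{cond::targetsmoothloss} and \ref{cond::nuisancesmooth} is evaluated on $\mathcal{C}$, the conclusions transfer back to the original loss.
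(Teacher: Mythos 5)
Your proposal is correct and, for the core of the argument, coincides with the paper's own proof: both verify \ref{cond::targetsmoothloss} and \ref{cond::nuisancesmooth} by a pointwise Taylor expansion of \((a,b)\mapsto l(a,b,z)\) with integral remainder, integration against \(P_0\), compactness of \(\mathcal{C}\) plus joint continuity to obtain uniform bounds, and exactly the pairing you emphasize—test direction pulled out in \(\rho_{\mathcal{H}}\) (ess-sup), perturbation entering quadratically in \(\|\cdot\|_{\mathcal{H}}\)—to control the dual-norm remainders in \ref{cond::targetsmoothloss::two}, \ref{cond::targetsmoothloss::three}, and \ref{cond::nuisancesmooth}. (The paper additionally spells out a dominated-convergence step to differentiate under the integral; your route of directly bounding the Fr\'echet remainders makes that step unnecessary.)

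Where you go beyond the paper is the final paragraph. The paper's proof applies the bound \(\|\partial_a^2 l\|_\infty := \sup_{(a,b,z)\in\mathcal{C}}\|\partial_a^2 l(a,b,z)\|\) and the uniform Lipschitz constants to Taylor remainders along segments from \(\theta(z)\) to \(\theta(z)+h(z)\) for arbitrary \(h\in\mathcal{H}\); since \(\theta+h\) need not lie in \(\mathcal{H}_{\mathcal{P}}\), those segments can exit \(\mathcal{C}\), where the hypotheses assert no regularity. You correctly identify this and repair it via a McShane-type globally Lipschitz extension of \(\partial_a^2 l\) and \(\partial_a\partial_b l\), reconstructing the loss by integration; the fiberwise value sets \(\{(\theta(z),\eta(z)):\theta\in\mathcal{H}_{\mathcal{P}},\eta\in\mathcal{N}_{\mathcal{P}}\}\) are convex (products of images of convex sets under evaluation maps), so the reconstruction indeed agrees with \(l\) on \(\mathcal{C}\). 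One caveat: your claim that the conclusions "transfer back to the original loss" is slightly too strong, because the Fr\'echet-differentiability statements quantify over all directions \(h\in\mathcal{H}\), and \(L_P(\theta+h,\eta)\) at infeasible \(\theta+h\) depends on values of \(l\) off \(\mathcal{C}\); strictly, the conditions hold for the extended loss. This is harmless for the paper's purposes—Lemmas \ref{lemma::functionalTaylor}--\ref{lemma::riskTaylorcross} and Theorem \ref{theorem::vonmises} only ever expand between points of \(\mathcal{H}_{\mathcal{P}}\) and \(\mathcal{N}_{\mathcal{P}}\), where the two losses and all their derivatives agree—but it should be stated as such, or else the lemma's Lipschitz hypotheses should be read as global in \((a,b)\), which is what the paper's proof implicitly assumes.
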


In Appendix~\ref{sec:examplesback}, we use Lemma~\ref{lemma:smoothlosssuff} to verify our conditions for the orthogonal logistic loss in Example~\ref{example::splogistic}. To illustrate these conditions, we return to our running example.


\renewcommand{\theexample}{1a}

\begin{example}[continued]
We consider the Riesz loss \(\ell(\theta, z) := \frac{1}{2}\theta(x)^2 - r(z, \theta)\), where \(r : \mathcal{Z} \times \mathcal{H} \to \mathbb{R}\) is linear in its second argument. Assume each \(P \in \mathcal{P}\) is dominated by \(P_0\) and satisfies \(c \|\cdot\|_{L^2(P_0)} \leq \|\cdot\|_{L^2(P)} \leq C \|\cdot\|_{L^2(P_0)}\) for constants \(0 < c < C < \infty\). Let \(\mathcal{H} = L^2(P_0)\), with \(\|\cdot\|_{\mathcal{H}}\) the \(L^2(P_0)\) norm and \(\rho_{\mathcal{H}}(\cdot)\) the \(P_0\)-essential supremum norm. Condition~\ref{cond::uniqueness} holds since the risk $\theta \mapsto E_P[\frac{1}{2}\theta(X)^2 - r(Z, \theta)]$ is strongly convex and continuous on $\mathcal{H}$ with respect to the $\| \cdot \|_{\mathcal{H}}$ norm. Condition~\ref{cond::smoothfunctional} holds with $\dot{\psi}_0 := \psi_0$ if $\theta \mapsto m(Z, \theta)$ is almost surely linear, and $\sup_{\theta \in \mathcal{H}} \frac{|E_0[m(Z, \theta)]|}{\|\theta\|_{\mathcal{H}}} < \infty.$ Alternatively, if $m(Z, \theta) = g\bigl(\theta(Z)\bigr)$ for a differentiable function $g: \mathbb{R} \rightarrow \mathbb{R}$ with Lipschitz continuous derivative $\dot{g}: \mathbb{R} \rightarrow \mathbb{R}$, then \ref{cond::smoothfunctional} holds with $\dot\psi_0(\theta)(h) = E_0\bigl[\dot m_\theta(Z,h)\bigr]$, where $\dot m_\theta(z,h) = \dot g\bigl(\theta(z)\bigr)\,h(z)$ (Section~4.1 of \cite{qiu2021universal}). The loss function $\ell$ satisfies \ref{cond::targetsmoothloss::one} with $\dot{\ell}(\theta): (h, z) \mapsto h(x)\theta(x) - r(z, h)$, and satisfies \ref{cond::targetsmoothloss::two} with $\partial_{\theta}^2 L_0(\theta, \eta)(h_1, h_2) = \langle h_1, h_2 \rangle_{\mathcal{H}}$. Conditions~\ref{cond::nuisancesmooth} and~\ref{cond::orthogonal} hold since the loss $\ell$ does not depend on any nuisance functions. \qedsymbol
\end{example}

\subsection{Functional von Mises expansion and efficient influence function}

In this section, we present a functional von Mises expansion for \(\psi_0(\theta_0)\), and establish the pathwise differentiability of \(\Psi\) along with its efficient influence function \citep{mises1947asymptotic, bickel1993efficient}. These results are central to establishing the asymptotic properties of the autoDML estimators.

Let \(\overline{\mathcal{H}}\) denote the completion of \(\mathcal{H}\) under the norm \(\|\cdot\|_{\mathcal{H}}\). The von Mises expansion in the following theorem involves the Hessian Riesz representer of the linearization of \(\psi_0\). Combined with Conditions~\ref{cond::smoothfunctional} and \ref{cond::PDHessian}, the following condition ensures that the functional derivative \(\dot{\psi}_0(\theta_0)\) is continuous with respect to the Hessian inner product and therefore admits such a representer.

\begin{enumerate}[label=\bf{A\arabic*)}, ref={A\arabic*}, resume=cond1]
   \item \textit{(Hessian norm equivalence).}
   There exists \(\infty > \kappa_1, \kappa_2 >  0\) such that \(\kappa_2 \|h\|_{\mathcal{H}}^2 \ge \partial_\theta^2 L_0(\theta_0, \eta_0)(h, h) \ge \kappa_1 \|h\|_{\mathcal{H}}^2\) for all \(h \in \mathcal{H}\).
    \label{cond::PDHessian}
\end{enumerate}
Condition~\ref{cond::PDHessian} automatically holds if one takes \(\|\cdot\|_{\mathcal{H}}\) to be the Hessian-induced norm \(\{\partial_\theta^2 L_0(\theta_0,\eta_0)(h,h)\}^{1/2}\), provided the latter is positive definite. By the Riesz representation theorem, there exists a unique \(\alpha_0 \in \overline{\mathcal{H}}\) such that
\begin{align}
    \dot{\psi}_0(\theta_0)(h) = \partial_\theta^2 L_0(\theta_0, \eta_0)(\alpha_0, h), \quad \forall h \in \overline{\mathcal{H}}. \label{eqn::rieszrep}
\end{align}
The Hessian Riesz representer \(\alpha_0\) solves the optimization problem
\begin{equation}
   \alpha_0 = \argmin_{\alpha \in \overline{\mathcal{H}}} \left[\partial_\theta^2 L_0(\theta_0, \eta_0)(\alpha, \alpha) - 2\dot{\psi}_0(\theta_0)(\alpha) \right], \label{eqn::rieszopt}
\end{equation}
since, by the representer property, this objective differs from $\argmin_{\alpha \in \overline{\mathcal{H}}} \partial_\theta^2 L_0(\theta_0, \eta_0)(\alpha - \alpha_0, \alpha - \alpha_0)$
by a constant. In many cases, the objective in~\eqref{eqn::rieszopt} corresponds to the risk induced by the loss $(z,\alpha) \mapsto \ddot{\ell}_{\eta_0}(\theta_0)(\alpha,\alpha)(z) - 2\,\dot{m}_{\theta_0}(z,\alpha), $for suitable G\^ateaux derivatives \(\ddot{\ell}_{\eta_0}(\theta_0)\) and \(\dot{m}_{\theta_0}\).


In the following theorem, let $\delta_{\mathrm{lin}} := \mathbf{1}\{\psi_0 \neq \dot{\psi}_0(\theta_0)\}$ and $\delta_{\mathrm{quad}} := \mathbf{1}\{L_0(\cdot,\eta_0)\text{ is not quadratic at }\theta_0\}$. Equivalently, $\delta_{\mathrm{lin}} = 0$ if $\psi_0$ is linear, so that $\psi_0 = \dot{\psi}_0(\theta_0)$, and $\delta_{\mathrm{quad}} = 0$ if
\[
L_0(\theta, \eta_0) - L_0(\theta_0, \eta_0)
=
\partial_{\theta} L_0(\theta_0, \eta_0)(\theta - \theta_0)
+
\frac{1}{2}\partial_{\theta}^2 L_0(\theta_0, \eta_0)(\theta - \theta_0, \theta - \theta_0).
\]

\begin{theorem}[Functional von Mises expansion]
\label{theorem::vonmises}
Suppose Conditions~\ref{cond::uniqueness}--\ref{cond::PDHessian} hold. Then, for each $\eta \in \mathcal{N}_{\mathcal{P}}$, $\theta \in \mathcal{H}_{\mathcal{P}}$, and $\alpha \in \mathcal{H}$ with $\rho_{\mathcal{H}}(\alpha) < M$, the following expansion holds:
\begin{align*}
\psi_0(\theta) - \psi_0(\theta_0)
&=
\int \dot{\ell}_{\eta}(\theta)(\alpha)(z)\,P_0(dz)
\\
&\quad
+ \partial_{\theta}^2 L_0(\theta_0, \eta_0)(\alpha_0 - \alpha, \theta - \theta_0)
\\
&\quad
+ (\delta_{\mathrm{lin}}+\delta_{\mathrm{quad}})
O\!\left(\|\theta - \theta_0\|_{\mathcal{H}}^2\right)
\\
&\quad
+ O\!\left(\|\eta - \eta_0\|_{\mathcal{N}} \|\theta - \theta_0\|_{\mathcal{H}}\right)
+ O\!\left(\|\eta - \eta_0\|_{\mathcal{N}}^2\right),
\end{align*}
where the big-$O$ notation depends only on $M$ and the universal constants appearing in our conditions.
\end{theorem}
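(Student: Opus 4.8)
The plan is to obtain the expansion in three movements: linearize $\psi_0$ about $\theta_0$, rewrite the linearization as a Hessian inner product with the representer $\alpha_0$, and then convert the inner product $\partial^2_\theta L_0(\theta_0,\eta_0)(\theta-\theta_0,\alpha)$ into the loss-gradient term $\int \dot\ell_\eta(\theta)(\alpha)\,dP_0$ using first-order optimality and Neyman orthogonality, while carefully tracking which remainders carry the factors $I_{\mathrm{lin}}$ and $I_{\mathrm{quad}}$.

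First I would Taylor expand $\psi_0$. By \ref{cond::smoothfunctional} (Fr\'echet differentiability with Lipschitz derivative),
\[
\psi_0(\theta) - \psi_0(\theta_0) = \dot\psi_0(\theta_0)(\theta - \theta_0) + I_{\mathrm{lin}}\, O(\|\theta - \theta_0\|_{\mathcal{H}}^2),
\]
the quadratic remainder being present only for nonlinear $\psi_0$, hence the factor $I_{\mathrm{lin}}$. By \ref{cond::PDHessian} the Hessian defines an inner product on $\overline{\mathcal{H}}$, so the Riesz identity~\eqref{eqn::rieszrep} applies with $h = \theta - \theta_0$, giving $\dot\psi_0(\theta_0)(\theta - \theta_0) = \partial^2_\theta L_0(\theta_0, \eta_0)(\alpha_0, \theta - \theta_0)$. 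Splitting $\alpha_0 = \alpha + (\alpha_0 - \alpha)$ and invoking symmetry of the Hessian (\ref{cond::targetsmoothloss::two}),
\[
\partial^2_\theta L_0(\theta_0, \eta_0)(\alpha_0, \theta - \theta_0) = \partial^2_\theta L_0(\theta_0, \eta_0)(\theta - \theta_0, \alpha) + \partial^2_\theta L_0(\theta_0, \eta_0)(\alpha_0 - \alpha, \theta - \theta_0),
\]
the last summand being exactly the representer-discrepancy term in the statement.

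The crux is to show $\partial^2_\theta L_0(\theta_0, \eta_0)(\theta - \theta_0, \alpha) = \partial_\theta L_0(\theta, \eta)(\alpha) = \int \dot\ell_\eta(\theta)(\alpha)\,dP_0$ up to the stated remainders. Because the cross-derivative is only assumed to exist at $\theta_0$ (\ref{cond::crossderiv}), I would expand $\partial_\theta L_0(\theta,\eta)(\alpha)$ in a \emph{nested} rather than joint-path fashion, decomposing it as $\partial_\theta L_0(\theta,\eta_0)(\alpha)$ plus the mixed difference $[\partial_\theta L_0(\theta,\eta)(\alpha)-\partial_\theta L_0(\theta,\eta_0)(\alpha)]-[\partial_\theta L_0(\theta_0,\eta)(\alpha)-\partial_\theta L_0(\theta_0,\eta_0)(\alpha)]$ plus the residual $\eta$-only piece $\partial_\theta L_0(\theta_0,\eta)(\alpha)-\partial_\theta L_0(\theta_0,\eta_0)(\alpha)$. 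For the first piece, the fundamental theorem of calculus along $\theta_t := \theta_0 + t(\theta-\theta_0)$ together with the first-order condition $\partial_\theta L_0(\theta_0,\eta_0)(\alpha)=0$ (\ref{cond::targetsmoothloss::one}) yields $\partial^2_\theta L_0(\theta_0,\eta_0)(\theta-\theta_0,\alpha)$ plus $\int_0^1 [\partial^2_\theta L_0(\theta_t,\eta_0)-\partial^2_\theta L_0(\theta_0,\eta_0)](\theta-\theta_0,\alpha)\,dt$; this remainder is $O(\|\theta-\theta_0\|_{\mathcal{H}}^2)$ by Lipschitz continuity of the Hessian (\ref{cond::targetsmoothloss::three}) and $\rho_{\mathcal{H}}(\alpha)<M$, and it vanishes identically for quadratic risk, hence the factor $I_{\mathrm{quad}}$. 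The mixed difference equals $\int_0^1 [\partial^2_\theta L_0(\theta_t,\eta)-\partial^2_\theta L_0(\theta_t,\eta_0)](\theta-\theta_0,\alpha)\,dt = O(\|\eta-\eta_0\|_{\mathcal{N}}\|\theta-\theta_0\|_{\mathcal{H}})$, again by \ref{cond::targetsmoothloss::three}. The $\eta$-only piece expands via \ref{cond::crossderiv}, its first-order term vanishing by Neyman orthogonality (\ref{cond::orthogonal}) and its remainder being $O(\|\eta-\eta_0\|_{\mathcal{N}}^2)$ by \ref{cond::crossderivlipschitz}.

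Assembling everything and using $\partial_\theta L_0(\theta,\eta)(\alpha)=\int \dot\ell_\eta(\theta)(\alpha)\,dP_0$ (\ref{cond::targetsmoothloss::one}) produces the stated expansion, with the two quadratic-in-$\theta$ remainders combining into $(I_{\mathrm{lin}}+I_{\mathrm{quad}})\,O(\|\theta-\theta_0\|_{\mathcal{H}}^2)$ and the nuisance perturbation producing the cross and $\eta$-squared terms. I expect the main obstacle to be precisely this remainder bookkeeping: one must resist integrating jointly along a path in $(\theta,\eta)$—unavailable since the cross-derivative lives only at $\theta_0$—and instead separate the pure-$\theta$ expansion from the $\eta$-perturbation, so that the quadratic-in-$\theta$ remainder correctly inherits the factor $I_{\mathrm{quad}}$ rather than being coarsened into an undifferentiated $O(\|\theta-\theta_0\|_{\mathcal{H}}^2 + \|\eta-\eta_0\|_{\mathcal{N}}^2)$ bound.
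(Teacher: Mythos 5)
Your proposal is correct, and it shares the paper's overall skeleton---Taylor expansion of $\psi_0$ (the paper's Lemma~\ref{lemma::functionalTaylor}), the Riesz identity \eqref{eqn::rieszrep} under the Hessian inner product, first-order optimality plus Neyman orthogonality \ref{cond::orthogonal} to annihilate the leading terms, and the Lipschitz property \ref{cond::targetsmoothloss::three} for the cross term---but it organizes the key step differently, and the difference is worth recording. The paper expands the gradient $\partial_\theta L_0(\theta,\eta)(\alpha)$ in $\theta$ while holding the \emph{perturbed} nuisance fixed (Lemma~\ref{lemma::riskTaylor} applied at $\eta$), obtaining the Hessian $\partial_\theta^2 L_0(\theta_0,\eta)(\theta-\theta_0,\alpha)$ plus an $I_{\mathrm{quad}}$ remainder, then disposes of $\partial_\theta L_0(\theta_0,\eta)(\alpha)$ via Lemma~\ref{lemma::riskTaylorcross} and orthogonality, and finally swaps $\eta\to\eta_0$ inside the Hessian via \ref{cond::targetsmoothloss::three}, which is where the cross term $O(\|\eta-\eta_0\|_{\mathcal N}\|\theta-\theta_0\|_{\mathcal H})$ arises. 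You instead expand in $\theta$ at $\eta_0$, so the Hessian at $(\theta_0,\eta_0)$ and the $I_{\mathrm{quad}}$ remainder appear immediately, and you collect the $\eta$-dependence into a second-order mixed difference (bounded by \ref{cond::targetsmoothloss::three}, yielding the same cross term) plus the $\eta$-only piece, which you handle exactly as the paper does via \ref{cond::crossderiv}, \ref{cond::orthogonal}, and \ref{cond::crossderivlipschitz}. The two routes produce identical remainders, but yours buys a genuine, if small, gain in precision: the theorem defines $I_{\mathrm{quad}}=0$ by quadraticity of $L_0(\cdot,\eta_0)$, and in your decomposition the quadratic-in-$\theta$ remainder is generated by an expansion at $\eta_0$, so it vanishes exactly under that definition; in the paper's route the corresponding remainder is generated by an expansion at the perturbed $\eta$ (see \eqref{eqn::second}), so strictly its vanishing requires quadraticity of the risk at $\eta$ as well, a point the paper elides with the self-referential phrase ``$I_{\mathrm{quad}}=0$ if the expansion holds without remainder.'' Your nested bookkeeping, which you correctly identified as the main hazard of the argument, closes that gap.
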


Given an estimator \(\theta_n\) of \(\theta_0\), a consequence of Theorem~\ref{theorem::vonmises} is that the leading term in the plug-in error \(\psi_0(\theta_n) - \psi_0(\theta_0)\) is \(P_0 \dot{\ell}_{\eta_0}(\alpha_0)\). Moreover, given estimators \(\eta_n\) of \(\eta_0\) and \(\alpha_n\) of \(\alpha_0\), this term is well-approximated by \(P_0 \dot{\ell}_{\eta_n}(\alpha_n)\), up to second-order remainder terms. This characterization suggests a natural strategy for debiasing the plug-in estimator \(\psi_n(\theta_n)\): either explicitly, by adding the empirical correction term \(- P_n \dot{\ell}_{\eta_n}(\theta_n)(\alpha_n)\), or implicitly, by constructing the estimator to satisfy \(P_n \dot{\ell}_{\eta_n}(\theta_n)(\alpha_n) = 0\).   In the next section, we leverage these properties to construct autoDML estimators for \(\Psi(P_0)\). 



 If the functional is linear $(\delta_{\mathrm{lin}} = 0)$ and the risk function is quadratic $(\delta_{\mathrm{quad}} = 0)$, Theorem \ref{theorem::vonmises} only involves the doubly robust (or mixed bias) remainder terms: $\partial_{\theta}^2 L_0(\theta_0, \eta_0)(\alpha_0 - \alpha, \theta - \theta_0)$ and $\|\eta - \eta_0\|_{\mathcal{H}} \|\theta - \theta_0\|_{\mathcal{H}}$. The latter remainder term can be dropped entirely if the loss $\ell_{\eta}$ is universally Neyman-orthogonal, meaning that \(\partial_\eta \partial_{\theta}^2 L_0(\theta, \eta_0) = 0\) for all \(\theta \in \mathcal{H}\), and the map \(\eta \mapsto \partial_\eta \partial_{\theta}^2 L_0(\theta_0, \eta)\) is suitably Lipschitz (see Theorem~\ref{theorem::neymanorthogonality_secondorder} in Appendix~\ref{appendix::vonmises}). This finding extends the double robustness of the von Mises expansion for linear functionals of regression functions—corresponding to the squared error loss—to linear functionals of general pseudo-outcome regressions (Example~\ref{example::orthoLS}). For such functionals, we also obtain a dual identification of the estimand in terms of the Riesz representer: $\Psi(P_0) = -\partial_{\theta} L_0(0, \eta_0)(\alpha_0) = -E_0[\dot{\ell}_{\eta_0}(0)(\alpha_0)(Z)].$


The Hessian Riesz representer $\alpha_0$, viewed as the gradient of $\dot{\psi}_0(\theta_0)$ with respect to the Hessian inner product, identifies the local fluctuation direction that maximizes the squared change in the functional $\psi_0$ at $\theta_0$ per unit increase in loss. Specifically, by the Cauchy--Schwarz inequality, $\alpha_0$ maximizes the following ratio over all directions $h \in \overline{\mathcal{H}}$:
$$
\lim_{\varepsilon \to 0}
\frac{|\psi_0(\theta_0 + \varepsilon h) - \psi_0(\theta_0)|^2}
{L_0(\theta_0 + \varepsilon h, \eta_0) - L_0(\theta_0, \eta_0)}
=
\frac{|\dot{\psi}_0(\theta_0)(h)|^2}
{\frac{1}{2}\partial_\theta^2 L_0(\theta_0, \eta_0)(h, h)}.
$$
For the log-likelihood loss, the right-hand side corresponds to the Cram\'er--Rao bound for the one-dimensional submodel $\varepsilon \mapsto \theta_0 + \varepsilon h$. The submodel $\varepsilon \mapsto \theta_0 + \varepsilon \alpha_0$ may therefore be viewed as a least favorable submodel through $\theta_0$ for the functional $\psi_0$ relative to the loss function $\ell_{\eta_0}$. In particular, the plug-in M-estimator of $\Psi(P_0)$ based on this submodel attains the largest asymptotic variance among all one-dimensional parametric submodels through $\theta_0$.


This least favorable construction based on the Hessian Riesz representer is analogous to the least favorable density submodel through $P_0$ for a pathwise differentiable parameter, given by $\varepsilon \mapsto (1 + \varepsilon \chi_0)\,dP_0$, where $\chi_0$ is the efficient influence function \citep{stein1956efficient,bickel1993efficient}. That submodel plays a central role in efficiency theory and underlies targeted maximum likelihood estimation \citep{laan_rubin_2006,van2016one}. It is therefore natural that the Hessian Riesz representer $\alpha_0$ appears in the efficient influence function of the projection parameter $\Psi$, as the following theorem shows.

We make use of the following conditions. For each \(P \in \mathcal{P}\), let \(\alpha_P \in \overline{\mathcal{H}}\) denote the Hessian Riesz representer under $P$, satisfying $\dot{\psi}_P(\theta_P)(h) = \partial_{\theta}^2 L_P(\theta_P, \eta_P)(\alpha_P, h), \quad \forall h \in \overline{\mathcal{H}}.$

\begin{enumerate}[label=\bf{A\arabic*)}, ref={A\arabic*}, resume=cond1]
  
  \item \textit{(Functional continuity):} $P \mapsto \Psi(P)$, $P \mapsto \theta_P$ and $P \mapsto \eta_P$ are Lipschitz continuous maps at $P_0$ with domain $\mathcal{P}$ equipped with the Hellinger distance and the range equipped with $| \cdot |$, $\| \cdot\|_{\mathcal{H}}$ and $\| \cdot\|_{\mathcal{N}}$, respectively. In addition, $(\eta, \theta) \mapsto \dot{\ell}_{\eta}(\cdot, \theta)$ and $(\eta, \theta) \mapsto m(\cdot, \theta)$ are continuous maps with domain $\mathcal{N} \times \mathcal{H}$ equipped with the total norm $(\eta, \theta) \mapsto  \| \eta\|_{\mathcal{N}} +  \|\theta\|_{\mathcal{H}}$ and range $L^2(P_0)$. \label{cond::smoothparam}
    \item   \textit{(Boundedness):}   \label{cond::boundedtrue}  $\sup_{P \in \mathcal{P}}\rho_{\mathcal{H}}(\alpha_P) < \infty$.    
\end{enumerate}
Condition \ref{cond::smoothparam} ensures that the target and nuisance parameterizations $\{\theta_P: P \in \mathcal{P}\}$ and $\{\eta_P: P \in \mathcal{P}\}$ are Lipschitz Hellinger-continuous. \cite{luedtke2023one} and \cite{luedtke2024simplifying} verify Lipschitz Hellinger continuity for several functions of interest, including regression functions and causal contrasts.  

\begin{theorem}[Pathwise differentiability]
    \label{theorem::EIF}
    If \ref{cond::uniqueness}-\ref{cond::boundedtrue}, then $\Psi:\mathcal{P} \rightarrow \mathbb{R}$ is pathwise differentiable at $P_0$ with efficient influence function $\chi_0(z)= -\dot{\ell}_{\eta_0}(\theta_0)(\alpha_0)(z) + m(z, \theta_0) - \Psi(P_0).$  
\end{theorem}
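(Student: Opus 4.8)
The plan is to derive pathwise differentiability from a functional von Mises expansion and to exploit that $\mathcal{P}$ is nonparametric and convex, so the tangent space at $P_0$ equals $L^2_0(P_0)$; any mean-zero, square-integrable gradient is then automatically the canonical gradient, i.e., the efficient influence function. It therefore suffices to establish, for $P \in \mathcal{P}$,
\[
\Psi(P) - \Psi(P_0) = (P - P_0)\chi_0 + R(P, P_0),
\]
with $R$ a second-order remainder, and then differentiate along a Hellinger-differentiable submodel $P_\varepsilon$ with score $s \in L^2_0(P_0)$: since $(P_\varepsilon - P_0)\chi_0 = \varepsilon \langle \chi_0, s\rangle_{L^2(P_0)} + o(\varepsilon)$ and $R(P_\varepsilon, P_0) = o(\varepsilon)$, this yields $\frac{d}{d\varepsilon}\Psi(P_\varepsilon)\big|_{0} = \langle \chi_0, s\rangle_{L^2(P_0)}$.

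To obtain the expansion I would write $\Psi(P) - \Psi(P_0) = (P - P_0) m(\cdot, \theta_P) + \{\psi_0(\theta_P) - \psi_0(\theta_0)\}$, using $\psi_P(\theta) - \psi_0(\theta) = (P - P_0) m(\cdot, \theta)$. In the first term I replace $\theta_P$ by $\theta_0$, splitting off the centered contribution $(P - P_0)\{m(\cdot, \theta_0) - \Psi(P_0)\}$ and a second-order product $(P - P_0)\{m(\cdot, \theta_P) - m(\cdot, \theta_0)\}$. To the second term I apply Theorem~\ref{theorem::vonmises} with $\theta = \theta_P$, $\eta = \eta_P$, and $\alpha = \alpha_0$; the Hessian term vanishes since $\alpha_0 - \alpha_0 = 0$, leaving $P_0 \dot{\ell}_{\eta_P}(\theta_P)(\alpha_0)$ plus the theorem's second-order remainders, which already encode the Neyman orthogonality of Condition~\ref{cond::orthogonal}. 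Condition~\ref{cond::boundedtrue} guarantees $\rho_{\mathcal{H}}(\alpha_0) < \infty$, so that $\alpha_0$ is an admissible direction in Theorem~\ref{theorem::vonmises}.

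The key step is to turn $P_0 \dot{\ell}_{\eta_P}(\theta_P)(\alpha_0)$ into a score term. Using the first-order optimality of $\theta_P$ under $P$, namely $P\, \dot{\ell}_{\eta_P}(\theta_P)(\alpha_0) = \partial_\theta L_P(\theta_P, \eta_P)(\alpha_0) = 0$ (Condition~\ref{cond::targetsmoothloss::one}), I would write $P_0 \dot{\ell}_{\eta_P}(\theta_P)(\alpha_0) = -(P - P_0)\dot{\ell}_{\eta_P}(\theta_P)(\alpha_0)$, and then replace $(\theta_P, \eta_P)$ by $(\theta_0, \eta_0)$, producing $-(P - P_0)\dot{\ell}_{\eta_0}(\theta_0)(\alpha_0)$ and a second-order product $(P - P_0)\{\dot{\ell}_{\eta_P}(\theta_P)(\alpha_0) - \dot{\ell}_{\eta_0}(\theta_0)(\alpha_0)\}$. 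Combining with the centered term from the first step recovers exactly $(P - P_0)\chi_0$ with $\chi_0 = -\dot{\ell}_{\eta_0}(\theta_0)(\alpha_0) + m(\cdot, \theta_0) - \Psi(P_0)$.

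Finally I would verify that $\chi_0 \in L^2_0(P_0)$ and control $R$. It is mean-zero because $P_0 \dot{\ell}_{\eta_0}(\theta_0)(\alpha_0) = \partial_\theta L_0(\theta_0, \eta_0)(\alpha_0) = 0$ and $m(\cdot, \theta_0)$ is centered by $\Psi(P_0)$, and square-integrable because $\dot{\ell}_{\eta_0}(\theta_0)$ maps into $L^2(P_0)$ (Condition~\ref{cond::targetsmoothloss::one}) and $m(\cdot, \theta_0) \in L^2(P_0)$ (Condition~\ref{cond::smoothparam}). The main obstacle is showing $R = o(\varepsilon)$ along smooth submodels: both remainder pieces are products $(P - P_0) g_P$ with $g_P \to 0$ in $L^2(P_0)$, which I would bound by Cauchy--Schwarz, $|(P - P_0) g_P| \le \|g_P\|_{L^2(P_0)}\, \|dP/dP_0 - 1\|_{L^2(P_0)}$, and then use the $L^2(P_0)$-continuity of $(\eta, \theta) \mapsto \dot{\ell}_\eta(\theta)(\alpha_0)$ and $(\eta, \theta) \mapsto m(\cdot, \theta)$ together with the Lipschitz Hellinger continuity of $P \mapsto \theta_P$ and $P \mapsto \eta_P$ (Condition~\ref{cond::smoothparam}) to get $\|g_{P_\varepsilon}\|_{L^2(P_0)} = O(\varepsilon)$ while $\|dP_\varepsilon/dP_0 - 1\|_{L^2(P_0)} = O(\varepsilon)$, so each product is $O(\varepsilon^2)$. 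Since the tangent space is all of $L^2_0(P_0)$, the mean-zero gradient $\chi_0$ is the unique canonical gradient, i.e., the efficient influence function, as claimed.
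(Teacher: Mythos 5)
Your proposal is correct, but it takes a genuinely different route from the paper's proof. The paper never invokes Theorem~\ref{theorem::vonmises} when proving Theorem~\ref{theorem::EIF}: it instead perturbs the first-order optimality condition $\partial_\theta L_{P_{0,\varepsilon}}(\theta_{P_{0,\varepsilon}}, \eta_{P_{0,\varepsilon}})(\alpha_0)=0$ along the submodel, splits the increment into three differences (a nuisance move, an M-estimand move, and a measure move), and thereby derives the implicit relation $\partial_\theta^2 L_0(\theta_0,\eta_0)\bigl(\varepsilon^{-1}(\theta_{P_{0,\varepsilon}}-\theta_0),\alpha_0\bigr) = \langle -\dot{\ell}_{\eta_0}(\theta_0)(\alpha_0), S\rangle_{L^2(P_0)} + o(1)$, i.e., a pathwise derivative of the M-estimand paired with $\alpha_0$; this is then combined with the Riesz representation property and a Taylor expansion of $\psi_0$. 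You instead reuse Theorem~\ref{theorem::vonmises} as a black box with $\alpha=\alpha_0$ (so the Hessian cross term vanishes and orthogonality is already absorbed into the remainder), and then convert the leading term $P_0\,\dot{\ell}_{\eta_P}(\theta_P)(\alpha_0)$ into a centered quantity via the optimality of $\theta_P$ under $P$, $P\,\dot{\ell}_{\eta_P}(\theta_P)(\alpha_0)=0$ --- the classical von Mises trick. Your route avoids differentiating $P\mapsto\theta_P$ altogether and produces a global expansion $\Psi(P)-\Psi(P_0)=(P-P_0)\chi_0 + R(P,P_0)$, which mirrors the structure the paper uses later for the estimator analysis (Lemma~\ref{lemma::biasexpansion}); the paper's route, in exchange, yields the pathwise derivative of the M-estimand as a by-product (exploited in its appendix). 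Both arguments rest on the same ingredients --- Neyman orthogonality, the Riesz representation, Conditions~\ref{cond::smoothparam}--\ref{cond::boundedtrue}, and Cauchy--Schwarz --- and both share the mild technicality of plugging $\alpha_0\in\overline{\mathcal{H}}$ into maps defined on $\mathcal{H}$.

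One small imprecision to fix: Condition~\ref{cond::smoothparam} gives only \emph{continuity} (not Lipschitz continuity) of $(\eta,\theta)\mapsto\dot{\ell}_\eta(\theta)$ and $\theta\mapsto m(\cdot,\theta)$ into $L^2(P_0)$, so your claim $\|g_{P_\varepsilon}\|_{L^2(P_0)}=O(\varepsilon)$, and hence $O(\varepsilon^2)$ for the cross products, is stronger than the assumptions deliver. What you actually get is $\|g_{P_\varepsilon}\|_{L^2(P_0)}=o(1)$, so each product is $o(1)\cdot O(\varepsilon)=o(\varepsilon)$. That weaker bound is exactly what pathwise differentiability requires, so your proof goes through unchanged; the paper itself settles for the same $o(\varepsilon)$ control of these terms.
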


Theorem~\ref{theorem::EIF} shows that the EIF is characterized by the score (i.e., the derivative) of the Neyman-orthogonal loss \(\ell_{\eta_0}\) in the direction of the Hessian Riesz representer \(\alpha_0\), given by \(\dot{\ell}_{\eta_0}(\theta_0)(\alpha_0)\). As a consequence, it addresses an open question implicit in the targeted minimum loss-based estimation procedure \citep{van2011targeted}: given a loss function for the M-estimand \(\theta_0\) and a functional \(\psi_0\), does there exist a submodel through \(\theta_0\) whose score, with respect to this loss, generates the EIF of \(\psi_0\)? The theorem confirms that such a submodel exists and is given by the least favorable path \(\varepsilon \mapsto \theta_0 + \varepsilon \alpha_0\), provided the loss is Neyman-orthogonal. This result enables the development of TML estimators that directly ``target” the M-estimand for any orthogonal loss by performing minimum loss-based estimation along the least favorable submodel. In this framework, the Hessian representer \(\alpha_0\) assumes the role of the efficient influence function \(\chi_0\) in targeted maximum likelihood estimation, determining the least favorable submodel through \(\theta_0\).

When $\mathcal{H}$ is nonparametric, different loss functions may identify the same parameter $\Psi$, and Theorem~\ref{theorem::EIF} then yields the same EIF. For example, with binary outcomes, the outcome regression may be identified using working log-likelihood losses such as squared error, logistic cross-entropy, or Poisson loss. Although the EIF is unchanged, the Riesz representer $\alpha_0$ and the functional derivative $-\dot{\ell}_{\eta_0}(\theta_0)$ will typically differ. In the context of TMLE, each orthogonal loss function that identifies the target parameter gives rise to its own least favorable submodel through $\theta_0$ and its own targeted estimation procedure. When \(\mathcal{H}\) is semiparametric, different loss functions induce different nonparametric extensions of the parameter \(\Psi\) defined on the semiparametric model. These extensions may yield different EIFs and efficiency bounds. When \(\mathcal{H}\) is correctly specified for the unconstrained M-estimand, the most efficient extension typically corresponds to the log-likelihood loss (see Section~3 of \cite{van2023adaptive}). In that case, the Hessian Riesz representer typically coincides with the EIF in the restricted model \(\{P : \theta_{P,\mathrm{np}} \in \mathcal{H}\}\), where \(\theta_{P,\mathrm{np}}\) denotes the unconstrained M-estimand.

\renewcommand{\theexample}{1a}

\begin{example}[continued]
   Condition~\ref{cond::PDHessian} holds since $\partial_{\theta}^2 L_0(\theta, \eta)(h, h) = \|h\|_{\mathcal{H}}^2$.   For a linear functional $\psi_0$ of the Riesz representer $\theta_0$, the EIF term $-\dot{\ell}_{\eta_0}(\theta_0)(\alpha_0)$ equals $z \mapsto r(z, \alpha_0) - \alpha_0(x)\theta_0(x)$, where  $\alpha_0 := \argmin_{\alpha \in \mathcal{H}}  E_0[\{\alpha(X)\}^2 - 2\,m(Z, \alpha)]$. When $\theta_0$ is the outcome regression, we have $-\dot{\ell}_{\eta_0}(\theta_0)(\alpha_0)(z) =  \alpha_0(x)\{y - \theta_0(x)\}$. The von Mises expansion is doubly robust, satisfying $\psi_0(\theta) - \psi_0(\theta_0) - P_0 \dot{\ell}_{\eta}(\theta)(\alpha) =  \langle \theta - \theta_0, \alpha_0 - \alpha\rangle_{L^2(P_0)}$.  \qedsymbol
\end{example}

 \section{Automatic debiased machine learning}
\label{sec::autodml}
\subsection{Proposed estimators}

\label{sec::autodmlest}

In this section, we propose three autoDML estimators based on one-step estimation, targeted minimum loss-based estimation (TMLE), and the method of sieves. All approaches require an estimator \(\eta_n\) of the nuisance parameter \(\eta_0\). The first two methods additionally require initial estimators \(\theta_n, \alpha_n \in \mathcal{H}\) of the M-estimand \(\theta_0\) and the Hessian Riesz representer \(\alpha_0\) defined in \eqref{eqn::rieszopt}. To ensure strong theoretical guarantees under weak conditions, we recommend cross-fitting these nuisance estimators. Cross-fitting is a standard technique for mitigating overfitting and weakening complexity assumptions in nuisance estimation \citep{van2011cross, DoubleML}. Cross-fitted versions of our estimators appear in Section~\ref{sec::crossfit}.


Our first autoDML estimator of \(\Psi(P_0)\) is the one-step debiased estimator \citep{bickel1993efficient} 
\begin{equation}
    \widehat{\psi}_n^{\mathrm{dml}} := \frac{1}{n}\sum_{i=1}^n m(Z_i, \theta_n) - \frac{1}{n}\sum_{i=1}^n \dot{\ell}_{\eta_n}(\theta_n)(\alpha_n)(Z_i). \label{eqn::onestep}
\end{equation}
The second term provides an influence-function-based bias correction that, by Theorem~\ref{theorem::EIF}, removes the first-order bias \(P_0 \dot{\ell}_{\eta_n}(\theta_n)(\alpha_n)\) of the plug-in estimator. This estimator is automatic in the sense that its construction requires only the specification of the target functional \(m\) and the loss \(\ell_{\eta_n}\). The nuisance components \(\theta_0\) and \(\alpha_0\) are then estimated directly from \(\ell_{\eta_n}\) and its derivatives, as discussed in Section~\ref{sec::overview}.

A limitation of the one-step autoDML estimator \(\widehat{\psi}_n^{\mathrm{dml}}\) is that it is not a plug-in estimator; that is, there may not exist \(\theta_n^* \in \mathcal{H}\) such that \(\widehat{\psi}_n^{\mathrm{dml}} = \psi_n(\theta_n^*)\), where $\psi_n(\cdot) = \frac{1}{n}\sum_{i=1}^n m(Z_i, \cdot)$. The plug-in property ensures that estimates respect model constraints, such as probability constraints that require values to lie in \([0,1]\) and bounds on the outcome. TMLE provides a general framework for constructing efficient plug-in estimators by refining nuisance estimates to reduce bias in the target parameter \citep{laan_rubin_2006, vanderLaanRose2011}. Our proposed \emph{autoTML} estimator is given by $\widehat{\psi}_n^{\mathrm{tmle}} := \frac{1}{n} \sum_{i=1}^n m(Z_i, \theta_n^*),$
where \(\theta_n^*\) is a \emph{targeted} estimator of \(\theta_0\) constructed to solve the parameter-specific efficient score equation:
\begin{equation}
\frac{1}{n} \sum_{i=1}^n \dot{\ell}_{\eta_n}(\theta_n^*)(\alpha_n)(Z_i) = 0. \label{eqn::tmlescore}
\end{equation}
This property ensures that the influence function-based bias correction of the one-step estimator vanishes at \((\eta_n, \theta_n^*, \alpha_n)\), yielding a debiased estimator without requiring an explicit adjustment. Given an initial estimate \(\theta_n\), we construct \(\theta_n^*\) by updating along the least favorable submodel:
\[
\theta_n^* := \theta_n + \varepsilon_n \alpha_n,
\quad \text{where} \quad
\varepsilon_n = \argmin_{\varepsilon} \sum_{i=1}^n \ell_{\eta_n}(Z_i, \theta_n + \varepsilon \alpha_n),
\]
where the first-order optimality condition for \(\varepsilon_n\) is precisely \eqref{eqn::tmlescore}. Intuitively, autoTML performs a functional gradient descent step from \(\theta_n\), moving in the direction of the estimated Hessian representer \(\alpha_n\). The direction \(\alpha_n\) is an estimate of the gradient of \(\psi_0\), ensuring that the update proceeds in the most economical direction—maximizing change in \(\psi_0\) per unit decrease in risk.

To illustrate why autoTML can improve upon autoDML, we derive an approximate closed-form expression for \(\widehat{\psi}_n^{\mathrm{tmle}}\). This approximation is obtained by performing a single Newton–Raphson update from \(\theta_n\) and is asymptotically valid under the consistency of \(\theta_n\) for \(\theta_0\). To this end, assume that the loss \(\ell_{\eta}(\theta)\) admits a second derivative \(\ddot{\ell}_{\eta}(\theta)(\cdot, \cdot)\) and that \(m(z, \theta)\) has a derivative \(\dot{m}_{\theta}(z, \cdot)\). For small \(\varepsilon\), the loss difference \(\ell_{\eta_n}(\cdot, \theta_n + \varepsilon \alpha_n) - \ell_{\eta_n}(\cdot, \theta_n)\) admits the quadratic approximation \(\varepsilon \dot{\ell}_{\eta_n}(\theta_n)(\alpha_n) + \frac{1}{2} \varepsilon^2 \ddot{\ell}_{\eta_n}(\theta_n)(\alpha_n, \alpha_n)\), yielding the first-order update
\[
\theta_n^* \approx \theta_n - \frac{\sum_{i=1}^n \dot{\ell}_{\eta_n}(\theta_n)(\alpha_n)(Z_i)}{\sum_{i=1}^n \ddot{\ell}_{\eta_n}(\theta_n)(\alpha_n, \alpha_n)(Z_i)} \alpha_n.
\]
Applying a first-order Taylor expansion of \(m\) gives
\begin{equation}
\widehat{\psi}_n^{\mathrm{tmle}} \approx \frac{1}{n} \sum_{i=1}^n m(Z_i, \theta_n) - \frac{\sum_{i=1}^n \dot{m}_{\theta_n}(Z_i, \alpha_n)}{\sum_{i=1}^n \ddot{\ell}_{\eta_n}(\theta_n)(\alpha_n, \alpha_n)(Z_i)} \cdot \frac{1}{n} \sum_{i=1}^n \dot{\ell}_{\eta_n}(\theta_n)(\alpha_n)(Z_i).
\label{eqn::stableautoDML}
\end{equation}
Under this approximation, autoTML improves upon the one-step estimator by introducing the normalization factor \(\frac{\sum_{i=1}^n \dot{m}_{\theta_n}(Z_i, \alpha_n)}{\sum_{i=1}^n \ddot{\ell}_{\eta_n}(\theta_n)(\alpha_n, \alpha_n)(Z_i)}\). By the representation property, the population counterpart of this factor—obtained by replacing \(P_n\) with \(P_0\) and \(\alpha_n\) with \(\alpha_0\)—equals 1, thereby recovering the usual one-step update. Hence, it effectively stabilizes \(\alpha_n\) and serves as a calibrated step size for the one-step update. This expression is exact when the risk is quadratic and the functional is linear, as illustrated in the following example.

 \renewcommand{\theexample}{1a}

\begin{example}[continued]
For a linear functional of a Riesz representer \( \theta_0 \), an autoDML estimator is given by $\widehat{\psi}_n^{\mathrm{dml}} := \frac{1}{n} \sum_{i=1}^n m(Z_i, \theta_n) + \frac{1}{n} \sum_{i=1}^n \left\{ r(Z_i, \alpha_n(X_i)) - \theta_n(X_i)\alpha_n(X_i) \right\},$ where, in the special case where \( \theta_0 \) is the outcome regression, this reduces to $\widehat{\psi}_n^{\mathrm{dml}} := \frac{1}{n} \sum_{i=1}^n m(Z_i, \theta_n) + \frac{1}{n} \sum_{i=1}^n \alpha_n(X_i) \{ Y_i - \theta_n(X_i) \}.$ The fluctuation parameter for autoTML is given by $\varepsilon_n = \argmin_{\varepsilon} \sum_{i=1}^n \{\theta_n(X_i) + \varepsilon \alpha_n(X_i)\}^2 - 2 \varepsilon r(Z_i, \alpha_n(X_i)),$ and the autoTML estimator $\widehat{\psi}_n^{\mathrm{tmle}}$ has the closed-form expression: $\frac{1}{n} \sum_{i=1}^n m(Z_i, \theta_n) + \delta_n^* \cdot \frac{1}{n} \sum_{i=1}^n \{r(Z_i, \alpha_n(X_i)) - \theta_n(X_i)\alpha_n(X_i)\}$, where $\delta_n^* := \frac{\sum_{i=1}^n m(Z_i, \alpha_n)}{\sum_{i=1}^n (\alpha_n(X_i))^2}$ is a normalization factor. \qed
\end{example}


An alternative approach to constructing debiased plug-in estimators is the method of sieves \citep{shen1997methods, spnpsieve, sieveOneStepPlugin, SieveQiu, discussionMLEMark, undersmoothedHAL}, a classical technique for efficient estimation in semiparametric models. For a sequence \(k(n) \to \infty\), this method approximates \(\mathcal{H}\) by the finite-dimensional subspace \(\mathcal{H}_{k(n)}\), taken from a sieve—a nested sequence \(\mathcal{H}_1 \subset \mathcal{H}_2 \subseteq \dots \subseteq \mathcal{H}\) of finite-dimensional subspaces whose union is dense in \(\mathcal{H}\). Given a sieve, our autoSieve estimator is defined as the plug-in estimator \(\widehat{\psi}_n^{\mathrm{sieve}} := \frac{1}{n} \sum_{i=1}^n m(Z_i, \theta_{n,k(n)})\), where, for each \( k \in \mathbb{N} \),
\[
\theta_{n,k} := \argmin_{\theta \in \mathcal{H}_{k}} \sum_{i=1}^n \ell_{\eta_n}(Z_i, \theta),
\]
and the sieve dimension \( k(n) \) is selected automatically as described below.

By the first-order optimality conditions for empirical risk minimization, \(\theta_{n,k(n)}\) satisfies the score equations
\[
\frac{1}{n} \sum_{i=1}^n \dot{\ell}_{\eta_n}(\theta_{n,k(n)})(\alpha)(Z_i) = 0
\qquad \text{for all } \alpha \in \mathcal{H}_{k(n)}.
\]
In particular, taking \(\alpha\) to be the projection of \(\alpha_0\) onto \(\mathcal{H}_{k(n)}\) with respect to the Hessian inner product yields
\[
\frac{1}{n} \sum_{i=1}^n \dot{\ell}_{\eta_n}(\theta_{n,k(n)})(\alpha_{0,k(n)})(Z_i) = 0,
\qquad
\alpha_{0,k} := \arg\min_{\alpha \in \mathcal{H}_k}
\partial_{\theta}^2 L_0(\theta_0,\eta_0)(\alpha - \alpha_0, \alpha - \alpha_0).
\]
It follows that the autoSieve estimator \(\widehat{\psi}_n^{\mathrm{sieve}}\) coincides with the one-step estimator constructed using the nuisance estimates \((\eta_n, \theta_{n,k(n)}, \alpha_{0,k(n)})\), and is therefore debiased without any additional correction. To ensure that the asymptotic linearity of \(\widehat{\psi}_n^{\mathrm{sieve}}\) is not distorted by plug-in bias, the sieve dimension \(k(n)\) must increase sufficiently quickly so that both \(\theta_{n,k(n)}\) and \(\alpha_{0,k(n)}\) converge to \(\theta_0\) and \(\alpha_0\), respectively, at suitable rates. This typically requires \emph{undersmoothing}, in the sense that \(k(n)\) grows faster than would be optimal for estimating \(\theta_0\) alone. In practice, however, there is little guidance on how much undersmoothing is needed. Our proposed autoSieve procedure is designed to address this issue.

We now describe the automatic undersmoothing procedure used to select the sieve dimension \(k(n)\). Given an independent validation sample \(\{Z_i'\}_{i \in [n']}\), the sieve dimension is defined as \(k(n) := \max\{k_\theta(n), k_{\alpha}(n)\}\), where \(k_\theta(n)\) and \(k_{\alpha}(n)\) are data-adaptive sequences given by
\begin{align*}
    k_\theta(n) &:= \arg\min_{k \in \mathbb{N}} \sum_{i=1}^{n'} \ell_{\eta_n}(Z_i', \theta_{n,k}), \hspace{2em}
    k_{\alpha}(n) := \arg\min_{k \in \mathbb{N}} \sum_{i=1}^{n'} \left\{ \ddot{\ell}_{\eta_n}(\theta_{n,k_\theta(n)})(\alpha_{n,k}, \alpha_{n,k})(Z_i') - 2\,\dot{m}_{\theta_{n,k_\theta(n)}}(Z_i', \alpha_{n,k}) \right\},
\end{align*}
where, for each \(k \in \mathbb{N}\), \(\alpha_{n,k} := \arg\min_{\alpha \in \mathcal{H}_k} \frac{1}{n} \sum_{i=1}^{n} \left\{ \ddot{\ell}_{\eta_n}(\theta_{n,k_\theta(n)})(\alpha, \alpha)(Z_i) - 2\,\dot{m}_{\theta_n}(Z_i, \alpha) \right\}\) is an estimate of the projected representer \(\alpha_{0,k}\). The sieve dimension \(k_\theta(n)\) optimizes estimation of the \emph{M}-estimand \(\theta_0\), while \(k_\alpha(n)\) optimizes estimation of the Hessian representer \(\alpha_0\). By setting \(k(n) = \max\{k_\theta(n), k_\alpha(n)\}\), the sieve estimator is automatically undersmoothed and ensures that \(\mathcal{H}_{k(n)} = \mathcal{H}_{k_\theta(n)} \oplus \mathcal{H}_{k_\alpha(n)}\) approximates both \(\theta_0\) and \(\alpha_0\) well. For example, if \(\alpha_0\) is less smooth than \(\theta_0\), the sieve space \(\mathcal{H}_{k_\theta(n)}\) may poorly approximate \(\alpha_0\), inducing bias. In such cases, we typically have \(k(n) = k_\alpha(n)\) with high probability, ensuring that \(\mathcal{H}_{k(n)}\) captures the complexity of \(\alpha_0\) \citep{qiu2021universal}. Although we use an external sample, the sieve dimensions could also be selected from the same data using sample-splitting or cross-validation.

\subsection{Cross-fitting and algorithms for autoDML}
\label{sec::crossfit}

We now describe how cross-fitting can be incorporated into the autoDML framework to mitigate overfitting and relax constraints on the complexity of the nuisance estimators \citep{van2011cross, DoubleML}. A key challenge is that both \(\theta_0\) and \(\alpha_0\) depend on other nuisance functions, which must also be cross-fitted. In Algorithm~\ref{alg:crossfit}, we propose an efficient procedure that requires each nuisance function to be cross-fitted only once. To simplify estimation of the Riesz representer, we assume \(\partial_\theta^2 L_0(\theta_0, \eta_0)(\alpha, \alpha) = P_0\, \ddot{\ell}_{\eta_0}(\theta_0)(\alpha, \alpha)\) and \(\dot{\psi}_0(\theta_0)(\alpha) = P_0\, \dot{m}_{\theta_0}(\cdot, \alpha)\), for some functions \(\ddot{\ell}_{\eta_0}(\theta_0)\) and \(\dot{m}_{\theta_0}\), so that we can use the loss function \((z, \alpha) \mapsto \ddot{\ell}_{\eta_0}(\theta_0)(\alpha, \alpha)(z) - 2\, \dot{m}_{\theta_0}(z, \alpha)\).

\begin{algorithm}[!htb]
\begin{algorithmic}[1]
{\small
\caption{Cross-fit nuisance estimation for autoDML}  \label{alg:crossfit}
 
 \vspace{.1in}
\INPUT dataset $\mathcal{D}_n = \{Z_i: i=1,\ldots,n\}$, nuisance estimation algorithm $\mathcal{A}_{\eta}$, number $J$ of cross-fitting splits
\newline \hspace{-0.5em} \texttt{\# Compute data splits.}
\STATE partition $\mathcal{D}_n$ into datasets $\mathcal{T}^{(1)},\mathcal{T}^{(2)},\ldots,\mathcal{T}^{(J)}$;
\STATE for {$s = 1,\ldots,J$}, set $j(i):=s$ for each $i\in \mathcal{T}^{(s)}$ and $\mathcal{I}^{(s)} := \{i \in [n]: Z_i \in  \mathcal{T}^{(s)}\}$;
\newline \hspace{-0.5em} \texttt{\# Cross-fit nuisance function}
\FOR {$s = 1,\ldots,J$}
\STATE get estimator $\eta_{n,s} := \mathcal{A}_{\eta}( \mathcal{D}_n \backslash \mathcal{T}^{(s)})$ of $\eta_0$ using $\mathcal{A}_{\eta}$ from $ \mathcal{D}_n \backslash \mathcal{T}^{(s)}$;
\ENDFOR
\newline \hspace{-0.5em}  \texttt{\# Cross-fit M-estimand.}
\FOR {$s = 1,\ldots,J$}
\STATE {get estimator $\theta_{n,s}$ of $\theta_0$ from $\mathcal{D}_n \backslash \mathcal{T}^{(s)}$ using the cross-fitted orthogonal risk:
    \mbox{\footnotesize $\theta \mapsto \sum_{i \in [n] \backslash \mathcal{I}^{(s)}} \ell_{\eta_{n,j(i)}}(Z_i, \theta);$}}
\ENDFOR
\newline \hspace{-0.5em} \texttt{\# Cross-fit Riesz representer}
\FOR {$s = 1,\ldots,J$}
\STATE get estimator $\alpha_{n,s}$ of $\alpha_0$ from $\mathcal{D}_n \backslash \mathcal{T}^{(s)}$ using the cross-fitted Riesz risk:
    $$\alpha \mapsto \sum_{i \in [n] \backslash \mathcal{I}^{(s)}} \left\{ \ddot{\ell}_{\eta_{n,j(i)}}(\theta_{n,j(i)})(Z_i, \alpha) - 2\,\dot{m}_{\theta_{n,j(i)}}(Z_i, \alpha) \right\};$$
        \vspace{-0.5cm}
\ENDFOR
\RETURN Cross-fit estimators $\{\eta_{n,j},\theta_{n,j},  \alpha_{n,j} : 1 \leq j \leq J\}$
}
\end{algorithmic}
\vspace{.05in}
\end{algorithm}

Algorithm~\ref{alg:crossfit} requires constructing cross-fitted estimators for \(\eta_0\), \(\theta_0\), and \(\alpha_0\) only once, with each depending on previously cross-fitted nuisances. However, since the cross-fitted nuisances \(\{\eta_{n,j}: 1 \leq j \leq J\}\) used in the loss functions for \(\theta_0\) and \(\alpha_0\) are constructed from the full dataset, some data leakage occurs across training folds. Additional leakage arises when estimating \(\alpha_0\), as its objective also involves \(\{\theta_{n,j}: 1 \leq j \leq J\}\), likewise based on the full dataset. While this leakage could be avoided by re-cross-fitting nuisances within each fold, it would significantly increase computational cost. Similar leakage appears in Algorithm~3 of \cite{van2024combining} in the context of cross-validation with nuisances, and empirically, it did not affect estimator performance.

With cross-fitted nuisance estimates computed via Algorithm~\ref{alg:crossfit}, we introduce our cross-fitted (stabilized) autoDML and autoTML estimators in Algorithms~\ref{alg:autoDML} and~\ref{alg:autoTML}, respectively. Both follow the debiasing strategy in Section~\ref{sec::autodmlest}, differing only in that pooled out-of-fold estimates are used in the debiasing step.

\begin{algorithm}[!htb]
\begin{algorithmic}[1]
{\small
\caption{autoDML: automatic debiased machine learning with cross-fitting} \label{alg:autoDML}
 \vspace{.1in}
\INPUT dataset $\mathcal{D}_n = \{Z_i: i=1,\ldots,n\}$, nuisance estimation algorithm $\mathcal{A}_{\eta}$, number $J$ of cross-fitting splits
\newline \hspace{-0.5em} \texttt{\# Cross-fit nuisances.}
\STATE obtain cross-fitted estimators $\{\eta_{n,j}, \theta_{n,j},  \alpha_{n,j} : 1 \leq j \leq J\}$ from Alg. \ref{alg:crossfit} with $\mathcal{D}_n$, $\mathcal{A}_{\eta}$, and $J$;
\newline \hspace{-0.5em} \texttt{\# Optional TMLE-inspired stabilization via \eqref{eqn::stableautoDML}}
\STATE for {$s = 1,\ldots,J$}, set $\alpha_{n,s} := \varepsilon_n \alpha_{n,s}$ with $\varepsilon_n := \frac{\sum_{i=1}^n \dot{m}_{\theta_{n,j(i)}}(Z_i, \alpha_{n,j(i)})}{\sum_{i=1}^n \ddot{\ell}_{\eta_{n,j(i)}}(\theta_{n,j(i)})(\alpha_{n,j(i)}, \alpha_{n,j(i)})(Z_i)}$.
\newline \hspace{-0.5em} \texttt{\# Compute one-step estimator}
\STATE set $\widehat{\psi}_n^{\mathrm{dml}} := \frac{1}{n}\sum_{=1}^n m(Z_i, \theta_{n, j(i)}) -    \frac{1}{n}\sum_{=1}^n \dot{\ell}_{n,j(i)}(\theta_{n, j(i)})(\alpha_{n,j(i)})(Z_i)$
\RETURN autoDML estimate $\widehat{\psi}_n^{\mathrm{dml}}$
}
\end{algorithmic}
\vspace{.05in}
\end{algorithm}

 \begin{algorithm}[!htb]
\begin{algorithmic}[1]
{\small
\caption{autoTML: automatic targeted machine learning with cross-fitting} \label{alg:autoTML}
 \vspace{.1in}
\INPUT dataset $\mathcal{D}_n = \{Z_i: i=1,\ldots,n\}$, nuisance estimation algorithm $\mathcal{A}_{\eta}$, number $J$ of cross-fitting splits
\newline \hspace{-0.5em} \texttt{\# Cross-fit nuisances.}
\STATE obtain cross-fitted estimators $\{\eta_{n,j}, \theta_{n,j},  \alpha_{n,j} : 1 \leq j \leq J\}$ from Alg. \ref{alg:crossfit} with $\mathcal{D}_n$, $\mathcal{A}_{\eta}$, and $J$;
\newline \hspace{-0.5em} \texttt{\# Targeting step.}
\STATE compute fluctuation $\varepsilon_n := \argmin_{\varepsilon}   \sum_{i=1}^n \ell_{\eta_{n,j(i)}}(Z_i, \theta_{n,j(i)} + \varepsilon \alpha_{n,j(i)})$
\STATE for {$s = 1,\ldots,J$}, set $\theta_{n,s}^* := \theta_{n,s} +\varepsilon_n \alpha_{n,s} $;
\newline \hspace{-0.5em} \texttt{\# Compute plug-in estimator.}
\STATE set $\widehat{\psi}_n^{\mathrm{tmle}} := \frac{1}{n}\sum_{=1}^n m(Z_i, \theta_{n, j(i)}^*)$
\RETURN autoTML estimate $\widehat{\psi}_n^{\mathrm{tmle}}$
}
\end{algorithmic}
\vspace{.05in}
\end{algorithm}

\section{Theoretical results}
\label{sec::theory}
\subsection{Asymptotic theory for autoDML}

\label{sec::theoryautodml}

In this section, we establish that the proposed autoDML estimators are regular, asymptotically linear, and semiparametrically efficient for \(\Psi(P_0)\). We first present a general result establishing these properties for the autoDML estimator \(\widehat{\psi}_n^{\mathrm{dml}}\). Building on this result, we derive analogous guarantees for the autoTML and autoSieve estimators, \(\widehat{\psi}_n^{\mathrm{tmle}}\) and \(\widehat{\psi}_n^{\mathrm{sieve}}\).

Denote the EIF estimate corresponding to $\theta_n$, $\eta_n$, and $\alpha_n$ by $\chi_n := m(\cdot, \theta_n) - P_n m(\cdot, \theta_n) - \dot{\ell}_{\eta_n}(\theta_n)(\alpha_n)$. Our main theorem, which we now present, will make use of the following conditions.

\begin{enumerate}[label=\bf{B\arabic*)}, ref={B\arabic*}, series=cond1]
\item \textit{Bounded estimators:} \label{cond::boundnuis} With probability tending to one, $\eta_n \in \mathcal{N}_{\mathcal{P}}$, $\theta_n \in \mathcal{H}_{\mathcal{P}}$, and $\rho_{\mathcal{H}}(\alpha_n) = O_p(1)$.
  \item \textit{Nuisance estimation rate:}  \label{cond::nuisancerate}  $\|\eta_n - \eta_0\|_{\mathcal{N}} = o_p(n^{-\frac{1}{4}})$.
\item \textit{M-estimation rate:} \label{cond::targetrate} One of the following holds:
\begin{enumerate}
    \item[(i)] \(\psi_0\) is linear and the risk is quadratic \((\delta_{\mathrm{lin}} = \delta_{\mathrm{quad}} = 0)\);
    \item[(ii)] \(\|\theta_n - \theta_0\|_{\mathcal{H}} = o_p(n^{-1/4})\).
\end{enumerate}
\item \textit{Doubly robust rate:}  \label{cond::DRrate}   $\partial_{\theta}^2 L_0(\theta_0, \eta_0)(\alpha_0 - \alpha_n,\theta_n - \theta_0) + \|\theta_n - \theta_0\|_{\mathcal{H}}\|\eta_n - \eta_0\|_{\mathcal{N}}= o_p(n^{-1/2})$.
  \item \textit{Empirical process condition:} \label{cond::empproc} $ (P_n-P_0)(\chi_n - \chi_0) = o_p(n^{-1/2})$
\end{enumerate}

\begin{theorem}
\label{theorem::limitautoDML}
     Assume Conditions \ref{cond::uniqueness}-\ref{cond::boundedtrue} and \ref{cond::boundnuis}-\ref{cond::empproc}. Then, the autoDML estimator   $\widehat{\psi}_n^{\mathrm{dml}}$ satisfies the asymptotically linear expansion: $ \widehat{\psi}_n^{\mathrm{dml}} - \Psi(P_0) = P_n \chi_0 + o_p(n^{-1/2}).$
    Consequently, $\widehat{\psi}_n^{\mathrm{dml}}$ is a regular and efficient estimator of $\Psi(P_0)$, and $n^{1/2}\{\widehat{\psi}_n^{\mathrm{dml}} - \Psi(P_0)\} \rightarrow_d N(0, \operatorname{var}_0(\chi_0(Z))).$
\end{theorem}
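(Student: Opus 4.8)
The plan is to establish the asymptotically linear expansion by splitting the estimator error into an empirical process term, controlled by Condition~\ref{cond::empproc}, and a deterministic drift (bias) term, controlled by the functional von Mises expansion of Theorem~\ref{theorem::vonmises} together with the rate conditions. Write $D_n := m(\cdot, \theta_n) - \dot{\ell}_{\eta_n}(\theta_n)(\alpha_n)$ and $D_0 := m(\cdot, \theta_0) - \dot{\ell}_{\eta_0}(\theta_0)(\alpha_0)$, so that $\widehat{\psi}_n^{\mathrm{dml}} = P_n D_n$ and, by Theorem~\ref{theorem::EIF}, $\chi_0 = D_0 - \Psi(P_0)$. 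I would first decompose
\[
\widehat{\psi}_n^{\mathrm{dml}} - \Psi(P_0) = (P_n - P_0)(D_n - D_0) + (P_n - P_0) D_0 + \left\{ P_0 D_n - \Psi(P_0) \right\}.
\]

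For the empirical process term $(P_n - P_0)(D_n - D_0)$, observe that $\chi_n$ and $D_n$ differ only by the data-dependent constant $P_n m(\cdot, \theta_n)$, while $\chi_0$ and $D_0$ differ by the constant $\Psi(P_0)$; hence $D_n - D_0 - (\chi_n - \chi_0)$ is constant in $z$, and applying $(P_n - P_0)$ to a constant yields zero. Therefore $(P_n - P_0)(D_n - D_0) = (P_n - P_0)(\chi_n - \chi_0) = o_p(n^{-1/2})$ by Condition~\ref{cond::empproc}. The middle term is the efficient score: since $P_0 \chi_0 = 0$, we have $(P_n - P_0) D_0 = (P_n - P_0)\chi_0 = P_n \chi_0$.

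The heart of the argument is the drift term. Using $P_0 m(\cdot, \theta_n) = \psi_0(\theta_n)$ and $\Psi(P_0) = \psi_0(\theta_0)$, rewrite it as $P_0 D_n - \Psi(P_0) = \{\psi_0(\theta_n) - \psi_0(\theta_0)\} - P_0 \dot{\ell}_{\eta_n}(\theta_n)(\alpha_n)$. On the event from Condition~\ref{cond::boundnuis}, which holds with probability tending to one, one has $\theta_n \in \mathcal{H}_{\mathcal{P}}$, $\eta_n \in \mathcal{N}_{\mathcal{P}}$, and $\rho_{\mathcal{H}}(\alpha_n) < M$ for some finite $M$, so the expansion of Theorem~\ref{theorem::vonmises} applies at $(\theta, \eta, \alpha) = (\theta_n, \eta_n, \alpha_n)$; its leading term $P_0 \dot{\ell}_{\eta_n}(\theta_n)(\alpha_n)$ cancels, leaving
\[
P_0 D_n - \Psi(P_0) = \partial_\theta^2 L_0(\theta_0, \eta_0)(\alpha_0 - \alpha_n, \theta_n - \theta_0) + (I_{\mathrm{lin}} + I_{\mathrm{quad}})\, O\!\left(\|\theta_n - \theta_0\|_{\mathcal{H}}^2\right) + O\!\left(\|\eta_n - \eta_0\|_{\mathcal{N}}\|\theta_n - \theta_0\|_{\mathcal{H}}\right) + O\!\left(\|\eta_n - \eta_0\|_{\mathcal{N}}^2\right).
\]
Since Theorem~\ref{theorem::vonmises} guarantees the big-O constants depend only on $M$ and universal constants, each remainder is $o_p(n^{-1/2})$: the doubly robust term and the mixed term by Condition~\ref{cond::DRrate}; the quadratic term either vanishes (when $I_{\mathrm{lin}} = I_{\mathrm{quad}} = 0$) or is $o_p(n^{-1/2})$ under the $n^{-1/4}$-rate of Condition~\ref{cond::targetrate}(ii); and $O(\|\eta_n - \eta_0\|_{\mathcal{N}}^2)$ by Condition~\ref{cond::nuisancerate}. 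Collecting the three pieces gives $\widehat{\psi}_n^{\mathrm{dml}} - \Psi(P_0) = P_n \chi_0 + o_p(n^{-1/2})$.

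Finally, because $\chi_0 \in L^2(P_0)$ with $P_0 \chi_0 = 0$ and finite variance, the central limit theorem yields $n^{1/2}\{\widehat{\psi}_n^{\mathrm{dml}} - \Psi(P_0)\} \rightarrow_d N(0, \operatorname{var}_0(\chi_0(Z)))$. Regularity and efficiency follow because, by Theorem~\ref{theorem::EIF}, the influence function $\chi_0$ equals the efficient influence function of $\Psi$; an asymptotically linear estimator whose influence function is the canonical gradient is regular and attains the efficiency bound by the convolution theorem of standard semiparametric theory. The main obstacle I anticipate is the rigorous transfer of the \emph{deterministic} von Mises expansion to the \emph{random} nuisances $(\theta_n, \eta_n, \alpha_n)$: one must ensure the hypotheses $\theta_n \in \mathcal{H}_{\mathcal{P}}$, $\eta_n \in \mathcal{N}_{\mathcal{P}}$, and $\rho_{\mathcal{H}}(\alpha_n) < M$ hold on an event of probability tending to one via Condition~\ref{cond::boundnuis}, with remainder constants uniform in the nuisances, so that the big-O bounds legitimately convert into $o_p(n^{-1/2})$ statements under the rate conditions.
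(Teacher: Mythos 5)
Your proposal is correct and follows essentially the same route as the paper: the paper's proof goes through its Lemma~\ref{lemma::biasexpansion}, which yields exactly your three-term decomposition $P_n\chi_0 + (P_n-P_0)(\chi_n-\chi_0) + \{\psi_0(\theta_n)-\psi_0(\theta_0)-P_0\dot{\ell}_{\eta_n}(\theta_n)(\alpha_n)\}$, and then, as you do, controls the empirical process term via Condition~\ref{cond::empproc} and the remainder via Theorem~\ref{theorem::vonmises} under Conditions~\ref{cond::boundnuis}--\ref{cond::DRrate}. Your observation that $D_n - D_0$ and $\chi_n - \chi_0$ differ only by a $z$-constant, and your handling of the boundedness event from Condition~\ref{cond::boundnuis}, match the paper's (implicit) treatment of the same points.
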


For linear functionals and quadratic risk functions (\( \delta_{\mathrm{lin}} = \delta_{\mathrm{quad}} = 0 \)), autoDML estimators are \textit{doubly robust}: they are consistent if either \( \|\alpha_n - \alpha_0\|_{\mathcal{H}} = o_p(1) \) or \( \|\theta_n - \theta_0\|_{\mathcal{H}} = o_p(1) \), provided that \( \|\eta_n - \eta_0\|_{\mathcal{N}} = o_p(1) \) and \( (P_n - P_0)(\chi_n - \chi_0) = o_p(1) \). This result extends the well-known double robustness of DML estimators for linear functionals of regression functions to a broad class of orthogonal loss functions, as in Example~\ref{example::orthoLS}. For such functionals, setting \( \theta_n = 0 \) in \( \widehat{\psi}_n \), double robustness implies that the expected gradient estimator \( -\frac{1}{n} \sum_{i=1}^n \dot{\ell}_{\eta_n}(0)(\alpha_n)(Z_i) \) is consistent for \( \psi_0(\theta_0) \) when \( \|\alpha_n - \alpha_0\|_{\mathcal{H}} = o_p(1) \). This recovers consistency of inverse probability weighting and balancing estimators as a special case \citep{li2018balancing, ben2021balancing}.

While autoDML estimators are efficient for the projection parameter \(\Psi\) in the nonparametric model \(\mathcal{P}\), they may not be efficient under semiparametric models for \(P_0\) induced by \(\mathcal{H}\). For example, if \(\mathcal{P}_{\mathcal{H}}\) denotes the semiparametric model in which the outcome regression lies in \(\mathcal{H}\), such as a partially linear model, the autoDML estimator may be inefficient relative to this model. Nonetheless, it remains regular and asymptotically linear.  Under such model restrictions, the semiparametric efficiency of autoDML depends on the choice of loss function. The most efficient choice ensures that the nonparametric EIF for \(\Psi\) coincides with that of the restricted model. This alignment occurs when \(\ell_\eta\) is chosen as the negative log-likelihood, since the resulting EIF lies in the tangent space of the restricted model and thus matches the semiparametric EIF  (see Section~3 of \cite{van2023adaptive}).

We briefly discuss the conditions here and provide a more detailed treatment in Appendix~\ref{appendix::conditions}. Conditions \ref{cond::nuisancerate} and \ref{cond::targetrate} impose that the estimators $\eta_n$ and $\theta_n$ are consistent at a rate faster than $n^{-\frac{1}{4}}$. Such rate conditions are standard in semiparametric statistics and can be achieved under appropriate conditions by several algorithms, including generalized additive models \citep{GAMhastie1987generalized}, the highly adaptive lasso \citep{vanderlaanGenerlaTMLE}, gradient-boosted trees \citep{schuler2023lassoedtreeboosting}, and neural networks \citep{Farrell2018DeepNN}. Condition~\ref{cond::DRrate} is a doubly robust rate condition, where the first rate requirement is satisfied whenever \( \|\alpha_n - \alpha_0\|_{\mathcal{H}} \, \|\theta_n - \theta_0\|_{\mathcal{H}} = o_p(n^{-1/2}) \) by \ref{cond::targetsmoothloss::two}. Condition~\ref{cond::empproc} is a standard empirical process condition that holds when \( \|\chi_n - \chi_0\|_{L^2(P_0)} = o_p(1) \) and the nuisance estimators are constructed using appropriate cross-fitting techniques \citep{van2011cross, DoubleML}.

Noting that the autoTML and autoSieve estimators correspond to one-step DML estimators for appropriate choices of nuisance estimators, Theorem \ref{theorem::limitautoDML} gives the following corollary.
\begin{corollary}
\label{corollary::autoplug}
  Assume Conditions \ref{cond::uniqueness}-\ref{cond::boundedtrue}.  \begin{enumerate}
      \item[(i)] Suppose that Conditions \ref{cond::boundnuis}-\ref{cond::empproc} hold with $(\eta_n, \theta_n^*, \alpha_n)$ as estimators of $(\eta_0, \theta_0, \alpha_0)$. Then, the autoTML estimator $\widehat{\psi}_n^{\mathrm{tmle}}$ is a regular, asymptotically linear, and efficient estimator of $\Psi(P_0)$ with influence function $\chi_0$. 
      \item[(ii)]  Suppose that Conditions \ref{cond::boundnuis}-\ref{cond::empproc} hold with $(\eta_n, \theta_{n,k(n)}, \alpha_{0,k(n)})$ as estimators of $(\eta_0, \theta_0, \alpha_0)$. Then, the autoSieve estimator $\widehat{\psi}_n^{\mathrm{sieve}}$ is a regular, asymptotically linear, and efficient estimator of $\Psi(P_0)$ with influence function $\chi_0$. 
  \end{enumerate}

\end{corollary}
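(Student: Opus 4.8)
The plan is to prove both parts as immediate consequences of Theorem~\ref{theorem::limitautoDML} by establishing the purely algebraic fact---already flagged in the text preceding the corollary---that each estimator coincides exactly with a one-step autoDML estimator $\widehat{\psi}_n^{\mathrm{dml}}$ evaluated at the stated nuisance triple. The mechanism is the same in both cases: the relevant estimator of $\theta_0$ is constructed so that the empirical efficient score equation holds for the accompanying representer, which forces the explicit bias-correction term in \eqref{eqn::onestep} to vanish and renders the plug-in estimator identical to the one-step estimator. Once this identification is in place, the asymptotically linear expansion, regularity, and efficiency follow verbatim from Theorem~\ref{theorem::limitautoDML} under the per-part hypotheses \ref{cond::boundnuis}--\ref{cond::empproc} together with the standing assumptions \ref{cond::uniqueness}--\ref{cond::boundedtrue}.

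For part (i), I would first record the first-order optimality condition for the fluctuation $\varepsilon_n = \argmin_{\varepsilon} \sum_{i=1}^n \ell_{\eta_n}(Z_i, \theta_n + \varepsilon \alpha_n)$. Differentiating in $\varepsilon$ and invoking the loss-derivative representation of Condition~\ref{cond::targetsmoothloss::one} yields $\sum_{i=1}^n \dot{\ell}_{\eta_n}(\theta_n^*)(\alpha_n)(Z_i) = 0$; that is, the targeted estimator $\theta_n^* = \theta_n + \varepsilon_n \alpha_n$ solves the efficient score equation \eqref{eqn::tmlescore} exactly. Substituting $\theta_n^*$ for $\theta_n$ in \eqref{eqn::onestep} annihilates the correction term $-\tfrac{1}{n}\sum_{i=1}^n \dot{\ell}_{\eta_n}(\theta_n^*)(\alpha_n)(Z_i)$, so $\widehat{\psi}_n^{\mathrm{tmle}} = \tfrac{1}{n}\sum_{i=1}^n m(Z_i, \theta_n^*)$ equals $\widehat{\psi}_n^{\mathrm{dml}}$ computed from the triple $(\eta_n, \theta_n^*, \alpha_n)$. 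Theorem~\ref{theorem::limitautoDML}, applied with $\theta_n^*$ playing the role of $\theta_n$, then gives the expansion $\widehat{\psi}_n^{\mathrm{tmle}} - \Psi(P_0) = P_n \chi_0 + o_p(n^{-1/2})$ and hence efficiency.

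For part (ii), I would use the first-order conditions of empirical risk minimization over the sieve subspace. Since $\theta_{n,k(n)}$ minimizes $\theta \mapsto \sum_{i=1}^n \ell_{\eta_n}(Z_i, \theta)$ over $\mathcal{H}_{k(n)}$, we have $\tfrac{1}{n}\sum_{i=1}^n \dot{\ell}_{\eta_n}(\theta_{n,k(n)})(h)(Z_i) = 0$ for every $h \in \mathcal{H}_{k(n)}$. Because the projected representer $\alpha_{0,k(n)}$ lies in $\mathcal{H}_{k(n)}$ by construction, the choice $h = \alpha_{0,k(n)}$ shows the efficient score equation holds for this representer, so the one-step correction built from $(\eta_n, \theta_{n,k(n)}, \alpha_{0,k(n)})$ again vanishes and $\widehat{\psi}_n^{\mathrm{sieve}}$ equals the corresponding one-step estimator. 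Applying Theorem~\ref{theorem::limitautoDML} with $(\eta_n, \theta_{n,k(n)}, \alpha_{0,k(n)})$ in place of $(\eta_n, \theta_n, \alpha_n)$ completes the argument.

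The substantive content, and the only point requiring genuine care, is not the algebraic reduction but the verification that Conditions~\ref{cond::boundnuis}--\ref{cond::empproc} actually hold for the \emph{derived} nuisance triples---which is precisely what each part of the corollary assumes, so the main obstacle is deferred to those hypotheses rather than resolved here. In applications one must still confirm, for autoTML, that the targeting step does not degrade the rate of $\theta_n^*$ relative to $\theta_n$ (so that \ref{cond::targetrate} and the doubly robust rate \ref{cond::DRrate} survive the fluctuation) and that $\chi_n$ formed from $\theta_n^*$ obeys the empirical process condition~\ref{cond::empproc}; and, for autoSieve, that the automatic undersmoothing keeps both the projection error $\alpha_0 - \alpha_{0,k(n)}$ and the sieve bias $\theta_{n,k(n)} - \theta_0$ small enough to satisfy~\ref{cond::DRrate} and~\ref{cond::empproc}. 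Under the stated assumptions these are exactly the hypotheses of Theorem~\ref{theorem::limitautoDML}, so the corollary follows.
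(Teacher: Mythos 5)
Your proposal is correct and follows exactly the paper's own route: both estimators are identified with one-step autoDML estimators because the empirical efficient score equation (the first-order condition for the fluctuation $\varepsilon_n$ in the autoTML case, and the first-order conditions of sieve empirical risk minimization with $h=\alpha_{0,k(n)}\in\mathcal{H}_{k(n)}$ in the autoSieve case) forces the correction term in \eqref{eqn::onestep} to vanish, after which Theorem~\ref{theorem::limitautoDML} applies verbatim with the stated nuisance triples. Your closing remarks on where the substantive verification burden lies match the paper's framing as well.
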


Cross-validation oracle inequalities developed in \cite{van2003unified} and \cite{laan2006cross} establish the validity of the automatic undersmoothing approach used in autoSieve (see also Theorem~4 of \cite{qiu2021universal}). In particular, under mild assumptions, these results show that the selected model \(\mathcal{H}_{k(n)}\) provides sufficiently accurate approximations of both \(\theta_0\) and \(\alpha_0\) through the projections \(\theta_{0,k(n)}\) and \(\alpha_{0,k(n)}\), ensuring that both Conditions~\ref{cond::targetrate} and~\ref{cond::DRrate} hold. Corollary \ref{corollary::autoplug} generalizes classical results on debiased plug-in estimation using the method of sieves with known loss functions to orthogonal loss functions that may depend on estimated nuisance functions \citep{shen1997methods, sieveOneStepPlugin, spnpsieve, qiu2021universal}. Two-stage sieve estimation methods with loss functions that depend on nuisances were considered in \cite{sieveTwoStepPlugin}; however, the authors propose using non-orthogonal loss functions and estimating the nuisance parameters themselves with sieve estimators. In contrast, autoSieve uses Neyman-orthogonal loss functions to leverage generic machine learning techniques for the estimation of the nuisance function $\eta_0$. This flexibility is important because $\eta_0$ may be more complex --- e.g., higher dimensional and less smooth --- than the M-estimand $\theta_0$, and therefore less amenable to estimation at sufficiently fast rates to satisfy \ref{cond::nuisancerate} using the method of sieves.

Given a finite-dimensional model $\mathcal{H}$, the plug-in M-estimator $\frac{1}{n}\sum_{i=1}^n m(Z_i, \theta_n)$ with $\theta_n := \argmin_{\theta \in \mathcal{H}} \sum_{i=1}^n \ell_{\eta_n}(Z_i, \theta)$ is a special case of an autoSieve estimator, with the trivial sieve $\mathcal{H}_k := \mathcal{H}$ for each $k \in \mathbb{N}$. As a consequence, Corollary \ref{corollary::autoplug} implies that plug-in M-estimators based on Neyman-orthogonal loss functions with estimated nuisances are, under mild conditions, asymptotically linear and nonparametrically efficient for the associated projection parameter $\Psi$. In this case, the limiting variance of the plug-in M-estimator can be consistently estimated using either the sandwich variance estimator \citep{white1982MLErobust, kauermann2000sandwich} or the bootstrap \citep{tang2024consistency}, based on the loss $\ell_{\eta_n}$ and treating the nuisance estimators $\eta_n$ as fixed.

\subsection{Model misspecification error and data-driven model selection}
\label{sec::adml}

In practice, the working model used by autoDML may be data-adaptive and misspecified. For example, rather than fixing a single model \(\mathcal H\), one may select among a nested sequence of spaces
\[
\mathcal H_1 \subset \mathcal H_2 \subset \cdots \subset \mathcal H_k \subset \cdots \subset \mathcal H_\infty := \mathcal H,
\]
or, more generally, use variable selection or learned feature representations to construct a data-dependent working model \(\mathcal H_n \subseteq \mathcal H\). Given such a model, the estimator \(\widehat{\psi}_{n,\mathcal H_n}\) is constructed exactly as in Section~\ref{sec::autodmlest}, treating \(\mathcal H_n\) as fixed. The next theorem shows that the resulting misspecification bias, \(\Psi_{\mathcal H_n}(P_0) - \Psi_{\mathcal H}(P_0)\), is controlled by the approximation error of \(\theta_{0,\mathcal H_n}\) together with the approximation error of the Riesz representer under \(\mathcal H_n\). In Appendix~\ref{appendix::adml}, we use this result to establish valid superefficient inference under data-driven model selection \citep{van2023adaptive}.

Let \(\mathcal H_{n,0} := \{h_n + h_0 : h_n \in \mathcal H_n,\ h_0 \in \mathcal H_0,\ \partial_\theta^2 L_0(\theta_0,\eta_0)(h_n,h_0)=0\}\) denote the orthogonal direct sum of the data-adaptive model \(\mathcal H_n\) and the oracle model \(\mathcal H_0\), and let \(\overline{\mathcal H}_n\) and \(\overline{\mathcal H}_{n,0}\) denote their completions under \(\|\cdot\|_{\mathcal H}\). We write \(\alpha_{0,\mathcal H_{n,0}}\) for the Hessian Riesz representer in the enlarged model \(\mathcal H_{n,0}\), defined as the minimizer of \(\partial_\theta^2 L_0(\theta_0,\eta_0)(\alpha,\alpha) - 2\dot\psi_0(\theta_0)(\alpha)\) over \(\overline{\mathcal H}_{n,0}\), and note that \(\alpha_{0,\mathcal H_n}\) is its orthogonal projection onto \(\overline{\mathcal H}_n\) with respect to the Hessian inner product.

\begin{theorem}[Model approximation error]
\label{theorem::admlbias}
Assume \ref{cond::uniqueness}-\ref{cond::targetsmoothloss}, \ref{cond::PDHessian} hold for $\mathcal{H}_{n,0}$. Suppose that \(\theta_0 \in \mathcal H_0\) for some oracle submodel \(\mathcal H_0 \subseteq \mathcal H\), possibly depending on \(P_0\). Then,
   $$\Psi_{\mathcal{H}_n}(P_0) - \Psi_{\mathcal{H}}(P_0) = \partial_{\theta}^2 L_0(\theta_0, \eta_0)(\alpha_{0, \mathcal{H}_{n,0}} - \alpha_{0, \mathcal{H}_n}, \theta_{0, \mathcal{H}_n} - \theta_0) +  (\delta_{\mathrm{lin}} + \delta_{\mathrm{quad}}) O_p\left(\|\theta_{0, \mathcal{H}_n} - \theta_{0}\|_{\mathcal{H}}^2\right).$$
\end{theorem}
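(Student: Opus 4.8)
The plan is to apply the functional von Mises expansion of Theorem~\ref{theorem::vonmises} with a judicious choice of the auxiliary direction $\alpha$, specialized to the oracle submodel $\mathcal{H}_0$ rather than the full space $\mathcal{H}$. Observe that by definition $\Psi_{\mathcal{H}_n}(P_0) - \Psi_{\mathcal{H}}(P_0) = \psi_0(\theta_{0,\mathcal{H}_n}) - \psi_0(\theta_0)$, since $\theta_0 \in \mathcal{H}_0$ is the unconstrained minimizer and $\Psi_{\mathcal{H}}(P_0) = \psi_0(\theta_0)$. So the quantity to be controlled is a plug-in error of exactly the type governed by Theorem~\ref{theorem::vonmises}, but now with $\theta = \theta_{0,\mathcal{H}_n}$. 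Since the loss is evaluated at the true nuisance $\eta = \eta_0$ throughout (there is no data-driven nuisance estimation in this purely deterministic approximation-error statement), the two remainder terms in Theorem~\ref{theorem::vonmises} involving $\|\eta - \eta_0\|$ vanish identically, leaving only the Hessian remainder and the $(I_{\mathrm{lin}} + I_{\mathrm{quad}})O(\|\theta-\theta_0\|_{\mathcal{H}}^2)$ term. This already matches the structure of the target identity.

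First I would instantiate Theorem~\ref{theorem::vonmises} with $\theta = \theta_{0,\mathcal{H}_n}$, $\eta = \eta_0$, and $\alpha = \widetilde{\alpha}_{0,\mathcal{H}_n}$, the projection of $\alpha_{0,\mathcal{H}_{n,0}}$ onto $\overline{\mathcal{H}}_n$. This gives
\[
\psi_0(\theta_{0,\mathcal{H}_n}) - \psi_0(\theta_0) = P_0 \dot{\ell}_{\eta_0}(\theta_{0,\mathcal{H}_n})(\widetilde{\alpha}_{0,\mathcal{H}_n}) + \partial_\theta^2 L_0(\theta_0,\eta_0)(\alpha_0 - \widetilde{\alpha}_{0,\mathcal{H}_n}, \theta_{0,\mathcal{H}_n} - \theta_0) + (I_{\mathrm{lin}}+I_{\mathrm{quad}})O(\|\theta_{0,\mathcal{H}_n}-\theta_0\|_{\mathcal{H}}^2),
\]
where $\alpha_0$ is the representer over the full space. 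The key is then to show that the first term $P_0 \dot{\ell}_{\eta_0}(\theta_{0,\mathcal{H}_n})(\widetilde{\alpha}_{0,\mathcal{H}_n})$ vanishes: since $\theta_{0,\mathcal{H}_n}$ minimizes $L_0(\cdot,\eta_0)$ over $\mathcal{H}_n$ and $\widetilde{\alpha}_{0,\mathcal{H}_n} \in \overline{\mathcal{H}}_n$, the first-order optimality condition (Condition~\ref{cond::targetsmoothloss::one}) gives $\partial_\theta L_0(\theta_{0,\mathcal{H}_n},\eta_0)(h) = P_0\dot{\ell}_{\eta_0}(\theta_{0,\mathcal{H}_n})(h) = 0$ for all $h \in \overline{\mathcal{H}}_n$, and in particular for $h = \widetilde{\alpha}_{0,\mathcal{H}_n}$.

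The remaining obstacle, which I expect to be the main subtlety, is reconciling the Hessian remainder term $\partial_\theta^2 L_0(\theta_0,\eta_0)(\alpha_0 - \widetilde{\alpha}_{0,\mathcal{H}_n}, \theta_{0,\mathcal{H}_n} - \theta_0)$ produced by the expansion with the claimed term $\partial_\theta^2 L_0(\theta_0,\eta_0)(\alpha_{0,\mathcal{H}_{n,0}} - \widetilde{\alpha}_{0,\mathcal{H}_n}, \theta_{0,\mathcal{H}_n} - \theta_0)$ in the theorem statement, which uses the direct-sum representer $\alpha_{0,\mathcal{H}_{n,0}}$ rather than the full-space $\alpha_0$. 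The resolution hinges on the crucial fact that $\theta_{0,\mathcal{H}_n} - \theta_0 \in \overline{\mathcal{H}}_{n,0}$, since both endpoints lie in $\mathcal{H}_n \cup \mathcal{H}_0 \subseteq \mathcal{H}_{n,0}$ (using $\theta_0 \in \mathcal{H}_0$). I would argue that $\alpha_{0,\mathcal{H}_{n,0}}$, being the representer restricted to the direct sum $\mathcal{H}_{n,0}$, satisfies $\partial_\theta^2 L_0(\theta_0,\eta_0)(\alpha_0 - \alpha_{0,\mathcal{H}_{n,0}}, h) = 0$ for every $h \in \overline{\mathcal{H}}_{n,0}$ by the defining projection property of the representer (equivalently, $\dot\psi_0(\theta_0)(h) = \partial_\theta^2 L_0(\theta_0,\eta_0)(\alpha_0, h) = \partial_\theta^2 L_0(\theta_0,\eta_0)(\alpha_{0,\mathcal{H}_{n,0}}, h)$ on $\overline{\mathcal{H}}_{n,0}$). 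Applying this with $h = \theta_{0,\mathcal{H}_n} - \theta_0 \in \overline{\mathcal{H}}_{n,0}$ lets me replace $\alpha_0$ by $\alpha_{0,\mathcal{H}_{n,0}}$ in the Hessian inner product at no cost, yielding exactly the stated remainder. The quadratic term $(I_{\mathrm{lin}}+I_{\mathrm{quad}})O(\|\theta_{0,\mathcal{H}_n}-\theta_0\|_{\mathcal{H}}^2)$ carries over verbatim (with the big-$O$ made $O_p$ to account for the randomness of $\mathcal{H}_n$), completing the identity. Throughout I must verify that Condition~\ref{cond::PDHessian}, assumed to hold on both $\mathcal{H}_n$ and $\mathcal{H}_{n,0}$, guarantees existence of all the representers invoked via the Riesz representation theorem as in~\eqref{eqn::rieszrep}.
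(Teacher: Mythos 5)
Your proposal is correct and rests on the same three ingredients as the paper's proof---first-order optimality of $\theta_{0,\mathcal{H}_n}$ over $\mathcal{H}_n$, the Riesz representation of $\dot{\psi}_0(\theta_0)$ on $\overline{\mathcal{H}}_{n,0}$ applied to $\theta_{0,\mathcal{H}_n}-\theta_0\in\overline{\mathcal{H}}_{n,0}$, and a quadratic Taylor remainder---but it is packaged differently. The paper never invokes Theorem~\ref{theorem::vonmises}: it expands $\psi_0(\theta_{0,\mathcal{H}_n})-\psi_0(\theta_0)=\dot{\psi}_0(\theta_0)(\theta_{0,\mathcal{H}_n}-\theta_0)+I_{\mathrm{lin}}O_p\left(\|\theta_{0,\mathcal{H}_n}-\theta_0\|_{\mathcal{H}}^2\right)$, rewrites the derivative via the restricted representer $\alpha_{0,\mathcal{H}_{n,0}}$, and then Taylor-expands the restricted score equation $\partial_\theta L_0(\theta_{0,\mathcal{H}_n},\eta_0)(\widetilde{\alpha}_{0,\mathcal{H}_n})=0$ around $\theta_0$ to show $\partial_\theta^2 L_0(\theta_0,\eta_0)(\theta_{0,\mathcal{H}_n}-\theta_0,\widetilde{\alpha}_{0,\mathcal{H}_n})=I_{\mathrm{quad}}O_p\left(\|\theta_{0,\mathcal{H}_n}-\theta_0\|_{\mathcal{H}}^2\right)$; the full-space representer $\alpha_0$ never appears. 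Your black-box use of Theorem~\ref{theorem::vonmises} buys brevity, but at a cost you should flag: that theorem is stated relative to the ambient space $\mathcal{H}$ and presupposes Conditions~\ref{cond::uniqueness}--\ref{cond::PDHessian} there---in particular existence of the full-space representer $\alpha_0$ (via \ref{cond::PDHessian} on all of $\mathcal{H}$), the nuisance conditions \ref{cond::nuisancesmooth}--\ref{cond::orthogonal}, membership $\theta_{0,\mathcal{H}_n}\in\mathcal{H}_{\mathcal{P}}$, and the bound $\rho_{\mathcal{H}}(\widetilde{\alpha}_{0,\mathcal{H}_n})<M$---whereas Theorem~\ref{theorem::admlbias} only assumes \ref{cond::uniqueness}--\ref{cond::targetsmoothloss} and \ref{cond::PDHessian} on $\mathcal{H}_n$ and $\mathcal{H}_{n,0}$. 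The clean repair is already implicit in your opening sentence: apply the von Mises expansion with $\mathcal{H}_{n,0}$ itself as the ambient space ($\theta_0$ remains the minimizer there since $\theta_0\in\mathcal{H}_0\subseteq\mathcal{H}_{n,0}$, and with $\eta=\eta_0$ fixed the steps requiring \ref{cond::nuisancesmooth}--\ref{cond::orthogonal} are vacuous). The representer produced is then $\alpha_{0,\mathcal{H}_{n,0}}$ directly, your projection-swap step becomes unnecessary, and the resulting argument coincides with the paper's. As written, your swap step itself is sound---$\alpha_0-\alpha_{0,\mathcal{H}_{n,0}}$ is Hessian-orthogonal to $\overline{\mathcal{H}}_{n,0}$, and $\theta_{0,\mathcal{H}_n}-\theta_0\in\overline{\mathcal{H}}_{n,0}$ is a fact the paper's proof relies on as well---so the proposal is valid wherever the extra full-space hypotheses hold, but in the generality actually claimed by the theorem it needs the ambient-space substitution above.
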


For sieve-based model selection, one typically has \(\mathcal H_n = \mathcal H_{k(n)}\) for some sequence \(k(n)\). If \(k(n)\to\infty\), the theorem may be applied with \(\mathcal H_0 := \mathcal H\), so that \(\Psi_{\mathcal H}(P_0)=\Psi_{\mathcal H_0}(P_0)\). If instead \(k(n)\) converges to a fixed finite value \(k_0\), one may take \(\mathcal H_0 := \mathcal H_{k_0}\). In that case, it typically holds that \(\mathcal H_n \subseteq \mathcal H_0\) with probability tending to one, and hence \(\mathcal H_{n,0} = \mathcal H_0\). More generally, the theorem shows that the approximation error vanishes whenever \(\theta_{0,\mathcal H_n} \to \theta_0\) in \(\|\cdot\|_{\mathcal H}\) and the projection error \(\alpha_{0,\mathcal H_{n,0}}-\alpha_{0,\mathcal H_n}\) is asymptotically negligible.

The result also yields a bound on omitted-variable bias in M-estimation problems where \(\mathcal H_0 := \mathcal H\) denotes the class of functions of a full set of variables, while \(\mathcal H_n\) denotes the class of functions of only a subset. In that case, the approximation error is driven by the interaction between the component of \(\theta_0\) excluded from \(\mathcal H_n\) and the corresponding component of the Riesz representer. For quadratic risks and linear functionals, the theorem holds without the remainder term $O_p\!\left(
\|\theta_{0,\mathcal H_n}-\theta_0\|_{\mathcal H}^2
\right).$ See, for example, \citet{chernozhukov2022long} for the regression case and Section~3.3 of \citet{ichimura2022influence} for a related bound under exogenous orthogonality conditions.

\section{Numerical experiments}
 \label{sec::exp}

 \label{sec::exp::survival}

In this experiment, we apply the autoDML framework to perform inference on the long-term mean survival probability under the beta-geometric survival model. In Appendix~\ref{sec::exp::cate}, we present additional experiments evaluating various autoDML estimators for the ATE derived from the R-learner loss, a specific Neyman-orthogonal loss for the conditional average treatment effect (CATE) introduced by \cite{QuasiOracleWager}.

\subsection{Overview of the beta-geometric model}

Consider latent failure and censoring times \(T, C \in \mathbb{N}\) with observed data \(Z = (X, \widetilde{T}, \Delta) \sim P_0\), where \(\widetilde{T} = T \wedge C\) and \(\Delta = \mathbbm{1}(T \leq C)\). The goal is to estimate survival probabilities beyond the observed time horizon, \(P_0(T > t_0)\), for a large reference time \(t_0\). Long-term survival estimation is critical in forecasting applications such as subscription retention and manufacturing failures. Standard nonparametric methods (e.g., Kaplan–Meier and Cox models) cannot extrapolate beyond the observed horizon due to non-identification. An alternative assumes \(T \mid X \sim \text{Geometric}(\pi(X))\), based on a time-homogeneous Markov assumption: \(P_0(T = t \mid T \geq t, X) = \pi(X)\). This model implies a constant hazard over time, which is often unrealistic because failure risk typically decreases with time survived.

The beta-geometric model \citep{hubbard2021beta} addresses this limitation by modeling \(T\) as a mixture of Markov chains. Specifically, it posits that \(T \mid \Pi, X \sim \mathrm{Geometric}(\Pi)\) for a latent probability \(\Pi \mid X \sim \mathrm{Beta}(\alpha_0(X), \beta_0(X))\), where \(\alpha_0(X) = \exp(a_0(X))\) and \(\beta_0(X) = \exp(b_0(X))\), with \(a_0(X)\) and \(b_0(X)\) unrestricted functions. Marginalizing over $\Pi$, this specification yields the conditional hazard function
\[
\lambda_{a_0, b_0}(t \mid x) = \frac{\exp(a_0(x))}{\exp(a_0(x)) + \exp(b_0(x)) + t - 1},
\]
which decays over time, accommodating higher failure risk early on. The parametric time dependence enables extrapolation beyond the observed horizon, while the covariate-dependent \((a_0, b_0)\) flexibly capture heterogeneity across individuals.


Let $\mathcal{H} := \mathcal{H}_a \times \mathcal{H}_b$, where $\mathcal{H}_a \subseteq L^2(\mu_{X})$ and $\mathcal{H}_b \subseteq L^2(\mu_{X})$ for some measure $\mu_{X}$ on $\text{supp}(X)$. The long-term survival probability \(P_0(T > t_0)\) corresponds to the functional \(E_0[m(Z, \theta_0)]\) of the M-estimand \(\theta_0 = \argmin_{(a, b) \in \mathcal{H}} E_0[\ell(Z, (a, b))]\), where \(\ell(z, a, b) = -\delta \log\big(\lambda_{a,b}(t \mid x)\big) - \sum_{s=1}^{t-1} \log\big(1 - \lambda_{a,b}(s \mid x)\big)\) is the negative log-likelihood, and \(m(z, (a, b)) = \prod_{s=1}^{t_0} \big(1 - \lambda_{a,b}(s \mid x)\big)\). Efficient estimation in the beta-geometric model would typically requires a tedious derivation of the EIF, which may not admit a closed-form expression. In our approach, Theorem~\ref{theorem::EIF} immediately characterizes the EIF, and autoDML constructs debiased estimators without relying on explicit formulas for the Riesz representer.
 The gradient and Hessian of the beta-geometric negative log-likelihood, as well as the derivative of the functional, can be computed using straightforward calculus (see Appendix~\ref{appendix::betageo}). Since we use the negative log-likelihood loss, the autoDML estimator also achieves semiparametric efficiency under the beta-geometric model.

 \label{appendix::betageo}

\subsection{Experimental results}

The data-generating distribution \( P_0 \), detailed in Appendix~\ref{appendix::simdetails}, includes three continuous covariates and one binary covariate. The failure time \( T \) follows a beta-geometric distribution with shape parameters \( a_0 \) and \( b_0 \), modeled as linear functions of the covariates. All observations are censored deterministically at \( t = 6 \). The reference time for the mean survival probability is set to \( t_0 = 12 \), twice the observable time horizon. The M-estimand \(\theta_0 = (a_0, b_0)\) is estimated using gradient-boosted trees with bivariate outputs, optimized via the negative log-likelihood loss \(\ell\). The Riesz representer \(\alpha_0 \in \mathcal{H}\) is similarly estimated using gradient-boosted trees based on the Riesz loss. Both procedures are implemented in \texttt{xgboost} \citep{xgboost} with custom objectives, following \citet{hubbard2021beta} for bivariate outcome tree models. We specify an additive structure for \(\mathcal{H}\), with \(\mathcal{H}_a\) and \(\mathcal{H}_b\) each additive in the covariates, enforced via \texttt{interaction\_constraints} in \texttt{xgboost}. We evaluate three autoDML estimators for \(\psi_0(\theta_0)\) as proposed in Section~\ref{sec::autodmlest}: two variants of the one-step estimator—one using TMLE-inspired stabilization (see \eqref{eqn::stableautoDML}, Algorithm~\ref{alg:autoDML}) and one without stabilization—and the autoTML estimator (Algorithm~\ref{alg:autoTML}). As a baseline, we include the plug-in estimator without debiasing. For simplicity, we use sample-splitting: nuisance functions are estimated on an independent sample, while point estimates and confidence intervals are computed on the original sample. For autoDML and autoTML, we construct 95\% Wald-type confidence intervals based on the estimated influence function.

\begin{figure}[ht!]
    \centering
    \includegraphics[width=0.33\linewidth]{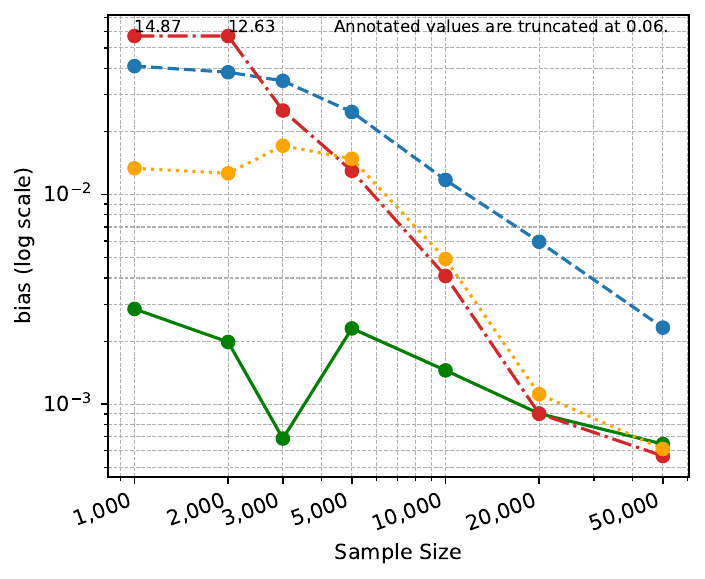}\includegraphics[width=0.33\linewidth]{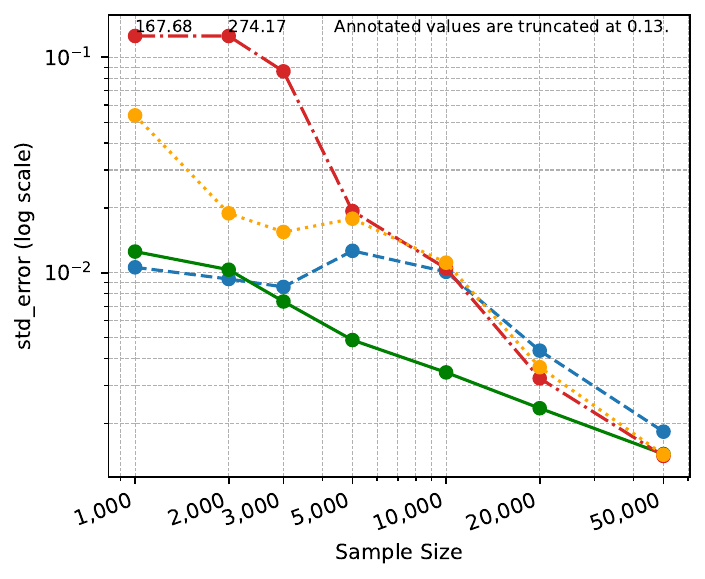} \includegraphics[width=0.33\linewidth]{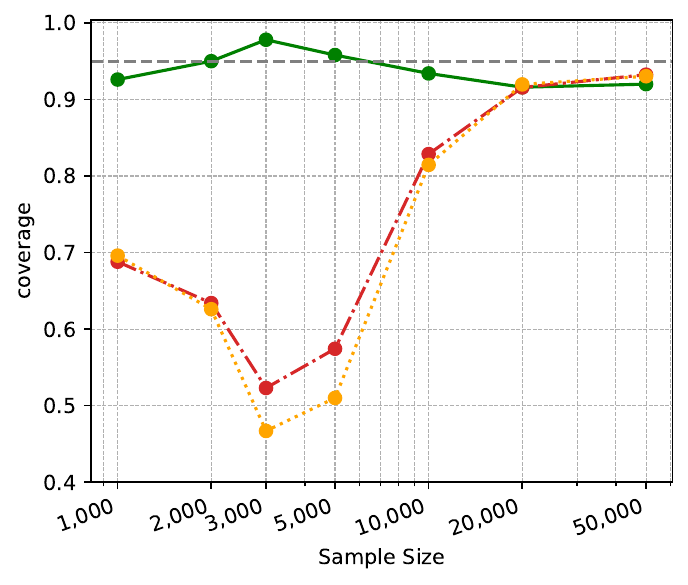}
    
    \includegraphics[width=0.6\linewidth]{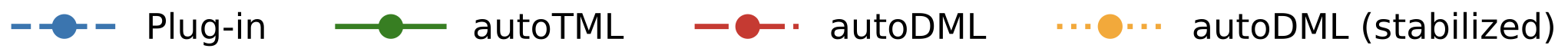}
    \caption{Performance metrics as a function of sample size ($n$) for different estimators: (Left) Absolute Bias, (Middle) Standard Error, and (Right) Coverage (95\%).}
    \label{fig:performance-metrics}
\end{figure}

Figure~\ref{fig:performance-metrics} reports Monte Carlo estimates of absolute bias, standard error, and 95\% confidence interval coverage for the autoDML methods. At large sample sizes (\(n \geq 20{,}000\)), all estimators—autoDML and autoTML—perform similarly across metrics, consistent with their asymptotic equivalence predicted by theory. However, finite-sample performance differs substantially. The autoTML estimator consistently achieves the lowest bias and standard error across all sample sizes. In contrast, the unstabilized autoDML estimator is highly biased, variable, and unstable in smaller samples, performing worse than the naive plug-in estimator. The stabilized autoDML estimator reduces bias relative to the plug-in and is more stable than its unstabilized counterpart, yet remains more biased and variable than autoTML. In terms of coverage, autoTML nearly attains the nominal 95\% level in both small and large samples, while both autoDML variants exhibit poor coverage in smaller samples due to excessive bias.

Overall, these results highlight the robustness and stability of the autoTML estimator across all metrics, particularly in small to moderate sample sizes, relative to autoDML. Although autoTML and autoDML are asymptotically equivalent, their finite-sample performance can differ markedly. Stabilization improves the performance of autoDML but still falls short of matching autoTML. The poor performance of autoDML may stem from the low quality of the initial M-estimator at smaller sample sizes. The validity of the correction term in autoDML relies on quadratic and linear approximations of the loss and functional, respectively. When the M-estimator is inaccurate, these approximations may be poor, and there is no guarantee that the bias correction will effectively remove bias. In contrast, autoTML debiases by improving the fit of the M-estimand and using a plug-in estimator, which heuristically suggests that some bias is removed even when the initial fit is of low quality.

\section{Conclusion}

\label{section::conclusion}


We established several results of independent interest. In Section~\ref{sec::eif}, we extend TMLE to generic loss functions that may depend on nuisances by generalizing the notion of least favorable submodels. In Section~\ref{sec::autodmlest}, we propose a general stabilization method for one-step autoDML estimators via \eqref{eqn::stableautoDML}. We conjecture that stabilization is particularly beneficial for nonlinear functionals and non-quadratic losses, which may explain why autoTML—stabilized by design—consistently outperforms autoDML in our experiments. Finally, in Sections~\ref{sec::theoryautodml} and~\ref{sec::adml}, we present a unified asymptotic analysis of parametric and sieve-based M-estimation under orthogonal losses, extending the results of \cite{shen1997methods}, \cite{sieveOneStepPlugin}, and \cite{sieveTwoStepPlugin}.

There are several promising avenues for future research. First, our approach can be extended to functionals that depend on nuisance parameters, provided an asymptotically linear estimator of the functional is used (e.g., \cite{vanderlaan2024automatic}). More generally, when nuisance parameters are also M-estimands, our approach applies to functionals defined jointly in both the primary and nuisance M-estimands. Second, while we focus on \emph{M-estimands} in this work, we suspect that our theoretical techniques can be extended to \emph{Z-estimands}, defined as smooth functionals of solutions to infinite-dimensional Neyman-orthogonal estimating equations, due to the close connection between Z-estimation and M-estimation \citep{vanderVaartWellner}. Third, for linear functionals and quadratic risk (e.g., linear summaries of pseudo-outcome regressions), we find that autoDML estimators are doubly robust with respect to both the M-estimand and the Hessian representer. In future work, we plan to investigate whether this property can be leveraged to obtain doubly robust inference—such as confidence intervals—for these parameters \citep{benkeser2017doubly, vanderlaan2024automatic}. Finally, Neyman-orthogonal loss functions are often nonconvex, complicating direct estimation of \(\theta_0\). For example, the orthogonal loss for conditional relative risk in \cite{van2024combining} is nonconvex, despite originating from a convex, non-orthogonal loss. In our framework, \(\theta_0\) can still be estimated using convex, non-orthogonal losses, as our method is agnostic to the estimation strategy. Alternatively, the EP-learning framework \citep{van2024combining} constructs orthogonal risks from convex losses via debiased nuisance estimates. Automating EP-learner construction for orthogonal losses is a compelling future direction.



\bibliography{ref}

\newpage

\appendix

\begingroup
\setcounter{tocdepth}{2} 
\renewcommand{\contentsname}{Appendix contents}
\tableofcontents
\endgroup

 \section{Discussion of conditions}
 \label{appendix::conditions}

 Condition \ref{cond::nuisancerate} and Condition \ref{cond::targetrate} impose that the estimators $\eta_n$ and $\theta_n$ are, respectively, consistent for $\eta_0$ and $\theta_0$ at a rate faster than $n^{-\frac{1}{4}}$. Such rate conditions are standard in semiparametric statistics and orthogonal machine learning and can be achieved under appropriate conditions by several algorithms, including generalized additive models \citep{GAMhastie1987generalized}, reproducing kernel Hilbert space estimators \citep{QuasiOracleWager}, the highly adaptive lasso \citep{vanderlaanGenerlaTMLE}, gradient-boosted trees \citep{schuler2023lassoedtreeboosting}, and neural networks \citep{Farrell2018DeepNN}. Fast oracle rates for the estimator $\theta_n$, which would be achieved if $\eta_0$ were known, can be obtained using orthogonal learning \citep{foster2023orthogonal}.

Condition~\ref{cond::empproc} is a standard empirical process requirement that holds when $\|\chi_n - \chi_0\|_{L^2(P_0)} = o_p(1)$ and the realizations of $\chi_n - \chi_0$ belong to a fixed Donsker class. In debiased machine learning, sample-splitting and cross-fitting are commonly used to mitigate overfitting and relax Donsker conditions \citep{van2011cross, DoubleML}. In particular, \ref{cond::empproc} holds for arbitrary machine learning estimators if $\|\chi_n - \chi_0\|_{L^2(P_0)} = o_p(1)$ and the nuisance estimators $\eta_n$, $\theta_n$, and $\alpha_n$ are trained on an external sample independent of $\{Z_i\}_{i=1}^n$. To improve data efficiency, one may instead use cross-fitting, as in Algorithms~\ref{alg:crossfit}–\ref{alg:autoTML}, where nuisance and parameter estimates are computed across multiple data splits. For our modified procedure in Algorithm~\ref{alg:crossfit}, however, \ref{cond::empproc} does not follow from standard cross-fitting arguments due to data leakage across folds from reusing cross-fitted nuisance estimates. Nonetheless, we conjecture that it still holds under reasonable assumptions, since the leakage is limited. For example, the condition is satisfied if, for each \(j \in [J]\), the realizations of the cross-fitted estimators \(\{\theta_{n,j}, \alpha_{n,j}\}\) lie in a (possibly random) function class \(\mathcal{F}_{n,j}\) that is conditionally fixed and uniformly Donsker \citep{sheehy1992uniform}, given the \(j\)-th training set.

\section{Examples}
\label{sec:examplesback}
\subsection{Loss functions satisfying conditions}

Lemma \ref{lemma:smoothlosssuff} verifies that Conditions~\ref{cond::targetsmoothloss} and~\ref{cond::nuisancesmooth} hold for a broad class of twice-differentiable loss functions, but such smoothness is not strictly necessary. In particular, many common losses—such as the quantile loss—do not satisfy classical second-order smoothness, yet still induce a twice-differentiable risk functional.

\setcounter{example}{2}
\renewcommand{\theexample}{\arabic{example}}

\begin{example}[Smoothness of median loss]
Suppose \(Z = (X, Y)\) with \(X \in \mathbb{R}^d\) and \(Y \in \mathbb{R}\), and consider the median loss \(\ell(\theta, z) = |y - \theta(x)|\). This loss is almost everywhere differentiable, with a generalized second derivative that exists in the distributional sense. Under standard regularity conditions on the conditional density \(p_0(Y \mid X)\) (see Appendix~\ref{appendix::examples}), the population risk admits a bounded and Lipschitz continuous second derivative, $\partial_{\theta}^2 L_0(\theta, \eta)(h_1, h_2) = \int p_0(Y = \theta(x) \mid X = x) \cdot h_1(x) h_2(x) \, P_0(dx).$ Hence, \ref{cond::targetsmoothloss} is satisfied for the median loss, and more generally for orthogonal quantile losses, such as those arising in quantile treatment effect settings \citep{leqi2021median, whitehouse2024orthogonal}. \qed
\end{example}

Continuing with our examples in Section \ref{sec::examples}, we demonstrate how various functionals and loss functions of interest in causal inference satisfy our conditions. In each of the following examples, suppose each \(P \in \mathcal{P}\) is dominated by \(P_0\) and satisfies \(c \|\cdot\|_{L^2(P_0)} \leq \|\cdot\|_{L^2(P)} \leq C \|\cdot\|_{L^2(P_0)}\) for constants \(0 < c < C < \infty\). Let \(\mathcal{H} = L^2(P_0)\), with \(\|\cdot\|_{\mathcal{H}}\) the \(L^2(P_0)\) norm and \(\rho_{\mathcal{H}}(\cdot)\) the \(P_0\)-essential supremum norm. 

\renewcommand{\theexample}{1b}
 
\begin{example}[continued]
Let \(\eta_0 := (\pi_0, \mu_0(1, \cdot), \mu_0(0, \cdot))\) denote the nuisance parameter. Let \(\mathcal{N} = L^\infty(\lambda) \times L^\infty(\lambda) \times L^\infty(\lambda)\). The orthogonal loss for semiparametric logistic regression can be written as $\ell_{\eta_0}(\theta, z) = l(\pi_0(x), \mu_0(0,x), \mu_0(1,x), \theta(x), z)$, where the function \(l : \mathbb{R}^4 \times \mathcal{Z} \to \mathbb{R}\) is defined by
\[
l(\overline{\pi}, \overline{\mu}^0,  \overline{\mu}^1, \overline{\theta}, z) :=
 g_1(a, \overline{\mu}^0,  \overline{\mu}^1) \left\{ \log\left(1 + \exp\left(- (a - \overline{\pi}) \, \overline{\theta} - g_2(\overline{\pi}, \overline{\mu}^0, \overline{\mu}^1 )\right)\right) - y (a - \overline{\pi}) \, \overline{\theta} \right\},
\]
where $g_1(a, \overline{\mu}^0,  \overline{\mu}^1) = \{(1-a) \overline{\mu}^0 + a  \overline{\mu}^1\}^{-1}$ and $g_2(\overline{\pi}, \overline{\mu}^0, \overline{\mu}^1 ) = (1 -\overline{\pi}) \operatorname{logit}(\overline{\mu}^0) + \overline{\pi}\operatorname{logit}(\overline{\mu}^1)$. Under reasonable conditions, the function \(l\) satisfies the assumptions of Lemma~\ref{lemma:smoothlosssuff}, and Conditions~\ref{cond::targetsmoothloss} and~\ref{cond::nuisancesmooth} hold (see Appendix \ref{appendix::examples} for details). In particular, \ref{cond::targetsmoothloss::one} is satisfied at \(P_0\), with the map \(h \mapsto \dot{\ell}_{\eta_0}(\theta)(h)\) given elementwise by \(z \mapsto -h(x) \cdot \frac{a - \pi_0(x)}{\nu_0(a, x)} \left[y - \expit\left\{(a - \pi_0(x)) \theta(x) + h_0(x)\right\}\right]\). Furthermore, \ref{cond::targetsmoothloss::two} is satisfied with \(\partial_{\theta}^2 L_0(\theta, \eta_0)(h_1, h_2) = E_0\left[\{A - \pi_0(X)\}^2 h_1(X) h_2(X)\right]\). \ref{cond::PDHessian} holds if \(1 - \delta > \pi_0(X) > \delta\) \(P_0\)-almost surely for some \(\delta > 0\). Finally, when the semiparametric logistic regression model is correctly specified at \(P_0\), \ref{cond::orthogonal} holds by \citet{nekipelov2022regularised} (see Equation~(2.24) and Remark~2.2). The EIF term $-\dot{\ell}_{\eta_0}(\theta_0)(\alpha_0)$ in Theorem \ref{theorem::EIF} equals  $z \mapsto  \alpha_0(x) \frac{(a - \pi_0(x)) }{\nu_0(a, x)} [y -\expit\{(a - \pi_0(x)) \theta_0(x) + h_0(x)\}]$, where $\alpha_0 = \argmin_{\alpha \in \mathcal{H}} E_0[\pi_0(X)\{1 - \pi_0(X)\}\alpha^2(X) - 2\,\dot{m}_{\theta_0}(Z, \alpha)]$ is a weighted Riesz representer with overlap weights $\pi_0\{1 - \pi_0\}$. 
 \qedsymbol
\end{example}

 \renewcommand{\theexample}{1c}

\begin{example}[continued]
\label{example::orthoLS}
Consider the pseudo squared-error loss $\ell_{\eta}: (z, \theta) \mapsto \frac{1}{2} w_{\eta}(z)\left\{\zeta_{\eta}(z) - \theta(x)\right\}^2$. Suppose $\esssup_{z \in \mathcal{Z}, P \in \mathcal{P}}\max\{w_{\eta_P}(z), \zeta_{\eta_P}(z), \theta_P(x)\} < \infty$ with respect to the dominating measure $\lambda$. This loss satisfies \ref{cond::targetsmoothloss::one} with $\dot{\ell}_{\eta_0}(\theta)(h): z \mapsto - w_{\eta_0}(z)h(x)\{\zeta_{\eta_0}(z) - \theta(x)\}$, and \ref{cond::targetsmoothloss::two} with $\partial_{\theta}^2L_0(\theta, \eta_0)(h_1, h_2) = E_0[w_{\eta_0}(Z)h_1(X)h_2(X)]$. Conditions~\ref{cond::targetsmoothloss::three} and \ref{cond::nuisancesmooth} can be verified for many loss functions by applying Lemma~\ref{lemma:smoothlosssuff}. Condition \ref{cond::PDHessian} holds if $w_{\eta_0}(Z) > \delta$ $P_0$-almost surely for some $\delta > 0$. For the R-learner loss, recall that $w_{\eta_0}: z \mapsto (a - \pi_0(x))^2$ and $\zeta_{\eta_0}: z \mapsto \frac{y - m_0(x)}{a - \pi_0(x)}$, where $\eta_0 = (\pi_0, m_0)$ with $m_0: x \mapsto E_0[Y \mid X = x]$. We can compute the cross derivative to be
\[
\begin{aligned}
\partial_\eta \partial_\theta L_0(\theta,\eta_0)\bigl((\delta\pi,\delta m), h\bigr)
&= E_0\Bigl[\Bigl\{
   -(A-\pi_0(X))\,\delta\pi(X)\,\bigl(\zeta_{\eta_0}(Z)-\theta(X)\bigr)\\
&\qquad\quad -\,(A-\pi_0(X))\,\delta m(X)
   +\,(Y-m_0(X))\,\delta\pi(X)
   \Bigr\}\,h(X)\Bigr].
\end{aligned}
\]
The first term on the right-hand side equals \(E_0[\,\delta\pi(X)\,\bigl(Y - m_0(X) - (A - \pi_0(X))\theta(X)\bigr)]\), which is zero since \(E_0[Y - m_0(X) \mid X] = 0\) and \(E_0[A - \pi_0(X) \mid X] = 0\). The second term also vanishes for the same reason. Hence, \ref{cond::orthogonal} holds, and, in fact, the R-learner loss is universally orthogonal, satisfying \(\partial_\eta \partial_\theta L_0(\theta, \eta_0) = 0\) for all \(\theta \in \mathcal{H}\).  \qedsymbol
\end{example}

 \subsection{Examples of estimators}

\renewcommand{\theexample}{1b}

\begin{example}[continued]
Consider the orthogonal semiparametric logistic regression loss, and let $\theta_n$ and $\alpha_n$ be estimators of $\theta_0$ and $\alpha_0$, respectively, where $\alpha_0 = \argmin_{\alpha \in \overline{\mathcal{H}}} E_0[\pi_0(X)\{1 - \pi_0(X)\}\alpha^2(X) - 2\,\dot{m}_{\theta_0}(Z, \alpha)]$. Let $\pi_n$ be an estimator of the propensity score $\pi_0$, and let $\mu_n$ be an estimator of the outcome regression $\mu_0$. Define $\nu_n = \mu_n (1 - \mu_n)$ and $h_n(x) = \pi_n(x)\logit \mu_n(1,x) + (1 - \pi_n(x))\logit \mu_n(0,x)$. Then, an autoDML estimator of $\psi_0(\theta_0)$ is given by
$$
\frac{1}{n}\sum_{i=1}^n m(Z_i, \theta_n) + \frac{1}{n}\sum_{i=1}^n \alpha_n(X_i) \frac{(A_i - \pi_0(X_i)) }{\nu_n(A_i, X_i)} \left[y_i -\expit\{(A_i - \pi_n(X_i)) \theta_n(X_i) + h_n(X_i)\}\right]. \text{ \qedsymbol}
$$
\end{example}

\renewcommand{\theexample}{1c}
\begin{example}[continued]
For general pseudo-outcome regressions, an autoDML estimator is given by $\widehat{\psi}_n^{\mathrm{dml}} := \frac{1}{n}\sum_{i=1}^n m(Z_i, \theta_n) + \frac{1}{n}\sum_{i=1}^n w_{\eta_n}(Z_i) \alpha_n(X_i) \{\zeta_{\eta_n}(Z_i) - \theta_n(X_i)\}.$ For the R-learner loss of the CATE $\theta_0: x \mapsto \mu_0(1, x) - \mu_0(0, x)$, corresponding to $w_{\eta_0}: z \mapsto (a - \pi_0(x))^2$ and $\zeta_{\eta_0}: z \mapsto \frac{y - m_0(x)}{a - \pi_0(x)}$, we have $-\dot{\ell}_{\eta_0}(\theta_0)(\alpha_0): z \mapsto \alpha_0(x)\{a - \pi_0(x)\} \{y - m_0(x) - (a - \pi_0(x))\theta_0(x)\}$, where $\alpha_0 = \argmin_{\alpha \in \mathcal{H}} E_0[\pi_0(X)\{1 - \pi_0(X)\}\alpha^2(X)] - 2\dot{\psi}_0(\theta_0)(\alpha)$ is an overlap-weighted Riesz representer. Taking $\psi_0$ as the ATE, with $\psi_0(\alpha) = \dot{\psi}_0(\theta_0)(\alpha) = E_0[\alpha(X)]$, we find that $\alpha_0$ is the overlap-weighted projection of $\{\pi_0(x)(1 - \pi_0(x))\}^{-1}$ onto the CATE model $\mathcal{H}$. An autoDML estimator of the ATE is given by
\[
\widehat{\psi}_n^{\mathrm{dml}} := \frac{1}{n}\sum_{i=1}^n \theta_n(Z_i) + \frac{1}{n}\sum_{i=1}^n (A_i - \pi_n(X_i))\alpha_n(X_i)\{Y_i - m_n(X_i) - (A_i - \pi_n(X_i)) \theta_n(X_i)\},
\]
where $\eta_n := (\pi_n, m_n)$ is an estimator of $\eta_0 := (\pi_0, m_0)$, $\theta_n$ is an estimator of the CATE, and $\alpha_n$ is an estimator of $\alpha_0$. For the R-learner loss, the fluctuation parameter for autoTML is $\varepsilon_n = \argmin_{\varepsilon} \{Y_i - m_n(X_i) - \{A_i - \pi_n(X_i)\}\{\theta_n(X_i) + \varepsilon \alpha_n(X_i)\}\}^2$, and the autoTML estimator for the ATE is
\[
\widehat\psi_n^{\mathrm{tmle}} = \frac{1}{n}\sum_{i=1}^n \theta_n(X_i) + \frac{\sum_{i=1}^n \alpha_n(X_i)}{\sum_{i=1}^n (A_i - \pi_n(X_i))^2\,\alpha_n(X_i)^2} \frac{1}{n}\sum_{i=1}^n (A_i - \pi_n(X_i))\alpha_n(X_i)\{Y_i - m_n(X_i) - (A_i - \pi_n(X_i)) \theta_n(X_i)\}. \text{ \qedsymbol}
\]
\end{example}

\subsection{Additional technical details for examples}
\label{appendix::examples}

 In each of the following examples, suppose that each \(P \in \mathcal{P}\) is dominated by a measure \(\lambda\), and that \(c \|\cdot\|_{L^2(\lambda)} \leq \|\cdot\|_{L^2(P)} \leq C \|\cdot\|_{L^2(\lambda)}\) for some \(0 < c < C < \infty\). Let \(\mathcal{H} = L^2(\lambda)\), \(\mathcal{H} = L^\infty(\lambda)\), \(\|\cdot\|_{\mathcal{H}}\) be the \(L^2(P)\) norm, and \(\rho_{\mathcal{H}}(\cdot)\) be the \(\lambda\)-essential supremum norm. By norm equivalence, \(\mathcal{H}\) is closed in \(L^2(P)\) for each \(P \in \mathcal{P}\).

 \setcounter{example}{1}
 
\begin{example}[Smoothness of median loss]
Suppose $Z = (X, Y)$ with $X \in \mathbb{R}^d$ and $Y \in \mathbb{R}$, and consider the median loss $\ell(\theta, z) = |y - \theta(x)|$. Then $\ell$ is almost everywhere differentiable with $\dot{\ell}(\theta)(h)(z) = \text{sign}(y - \theta(x)) \cdot h(x)$. The second derivative exists only in a distributional sense and is given by $\ddot{\ell}(\theta)(h_1, h_2)(z) = \delta(y - \theta(x)) \cdot h_1(x) h_2(x)$, where $\delta$ is the Dirac delta function. Consequently, the second derivative of the population risk satisfies $\partial_{\theta}^2 L_0(\theta, \eta)(h_1, h_2) = \int p_0(Y = \theta(x) \mid X = x) \cdot h_1(x) h_2(x) \, P_0(dx)$. Let $\mathcal{H} := L^\infty(\lambda)$ with $\rho_{\mathcal{H}}(\cdot) = \|\cdot\|_{L^\infty(\lambda)}$ and $\|\cdot\|_{\mathcal{H}} = \|\cdot\|_{L^2(P_0)}$. Suppose $\esssup_{x \in \mathcal{X},\, \theta \in \mathcal{H}_{\mathcal{P}}, P \in \mathcal{P}} p(Y = \theta(x) \mid X = x) < \infty$, and that the map $y \mapsto p(Y = y \mid X = x)$ is Lipschitz continuous for each $P \in \mathcal{P}$ with constant $L_x < \infty$, where $L := \esssup_x L_x < \infty$. Then, uniformly over all $h_1, h_2 \in \mathcal{H}$, the second derivative is bounded, satisfying $\partial_{\theta}^2 L_0(\theta, \eta)(h_1, h_2) \leq C \| h_1\|_{\mathcal{H}} \|h_2 \|_{\mathcal{H}}$, and Lipschitz continuous with
\[
|\partial_{\theta}^2 L_0(\theta + h, \eta)(h_1, h_2) - \partial_{\theta}^2 L_0(\theta, \eta)(h_1, h_2)| \leq \int L_x |h(x)| |h_1(x) h_2(x)| \, P_0(dx) \leq L \rho_{\mathcal{H}}(h_1) \|h_2\|_{\mathcal{H}} \|h\|_{\mathcal{H}}.
\]
Hence, Condition~\ref{cond::targetsmoothloss} is satisfied for the median loss, and more generally for orthogonal quantile losses, such as those arising in quantile treatment effect estimation \citep{leqi2021median}. \qed
\end{example}

\setcounter{example}{2}
\begin{example}[Verifying conditions for semiparametric orthogonal logistic loss]
Let \(\eta_0 := (\pi_0, \mu_0(1, \cdot), \mu_0(0, \cdot))\) denote the nuisance parameter. Let \(\mathcal{N} = L^\infty(\lambda) \times L^\infty(\lambda) \times L^\infty(\lambda)\). We identify functions of subvectors of \(z \in \mathcal{Z}\) with elements of \(L^\infty(\lambda)\) via composition with coordinate projections; for example, \(f(z) := f(x)\) when \(z = (x, y)\). The orthogonal loss for semiparametric logistic regression can be written as $\ell_{\eta_0}(\theta, z) = l(\pi_0(x), \mu_0(0,x), \mu_0(1,x), \theta(x), z)$, where the function \(l : \mathbb{R}^4 \times \mathcal{Z} \to \mathbb{R}\) is defined by
\[
l(\overline{\pi}, \overline{\mu}^0,  \overline{\mu}^1, \overline{\theta}, z) :=
 g_1(a, \overline{\mu}^0,  \overline{\mu}^1) \left\{ \log\left(1 + \exp\left(- (a - \overline{\pi}) \, \overline{\theta} - g_2(\overline{\pi}, \overline{\mu}^0, \overline{\mu}^1 )\right)\right) - y (a - \overline{\pi}) \, \overline{\theta} \right\},
\]
where $g_1(a, \overline{\mu}^0,  \overline{\mu}^1) = \{(1-a) \overline{\mu}^0 + a  \overline{\mu}^1\}^{-1}$ and $g_2(\overline{\pi}, \overline{\mu}^0, \overline{\mu}^1 ) = (1 -\overline{\pi}) \operatorname{logit}(\overline{\mu}^0) + \overline{\pi}\operatorname{logit}(\overline{\mu}^1)$. Suppose that for all \(P \in \mathcal{P}\) and \(z \in \mathcal{Z}\), we have \(\delta < \mu_P(a, x) < 1 - \delta\) and \(\max\{|h_P(x)|, |\theta_P(x)|\} < M'\) almost surely, for some \(\delta \in (0,1)\) and \(M' < \infty\), with \(\mathcal{Z}\) compact. Then \ref{cond::boundedtrue} holds and the closure of the set \(\{(\pi(x), \mu(0, x), \mu(1, x), \theta(x), z) : \theta \in \mathcal{H}_K, (\pi, \mu(0, \cdot), \mu(1, \cdot)) \in \mathcal{N}_K, z \in \mathcal{Z}\}\) is compact. Moreover, the function \(l\) is three times Lipschitz continuously differentiable in all arguments on this compact set, where we use that $t \mapsto t^{-1}$ on a closed subset of \((0,1)\) and \(\operatorname{logit}\) on \([-M', M']\) are infinitely often Lipschitz continuously differentiable. Hence, \(l\) satisfies the conditions of Lemma~\ref{lemma:smoothlosssuff}, and Conditions~\ref{cond::targetsmoothloss} and~\ref{cond::nuisancesmooth} hold. In particular, \ref{cond::targetsmoothloss::one} is satisfied at \(P_0\), with the map \(h \mapsto \dot{\ell}_{\eta_0}(\theta)(h)\) given elementwise by \(z \mapsto -h(x) \cdot \frac{a - \pi_0(x)}{\nu_0(a, x)} \left[y - \expit\left\{(a - \pi_0(x)) \theta(x) + h_0(x)\right\}\right]\), where \(\|\cdot\|_{\mathcal{H}} := \|\cdot\|_{L^2(P_0)}\). Furthermore, \ref{cond::targetsmoothloss::two} is satisfied with \(\partial_{\theta}^2 L_0(\theta, \eta_0)(h_1, h_2) = E_0\left[\{A - \pi_0(X)\}^2 h_1(X) h_2(X)\right]\). Condition~\ref{cond::PDHessian} holds if \(1 - \delta > \pi_0(X) > \delta\) \(P_0\)-almost surely for some \(\delta > 0\). Condition~\ref{cond::uniqueness} holds if \(1 - \delta > \pi_P(X) > \delta\) uniformly over \(P \in \mathcal{P}\), in which case the risk \(\theta \mapsto L_P(\theta, \eta_P)\) is strongly convex and continuous on $\mathcal{H}$. Finally, when the semiparametric logistic regression model is correctly specified at \(P_0\), \ref{cond::orthogonal} holds by \citet{nekipelov2022regularised} (see Equation~(2.24) and Remark~2.2).
 \qedsymbol
\end{example}

\section{Model selection with Adaptive Debiased Machine Learning}
\label{appendix::adml}

Adaptive Debiased Machine Learning (ADML) \citep{van2023adaptive} combines debiased machine learning with data-driven model selection to construct superefficient, adaptive estimators of smooth functionals, allowing model assumptions to be learned from the data. The key idea is to estimate $\Psi_n(P_0)$ for a data-adaptive, projection-based parameter $\Psi_n: \mathcal{P} \rightarrow \mathbb{R}$, defined through a submodel $\mathcal{P}_n \subseteq \mathcal{P}$ learned from the data. If $\mathcal{P}_n$ stabilizes asymptotically to a fixed oracle submodel $\mathcal{P}_0 := \mathcal{P}_{P_0}$ in a suitable sense, the resulting estimator remains efficient for an oracle target $\Psi_0: \mathcal{P} \rightarrow \mathbb{R}$ defined through $\mathcal{P}_0$. This oracle parameter coincides with the true target $\Psi$ on $\mathcal{P}_0$, so that $\Psi_0(P_0) = \Psi(P_0)$ when $P_0 \in \mathcal{P}_0$. Crucially, $\Psi_0$ may have a substantially smaller efficiency bound at $P_0$ than $\Psi$, leading to more efficient estimators and tighter confidence intervals, while preserving unbiasedness. The simplest example is when $\mathcal{P}_n := \mathcal{P}_{k(n)}$ with $k(n) \to \infty$, obtained via model selection over a sieve $\mathcal{P}_1 \subseteq \mathcal{P}_2 \subseteq \dots \subseteq \mathcal{P}$, and the limiting model $\mathcal{P}_0$ is either the full model $\mathcal{P}$ or some element of the sequence containing $P_0$.

We apply the ADML framework to construct adaptive estimators of $\Psi(P_0)$ by selecting a data-driven working model for the M-estimand $\theta_0$. Let $\mathcal{H}_n \subseteq \mathcal{H}$ denote this model, and suppose it approximates an unknown oracle submodel $\mathcal{H}_0 \subseteq \mathcal{H}$. The corresponding statistical models are $\mathcal{P}_n := \{P \in \mathcal{P} : \theta_P \in \mathcal{H}_n\}$ and $\mathcal{P}_0 := \{P \in \mathcal{P} : \theta_P \in \mathcal{H}_0\}$. We define the working and oracle parameters by $\Psi_{\mathcal{H}_n}(P) := \psi_P(\theta_{P,\mathcal{H}_n})$ and $\Psi_{\mathcal{H}_0}(P) := \psi_P(\theta_{P,\mathcal{H}_0})$, where $\theta_{P,\mathcal{H}} := \argmin_{\theta \in \mathcal{H}} L_P(\theta, \eta_P)$. Let $\chi_{P, \mathcal{H}_n}$ and $\chi_{P, \mathcal{H}_0}$ denote the efficient influence functions of $\Psi_{\mathcal{H}_n}$ and $\Psi_{\mathcal{H}_0}$, respectively, as given in Theorem~\ref{theorem::EIF}. For example, $\mathcal{H}_n$ may be selected via cross-validation over a sieve of models $\mathcal{H}_1 \subset \mathcal{H}_2 \subset \cdots \subset \mathcal{H}_\infty := \mathcal{H}$, where $\mathcal{H}$ is a correctly specified model containing $\theta_0$, and $\mathcal{H}_0$ is the smallest correctly specified model in the sieve. Alternatively, $\mathcal{H}_n$ could be obtained via variable selection or a data-driven feature transformation, with $\mathcal{H}_0$ corresponding to the limiting set of selected variables or the limiting feature transformation.

Given the selected model $\mathcal{H}_n$, our proposed ADML estimator of $\Psi(P_0)$ is the debiased estimator $\widehat{\psi}_{n,\mathcal{H}_n}$ of the data-adaptive working parameter $\Psi_{\mathcal{H}_n}: \mathcal{P} \rightarrow \mathbb{R}$, constructed using any method from Section~\ref{sec::autodmlest}. We show that, under suitable conditions, this estimator remains valid even if $\mathcal{H}_n$ is misspecified, provided the model approximation error vanishes asymptotically. A key step is showing that the parameter approximation bias $\Psi_{\mathcal{H}_n}(P_0) - \Psi(P_0)$ is second-order in the model error and thus asymptotically negligible. Theorem \ref{theorem::admlbias} in the main text confirms this second-order behavior, establishing that the bias vanishes as long as the model approximation error decays sufficiently fast.

We now present our main result on the asymptotic linearity and superefficiency of the ADML estimator $\widehat{\psi}_{n,\mathcal{H}_n}$ for $\Psi(P_0)$. To establish this result, we assume the following conditions.

\begin{enumerate}[label=\textbf{(C\arabic*)}, ref=C\arabic*, resume = cond]
    \item \textit{Linear expansion:} $\widehat{\psi}_{n, \mathcal{H}_n} - \Psi_n(P_0) = (P_n - P_0) \chi_{0, \mathcal{H}_n} + o_p(n^{-1/2})$. \label{cond::linearexp}
    \item \textit{Stabilization of selected model:} $n^{1/2}(P_n- P_0)\{\chi_{0, \mathcal{H}_n} - \chi_{0, \mathcal{H}_0}\} = o_p(1)$. \label{cond::stable}
    \item \textit{Target approximate rate:} \label{cond::nuisrate}  If $\delta_{\mathrm{lin}} + \delta_{\mathrm{quad}} > 0$ then $\|\theta_{0, \mathcal{H}_n} - \theta_{0}\|_{\mathcal{H}} = o_p(n^{-\frac{1}{4}})$.
    \item \textit{Doubly robust approximate rate:} \label{cond::modelbias} $\partial_{\theta}^2 L_0(\theta_0, \eta_0)(\alpha_{0, \mathcal{H}_{n,0}} - \alpha_{0, \mathcal{H}_n}, \theta_{0, \mathcal{H}_n} - \theta_0)  = o_p(n^{-1/2})$.  
\end{enumerate}

\begin{theorem}
Assume that Conditions \ref{cond::uniqueness}-\ref{cond::boundedtrue} hold for the model $\mathcal{H}_{n,0}$. Suppose that $\mathcal{H}_n$ converges to an oracle submodel $\mathcal{H}_0$ with $\theta_0 \in \mathcal{H}_0$ in the sense that Conditions \ref{cond::linearexp}-\ref{cond::modelbias} hold. Then, $\widehat{\psi}_{n,\mathcal{H}_n} - \Psi(P_0) = P_n \chi_{0, \mathcal{H}_0} + o_p(n^{-1/2})$, and $\widehat{\psi}_{n,\mathcal{H}_n}$ is a locally regular and efficient estimator for the oracle parameter $\Psi_{0}$ under the nonparametric statistical model.
\label{theorem::ADML}
\end{theorem}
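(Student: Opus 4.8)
The plan is to decompose the estimator's error into an estimation error for the data-adaptive working parameter and a deterministic model approximation bias, control each piece separately, and then replace the working influence function by the oracle one. Concretely, I would write
\[
\widehat{\psi}_{n,\mathcal{H}_n} - \Psi(P_0) = \big[\widehat{\psi}_{n,\mathcal{H}_n} - \Psi_{\mathcal{H}_n}(P_0)\big] + \big[\Psi_{\mathcal{H}_n}(P_0) - \Psi_{\mathcal{H}}(P_0)\big],
\]
using that $\theta_0 = \theta_{0,\mathcal{H}}$, so that $\Psi_{\mathcal{H}}(P_0) = \psi_0(\theta_0) = \Psi(P_0)$. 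The first bracket is handled immediately by the assumed linear expansion in Condition~\ref{cond::linearexp}, which yields $(P_n - P_0)\chi_{0,\mathcal{H}_n} + o_p(n^{-1/2})$.

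For the second bracket, I would invoke Theorem~\ref{theorem::admlbias}, which expresses the model approximation bias as the doubly robust term $\partial_\theta^2 L_0(\theta_0,\eta_0)(\alpha_{0,\mathcal{H}_{n,0}} - \widetilde{\alpha}_{0,\mathcal{H}_n},\, \theta_{0,\mathcal{H}_n} - \theta_0)$ plus $(I_{\mathrm{lin}} + I_{\mathrm{quad}})\,O_p(\|\theta_{0,\mathcal{H}_n} - \theta_0\|_{\mathcal{H}}^2)$. Condition~\ref{cond::modelbias} is designed precisely to force the doubly robust term to be $o_p(n^{-1/2})$. For the quadratic remainder there are two cases: if $I_{\mathrm{lin}} = I_{\mathrm{quad}} = 0$ it vanishes identically; otherwise Condition~\ref{cond::nuisrate} gives $\|\theta_{0,\mathcal{H}_n} - \theta_0\|_{\mathcal{H}} = o_p(n^{-1/4})$, so its square is $o_p(n^{-1/2})$. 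In either case the model approximation bias is $o_p(n^{-1/2})$, and combining with the first bracket gives $\widehat{\psi}_{n,\mathcal{H}_n} - \Psi(P_0) = (P_n - P_0)\chi_{0,\mathcal{H}_n} + o_p(n^{-1/2})$.

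Next I would exchange the working influence function for the oracle one using the stabilization assumption. Writing $(P_n - P_0)\chi_{0,\mathcal{H}_n} = (P_n - P_0)\chi_{0,\mathcal{H}_0} + (P_n - P_0)\{\chi_{0,\mathcal{H}_n} - \chi_{0,\mathcal{H}_0}\}$, the last term is $o_p(n^{-1/2})$ by Condition~\ref{cond::stable}; and since $\chi_{0,\mathcal{H}_0}$ is a mean-zero efficient influence function, $P_0\chi_{0,\mathcal{H}_0} = 0$, whence $(P_n - P_0)\chi_{0,\mathcal{H}_0} = P_n\chi_{0,\mathcal{H}_0}$. This delivers the claimed expansion $\widehat{\psi}_{n,\mathcal{H}_n} - \Psi(P_0) = P_n\chi_{0,\mathcal{H}_0} + o_p(n^{-1/2})$, recalling that $\Psi_0(P_0) = \Psi_{\mathcal{H}_0}(P_0) = \Psi(P_0)$ because $\theta_0 \in \mathcal{H}_0$.

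Finally, to establish local regularity and efficiency for the oracle parameter $\Psi_0$ in the nonparametric model, I would apply Theorem~\ref{theorem::EIF} to the submodel $\mathcal{H}_0$, which identifies $\chi_{0,\mathcal{H}_0}$ as the nonparametric efficient influence function (canonical gradient) of $\Psi_{\mathcal{H}_0} = \Psi_0$ at $P_0$. An asymptotically linear estimator whose influence function equals the canonical gradient is regular and efficient by standard semiparametric theory \citep{bickel1993efficient, vandervaart2000asymptotic}, completing the argument. I expect the bias control in the second paragraph to be the crux: Theorem~\ref{theorem::admlbias} only yields a genuinely second-order (doubly robust plus quadratic) bias, so the whole argument rests on Conditions~\ref{cond::nuisrate} and~\ref{cond::modelbias} certifying that the selected model $\mathcal{H}_n$ approaches the oracle $\mathcal{H}_0$ fast enough in the Hessian metric; verifying that these rate conditions render the approximation bias negligible at the $o_p(n^{-1/2})$ scale — rather than merely $o_p(1)$ — is where the validity of data-driven model selection ultimately hinges.
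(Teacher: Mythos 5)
Your proposal is correct and takes essentially the same route as the paper, which proves Theorem~\ref{theorem::ADML} only via the discussion following its statement: Condition~\ref{cond::linearexp} for the working-parameter estimation error, Theorem~\ref{theorem::admlbias} combined with Conditions~\ref{cond::nuisrate}--\ref{cond::modelbias} to make the model-approximation bias $o_p(n^{-1/2})$, Condition~\ref{cond::stable} to exchange $\chi_{0,\mathcal{H}_n}$ for $\chi_{0,\mathcal{H}_0}$, and Theorem~\ref{theorem::EIF} applied to $\mathcal{H}_0$ for regularity and efficiency. Your write-up simply makes that sketch explicit, including the observation that $\Psi_{\mathcal{H}_n}(P_0)-\Psi_{\mathcal{H}}(P_0)$ is exactly the quantity controlled by the rate conditions and that $P_0\chi_{0,\mathcal{H}_0}=0$ turns $(P_n-P_0)\chi_{0,\mathcal{H}_0}$ into $P_n\chi_{0,\mathcal{H}_0}$.
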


It is interesting to apply Theorem \ref{theorem::ADML} with $\mathcal{H}_n = \mathcal{H}_{k(n)}$ and $\mathcal{H}_0 = \mathcal{H}$, where $\mathcal{H}_{k(n)}$ is an element of the sieve $\mathcal{H}_1 \subset \mathcal{H}_2 \subset \mathcal{H}_3 \subset \dots \subset H_{\infty} := \mathcal{H}$ with $k(n) \rightarrow \infty$. In this case, the theorem establishes the asymptotic linearity and efficiency of sieve-based plug-in estimators based on orthogonal losses, generalizing and extending the results of \cite{shen1997methods, sieveOneStepPlugin} and \cite{sieveTwoStepPlugin}.

Condition~\ref{cond::linearexp} requires that the ADML estimator $\widehat{\psi}_{n, \mathcal{H}_n}$ is debiased for the data-adaptive working parameter $\Psi_n$, which can be established under conditions analogous to Theorem~\ref{theorem::limitautoDML} applied with $\mathcal{H} := \mathcal{H}_n$. Conditions~\ref{cond::stable}--\ref{cond::modelbias}, drawn from prior ADML work \citep{van2023adaptive, van2024adaptive}, ensure that data-driven model selection does not invalidate the debiased estimator. Condition~\ref{cond::stable} is an asymptotic stability condition requiring that the learned model $\mathcal{H}_n$ converges to a fixed oracle submodel $\mathcal{H}_0$, typically formalized as $\|\chi_{0, \mathcal{H}_n} - \chi_{0, \mathcal{H}_0}\|_{L^2(P_0)} = o_p(1)$. This holds when the nuisance functions $\alpha_{0, \mathcal{H}_n}$ and $\theta_{0, \mathcal{H}_n}$ converge to their oracle counterparts $\alpha_{0, \mathcal{H}_0}$ and $\theta_0$ in an appropriate sense. Conditions~\ref{cond::nuisrate} and~\ref{cond::modelbias} further ensure that the approximation bias from using $\mathcal{H}_n$ instead of $\mathcal{H}_0$ is negligible, by requiring sufficiently fast convergence of $\alpha_{0, \mathcal{H}_n}$ and $\theta_{0, \mathcal{H}_n}$. Combined with Theorem~\ref{theorem::admlbias}, these conditions imply that $\Psi_n(P_0) - \Psi(P_0) = o_p(n^{-1/2})$.

\section{Proofs for Section \ref{sec::eif}}

\subsection{Proof of Lemma~\ref{lemma:smoothlosssuff}}

\begin{proof}[Proof of Lemma~\ref{lemma:smoothlosssuff}]
We compute the first derivative of the map \(t \mapsto L_0(\theta + t h, \eta)\) evaluated at \(t = 0\):
\begin{align*}
\partial_\theta L_0(\theta, \eta)(h)
&= \left.\frac{d}{dt} L_0(\theta + t h, \eta)\right|_{t=0} \\
&= \left.\frac{d}{dt} \int l(\theta(z) + t h(z), \eta(z), z) \, P_0(dz) \right|_{t=0} \\
&= \int \left.\frac{d}{dt} l(\theta(z) + t h(z), \eta(z), z) \right|_{t=0} \, P_0(dz) \\
&= \int \partial_a l(\theta(z), \eta(z), z)^\top h(z) \, P_0(dz),
\end{align*}
which shows that \(\dot{\ell}_\eta(\theta, z)(h) = \partial_a l(\theta(z), \eta(z), z)^\top h(z)\). To justify the interchange of differentiation and integration, note that for each fixed \(z\), the map
\[
t \;\mapsto\; \partial_a l\bigl(\theta(z) + t h(z), \eta(z), z\bigr)^\top h(z)
\]
is continuous by the joint continuity of \(\partial_a l\). Moreover, by assumption \(\partial_a^2 l(a,b,z)\) is Lipschitz continuous in \((a,b)\) on the compact set \(\mathcal{C}\), so there exists a constant \(C\) such that
\[
\bigl|\partial_a l(\theta(z) + t h(z), \eta(z), z)^\top h(z)\bigr|
\;\le\;
C\,\|h(z)\|_{\mathbb{R}^{d_1}}^2,
\]
and \(\|h\|_{\mathcal{H}}<\infty\) ensures integrability.  The Dominated Convergence Theorem then allows differentiation under the integral.

Next, define the remainder
\[
R(h) := L_0(\theta + h, \eta) - L_0(\theta, \eta) - \partial_\theta L_0(\theta, \eta)(h).
\]
A first‐order Taylor expansion in \(a\) gives
\[
l(\theta(z) + h(z), \eta(z), z)
= l(\theta(z), \eta(z), z)
+ \partial_a l(\theta(z), \eta(z), z)^\top h(z)
+ r(z),
\]
with \(|r(z)| \le \tfrac12\|\partial_a^2 l\|_\infty\,\|h(z)\|_{\mathbb{R}^{d_1}}^2\) and $\|\partial_a^2 l\|_\infty
:= \sup_{(a,b,z)\in\mathcal C}\|\partial_a^2 l(a,b,z)\| \;<\;\infty.$ .  Integrating,
\[
|R(h)| \le \tfrac12 \|\partial_a^2 l\|_\infty \int \|h(z)\|_{\mathbb{R}^{d_1}}^2\,P_0(dz)
= \tfrac12 \|\partial_a^2 l\|_\infty \|h\|_{\mathcal{H}}^2,
\]
so $\frac{|R(h)|}{\|h\|_{\mathcal{H}}}
\;\le\;
\tfrac12 \|\partial_a^2 l\|_\infty \,\|h\|_{\mathcal{H}}
\;\to\;0
\quad\text{as }\|h\|_{\mathcal{H}}\to0,$ verifying first‐order Fr\'echet differentiability (Condition~\ref{cond::targetsmoothloss::one}).

For the second derivative,
\begin{align*}
\partial_\theta^2 L_0(\theta, \eta)(h_1, h_2)
&= \frac{\partial^2}{\partial t\,\partial s}
    L_0\bigl(\theta + t h_1 + s h_2,\eta\bigr)\big|_{t=s=0} \\
&= \int (h_1(z))^\top\,\partial_a^2 l(\theta(z), \eta(z), z)\,h_2(z)\,P_0(dz).
\end{align*}
Moreover, the bound
\[
  \bigl|(h_1(z))^\top\,\partial_a^2 l(\theta(z),\eta(z),z)\,h_2(z)\bigr|
  \;\le\;
  \|\partial_a^2 l\|_\infty\;\|h_1(z)\|_{\mathbb{R}^{d_1}}\;\|h_2(z)\|_{\mathbb{R}^{d_1}},
\]
together with $h_1,h_2\in\mathcal{H}$, allows an identical Dominated Convergence argument to interchange differentiation and integration when computing $\partial^2_\theta L_0$.

Since, by assumption, $\partial_a^2 l(a,b,z)$ is Lipschitz in \emph{both} $a$ and $b$ on $\mathcal C$, there exists $L>0$ such that for any $\theta,\eta$ and perturbations $h,g$,
\[
\bigl\|\partial_a^2 l(\theta(z)+h(z),\,\eta(z)+g(z),\,z)
-\partial_a^2 l(\theta(z),\,\eta(z),\,z)\bigr\|
\;\le\;
L\bigl(\|h(z)\|_{\mathbb{R}^{d_1}}+\|g(z)\|_{\mathbb{R}^{d_2}}\bigr).
\]
Hence, for any directions \(h_1,h_2\),
\begin{align*}
&\bigl|\partial^2_\theta L_0(\theta + h,\;\eta + g)(h_1,h_2)
-\partial^2_\theta L_0(\theta,\;\eta)(h_1,h_2)\bigr|\\
&\quad=
\Bigl|\int h_1(z)^\top
\bigl[\partial_a^2 l(\theta(z)+h(z),\eta(z)+g(z),z)
-\partial_a^2 l(\theta(z),\eta(z),z)\bigr]
h_2(z)\,P_0(dz)\Bigr|\\
&\quad\le
\int \|h_1(z)\|_{\mathbb{R}^{d_1}}\,
L\bigl(\|h(z)\|_{\mathbb{R}^{d_1}}+\|g(z)\|_{\mathbb{R}^{d_2}}\bigr)\,
\|h_2(z)\|_{\mathbb{R}^{d_1}}\,P_0(dz)\\
&\quad\le 
L\,\bigl(\|h\|_{\mathcal{H}} + \|g\|_{\mathcal{N}}\bigr)\,
\rho_{\mathcal{H}}(h_1)\,
\|h_2\|_{\mathcal{H}},
\end{align*}
where the final inequality follows from the Cauchy–Schwarz inequality, which establishes the uniform Lipschitz bound in both \(\theta\) and \(\eta\) required by Condition~\ref{cond::targetsmoothloss::three}. To establish Fr\'echet differentiability as in \ref{cond::targetsmoothloss::two}, define the remainder
\[
R(h_1,h_2) := \partial_\theta L_0(\theta+h_2, \eta)(h_1) - \partial_\theta L_0(\theta, \eta)(h_1) - \partial_\theta^2 L_0(\theta, \eta)(h_1, h_2).
\]
By Taylor's theorem with integral remainder, we have
\[
\partial_\theta L_0(\theta+h_2, \eta)(h_1) - \partial_\theta L_0(\theta, \eta)(h_1) = \int_0^1 \partial_\theta^2 L_0(\theta + t h_2, \eta)(h_1, h_2) \, dt,
\]
so that
\[
R(h_1,h_2) = \int_0^1 \Bigl[\partial_\theta^2 L_0(\theta + t h_2, \eta)(h_1, h_2)- \partial_\theta^2 L_0(\theta, \eta)(h_1, h_2)\Bigr]dt.
\]
Using the Lipschitz property of \(\partial_a^2 l\) and taking the essential supremum over \(z\) for \(h_1\) yields
\[
|R(h_1,h_2)| \le \int_0^1 L\,t\, \rho_{\mathcal{H}}(h_1)\, \|h_2\|_{\mathcal{H}}^2\, dt = \frac{L}{2}\, \rho_{\mathcal{H}}(h_1)\, \|h_2\|_{\mathcal{H}}^2.
\]
In particular,
\[
\sup_{\substack{h_1 \in \mathcal{H}_{\mathcal{P}} \\ \rho_{\mathcal{H}}(h_1)\leq1}} \frac{|R(h_1,h_2)|}{\|h_2\|_{\mathcal{H}}} \le \frac{L}{2}\, \|h_2\|_{\mathcal{H}} \to 0 \quad \text{as } \|h_2\|_{\mathcal{H}}\to 0,
\]
which shows that the map \(\theta \mapsto \partial_\theta L_0(\theta, \eta)\) is Fr\'echet differentiable with derivative \(\partial_\theta^2 L_0(\theta, \eta)\), thereby verifying Condition~\ref{cond::targetsmoothloss::two}.

For the cross‐derivative, we have
\begin{align*}
\partial_\eta \partial_\theta L_0(\theta, \eta)(g, h)
&= \frac{d^2}{ds\,dt}\, L_0(\theta + t h, \eta + s g)\Big|_{t=s=0} \\
&= \int h(z)^\top \, \partial_a \partial_b l(\theta(z), \eta(z), z)\, g(z)\, P_0(dz).
\end{align*}
By the assumed Lipschitz continuity of \(\partial_a \partial_b l\) in \((a, b)\), there exists a constant \(C > 0\) such that
\begin{align*}
\bigl|h(z)^\top\bigl[\partial_a \partial_b l(\theta(z) + h_1(z), \eta(z)+ g_1(z), z)-\partial_a \partial_b l(\theta(z), \eta(z), z)\bigr]g(z)\bigr| \\
\le C\, \|h(z)\|_{\infty}\, \|g(z)\|_{\mathbb{R}^{d_2}}\, \left\{\|g_1(z)\|_{\mathbb{R}^{d_2}} + \|h_1(z)\|_{\mathbb{R}^{d_1}}\right\}.
\end{align*}
Hence, integrating both sides over \(z\) with respect to \(P_0\) and applying the Cauchy–Schwarz inequality, we obtain
\begin{align*}
 |\partial_\eta \partial_\theta L_0(\theta +  h_1, \eta + s g_1)(g,h) - \partial_\eta \partial_\theta L_0(\theta, \eta)(g,h)|   \leq L \rho_{\mathcal{H}}( h) \|g\|_{\mathcal{N}} \left\{ \|g_1\|_{\mathcal{N}}  + \| h_1\|_{\mathcal{H}}   \right\}.
\end{align*}
Thus, \((\theta, \eta) \mapsto \partial_\eta \partial_\theta L_0(\theta, \eta)\) is Lipschitz continuous and satisfies \ref{cond::crossderivlipschitz}.

To establish Fr\'echet differentiability, define the remainder for the \(\eta\)–derivative by
\[
R^c(g,h) := \partial_\eta \partial_\theta L_0(\theta, \eta+g)(g, h) - \partial_\eta \partial_\theta L_0(\theta, \eta)(g, h).
\]
By the integral remainder form, we write
\[
\partial_\eta \partial_\theta L_0(\theta, \eta+g)(g, h) - \partial_\eta \partial_\theta L_0(\theta, \eta)(g, h)
=\int_0^1\Bigl[\partial_\eta \partial_\theta L_0(\theta, \eta+s\,g)(g,h)
-\partial_\eta \partial_\theta L_0(\theta, \eta)(g,h)\Bigr]ds.
\]
Thus, by the previous Lipschitz bound and integrating over \(s\),
\[
|R^c(g,h)| \le \int_0^1 C\, s\, \rho_{\mathcal{H}}(h)\, \|g\|_{\mathcal{N}}^2 ds
= \frac{C}{2}\,\rho_{\mathcal{H}}(h)\,\|g\|_{\mathcal{N}}^2.
\]
In particular, for all \(h\) with \(\rho_{\mathcal{H}}(h) \le 1\),
\[
\frac{|R^c(g,h)|}{\|g\|_{\mathcal{N}}} \le \frac{C}{2}\,\|g\|_{\mathcal{N}} \to 0 \quad \text{as } \|g\|_{\mathcal{N}}\to 0.
\]
This shows that the map \(\eta \mapsto \partial_\theta L_0(\theta,\eta)(h)\) is Fr\'echet differentiable with derivative \(\partial_\eta \partial_\theta L_0(\theta,\eta)\), thereby verifying Condition~\ref{cond::crossderiv}, which verifies \ref{cond::nuisancesmooth} .
\end{proof}

\subsection{Technical Lemmas for functional Taylor expansions}
 
\begin{lemma}[Quadratic expansion for target functional]
\label{lemma::functionalTaylor} Suppose that Condition~\ref{cond::smoothfunctional} holds.  Then, uniformly over \(\theta \in \mathcal{H}_{\mathcal{P}}\) and \(h \in \mathcal{H}\), we have the expansion
\[
\psi_0(\theta + h) = \psi_0(\theta) + \dot{\psi}_0(\theta)(h) + \text{Rem}_{\theta}(h),
\]
 where the remainder term satisfies $| \text{Rem}_{\theta}(h)| \leq L  \|h_2\|_{\mathcal{H}}^2$ for some $L < \infty$.
\end{lemma}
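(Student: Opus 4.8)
The plan is to prove the bound by the integral (mean-value) form of the Taylor remainder, converting the Fréchet-level statement into a one-dimensional calculus argument along the segment joining $\theta$ and $\theta + h$. Here $h$ denotes the increment appearing in the display, and the subscript in the stated bound $L\|h_2\|_{\mathcal{H}}^2$ refers to this same increment (matching the fluctuation-direction convention $h_2$ used in the remainder bounds of Lemma~\ref{lemma:smoothlosssuff}); it therefore suffices to establish $|\mathrm{Rem}_\theta(h)| \le L\|h\|_{\mathcal{H}}^2$. First I would define the scalar map $g(s) := \psi_0(\theta + s h)$ for $s \in [0,1]$. By Condition~\ref{cond::smoothfunctional}, $\psi_0$ is Fréchet differentiable with derivative $\dot\psi_0$, so $g$ is differentiable with $g'(s) = \dot\psi_0(\theta + s h)(h)$, and by the definition of the remainder $\mathrm{Rem}_\theta(h) = g(1) - g(0) - g'(0) = \int_0^1 \{\dot\psi_0(\theta + s h)(h) - \dot\psi_0(\theta)(h)\}\,ds$.

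The key estimate controls the integrand uniformly in $s$. Since each $\dot\psi_0(\theta')$ is a continuous linear functional on $(\mathcal{H}, \|\cdot\|_{\mathcal{H}})$, for fixed $h$ we have $|\dot\psi_0(\theta + s h)(h) - \dot\psi_0(\theta)(h)| \le \|\dot\psi_0(\theta + s h) - \dot\psi_0(\theta)\|_{\mathcal{H}^*}\,\|h\|_{\mathcal{H}}$, where $\|\cdot\|_{\mathcal{H}^*}$ is the operator (dual) norm. Condition~\ref{cond::smoothfunctional} supplies a Lipschitz constant $L_\psi$ for the map $\theta' \mapsto \dot\psi_0(\theta')$, giving $\|\dot\psi_0(\theta + s h) - \dot\psi_0(\theta)\|_{\mathcal{H}^*} \le L_\psi\,\|s h\|_{\mathcal{H}} = L_\psi s \|h\|_{\mathcal{H}}$. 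Substituting and integrating yields $|\mathrm{Rem}_\theta(h)| \le \int_0^1 L_\psi\, s\,\|h\|_{\mathcal{H}}^2\,ds = \tfrac{1}{2} L_\psi \|h\|_{\mathcal{H}}^2 \le L\|h\|_{\mathcal{H}}^2$ with $L := L_\psi$, which is the asserted bound. Because $L_\psi$ is a single constant, independent of $\theta$ and $h$, the estimate holds uniformly over $\theta \in \mathcal{H}_{\mathcal{P}}$ and $h \in \mathcal{H}$, as required.

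The main obstacle is a domain issue rather than an analytic one: the uniformity is claimed for arbitrary $h \in \mathcal{H}$, so the segment $\{\theta + s h : s \in [0,1]\}$ need not remain in $\mathcal{H}_{\mathcal{P}}$, whereas Condition~\ref{cond::smoothfunctional} phrases differentiability as holding \emph{over} $\mathcal{H}_{\mathcal{P}}$. I would resolve this by reading the condition as asserting that $\psi_0$ is Fréchet differentiable on all of $\mathcal{H}$ — which is where $\psi_P(\theta) = E_P[m(Z,\theta)]$ is defined — with an $L_\psi$-Lipschitz derivative, so that both the differentiation of $g$ and the Lipschitz bound are valid along the entire segment; the interchange of differentiation and the differentiability of $s \mapsto \psi_0(\theta + s h)$ then follow from continuity of $s \mapsto \dot\psi_0(\theta + s h)(h)$. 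If instead one insists the derivative is only controlled on $\mathcal{H}_{\mathcal{P}}$, the identical argument applies to increments with $\theta + h \in \mathcal{H}_{\mathcal{P}}$ by convexity of $\mathcal{H}_{\mathcal{P}}$, and this is precisely how the lemma is invoked (with $h = \theta - \theta_0$, producing the $O(\|\theta - \theta_0\|_{\mathcal{H}}^2)$ term) in the proof of Theorem~\ref{theorem::vonmises}.
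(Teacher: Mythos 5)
Your proof is correct and follows essentially the same route as the paper's: the integral form of Taylor's theorem along $t \mapsto \psi_0(\theta + t h)$, the dual-norm bound on the integrand via the Lipschitz continuity of $\theta' \mapsto \dot{\psi}_0(\theta')$, and integration to obtain the constant $\tfrac{1}{2}L \le L$. You also correctly diagnose the two wrinkles the paper leaves implicit—the typo $\|h_2\|_{\mathcal{H}}$ for $\|h\|_{\mathcal{H}}$ in the statement, and the domain issue for segments leaving $\mathcal{H}_{\mathcal{P}}$, which your convexity remark resolves in exactly the way the lemma is invoked in Theorem~\ref{theorem::vonmises}.
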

\begin{proof}
Fix \(\theta \in \mathcal{H}_{\mathcal{P}}\) and \(h \in \mathcal{H}\).  By Condition~\ref{cond::smoothfunctional} and the integral form of Taylor’s theorem applied to the map \(t \mapsto \psi_0(\theta + t\,h)\), we have
\[
\psi_0(\theta + h) - \psi_0(\theta)
= \int_{0}^{1}\frac{d}{dt}\,\psi_0(\theta + t\,h)\,dt
= \int_{0}^{1}\dot{\psi}_0(\theta + t\,h)(h)\,dt.
\]
Adding and subtracting \(\dot{\psi}_0(\theta)(h)\) inside the integral gives
\[
\psi_0(\theta + h)
= \psi_0(\theta) + \dot{\psi}_0(\theta)(h)
  + \int_{0}^{1}\bigl[\dot{\psi}_0(\theta + t\,h)-\dot{\psi}_0(\theta)\bigr](h)\,dt.
\]
Hence we may identify
\[
\mathrm{Rem}_{\theta}(h)
= \int_{0}^{1}\bigl[\dot{\psi}_0(\theta + t\,h)-\dot{\psi}_0(\theta)\bigr](h)\,dt.
\]
By Condition~\ref{cond::smoothfunctional}(ii), the map \(\theta\mapsto\dot{\psi}_0(\theta)\) is \(L\)-Lipschitz from \(\|\cdot\|_{\mathcal{H}}\) to the operator norm \(\|\cdot\|_{\mathcal{H}}^*\).  Therefore, for each \(t\in[0,1]\),
\[
\bigl|\bigl[\dot{\psi}_0(\theta + t\,h)-\dot{\psi}_0(\theta)\bigr](h)\bigr|
\;\le\;
L\,\|\theta + t\,h - \theta\|_{\mathcal{H}}\,\|h\|_{\mathcal{H}}
= L\,t\,\|h\|_{\mathcal{H}}^2.
\]
Integrating over \(t\) yields
\[
\bigl|\mathrm{Rem}_{\theta}(h)\bigr|
\le \int_{0}^{1}L\,t\,\|h\|_{\mathcal{H}}^2\,dt
= \tfrac12\,L\,\|h\|_{\mathcal{H}}^2
\;\le\;
L\,\|h\|_{\mathcal{H}}^2,
\]
which completes the proof.
\end{proof}

\begin{lemma}[Quadratic expansion for derivative of risk]
\label{lemma::riskTaylor} Suppose that Condition~\ref{cond::targetsmoothloss} holds.  Then, uniformly over \(\theta \in \mathcal{H}_{\mathcal{P}}\),  \(\eta \in \mathcal{N}_{\mathcal{P}}\), and \(h_1, h_2 \in \mathcal{H}\), we have the expansion
\[
\partial_\theta L_0(\theta + h_2, \eta)(h_1) = \partial_\theta L_0(\theta, \eta)(h_1) + \partial_\theta^2 L_0(\theta, \eta)(h_1, h_2) + \text{Rem}_{\theta, \eta}(h_1, h_2),
\]
 where the remainder term satisfies $| \text{Rem}_{\theta, \eta}(h_1, h_2)| \leq L  \rho_{\mathcal{H}}(h_1) \|h_2\|_{\mathcal{H}}^2$ for some $L < \infty$.
\end{lemma}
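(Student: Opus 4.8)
The plan is to mirror the proof of Lemma~\ref{lemma::functionalTaylor}, and in fact the Fr\'echet-differentiability portion of the proof of Lemma~\ref{lemma:smoothlosssuff}, by reducing the statement to a one-dimensional Taylor expansion with integral remainder. Fix $\theta \in \mathcal{H}_{\mathcal{P}}$, $\eta \in \mathcal{N}_{\mathcal{P}}$, and directions $h_1, h_2$ with $\theta + h_2 \in \mathcal{H}_{\mathcal{P}}$ (the case $\rho_{\mathcal{H}}(h_1) = \infty$ being vacuous), and consider the scalar map $t \mapsto \partial_\theta L_0(\theta + t h_2, \eta)(h_1)$ on $[0,1]$. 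Since $\mathcal{H}_{\mathcal{P}}$ is convex, the segment $\{\theta + t h_2 : t \in [0,1]\}$ lies in $\mathcal{H}_{\mathcal{P}}$, so Condition~\ref{cond::targetsmoothloss::two} applies along it: the map $\theta \mapsto \partial_\theta L_0(\theta, \eta)$ is Fr\'echet differentiable with derivative $\partial_\theta^2 L_0(\theta, \eta)$, whence the scalar map is differentiable in $t$ with derivative $\partial_\theta^2 L_0(\theta + t h_2, \eta)(h_1, h_2)$, continuous in $t$ by the Lipschitz bound of Condition~\ref{cond::targetsmoothloss::three}. Taylor's theorem with integral remainder then yields the claimed expansion with
\[
\mathrm{Rem}_{\theta,\eta}(h_1,h_2) = \int_0^1 \bigl[\partial_\theta^2 L_0(\theta + t h_2, \eta)(h_1, h_2) - \partial_\theta^2 L_0(\theta, \eta)(h_1, h_2)\bigr]\, dt.
\]

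To bound the integrand I would invoke Condition~\ref{cond::targetsmoothloss::three} with the perturbation $\theta' = \theta + t h_2$, $\eta' = \eta$, so that $\|\theta' - \theta\|_{\mathcal{H}} = t\|h_2\|_{\mathcal{H}}$ and $\|\eta' - \eta\|_{\mathcal{N}} = 0$. The one genuine subtlety—and the step I expect to require the most care—is reconciling the normalization. Condition~\ref{cond::targetsmoothloss::three} is stated for arguments with $\|h_1\|_{\mathcal{H}} + \rho_{\mathcal{H}}(h_2) \le 1$, i.e., it measures the \emph{first} slot in $\|\cdot\|_{\mathcal{H}}$ and the \emph{second} in $\rho_{\mathcal{H}}$, whereas the remainder bound asks for $\rho_{\mathcal{H}}(h_1)\|h_2\|_{\mathcal{H}}^2$, with the roles reversed. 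Since the difference $B(\cdot,\cdot) := \partial_\theta^2 L_0(\theta', \eta')(\cdot,\cdot) - \partial_\theta^2 L_0(\theta, \eta)(\cdot,\cdot)$ is bilinear, homogenizing Condition~\ref{cond::targetsmoothloss::three} gives $|B(g_1,g_2)| \lesssim C\,(\|\theta'-\theta\|_{\mathcal{H}} + \|\eta'-\eta\|_{\mathcal{N}})\,\|g_1\|_{\mathcal{H}}\,\rho_{\mathcal{H}}(g_2)$ for all $g_1, g_2$; and because $\partial_\theta^2 L_0(\theta,\eta)$ is symmetric (Condition~\ref{cond::targetsmoothloss::two}), $B$ is symmetric, so I may swap its arguments and instead place $h_1$ in the $\rho_{\mathcal{H}}$ slot and $h_2$ in the $\|\cdot\|_{\mathcal{H}}$ slot. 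This produces the pointwise-in-$t$ estimate $|B(h_1,h_2)| \lesssim C\, t\, \rho_{\mathcal{H}}(h_1)\, \|h_2\|_{\mathcal{H}}^2$.

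It remains only to integrate: $\int_0^1 t\, dt = \tfrac12$, so $|\mathrm{Rem}_{\theta,\eta}(h_1,h_2)| \le L\, \rho_{\mathcal{H}}(h_1)\, \|h_2\|_{\mathcal{H}}^2$ with $L$ a fixed multiple of $C$ (the extra numerical factor coming from converting the sum-normalization $\|h_1\|_{\mathcal{H}} + \rho_{\mathcal{H}}(h_2) \le 1$ into a product bound). All constants are uniform in $\theta, \eta, h_1, h_2$ because the constant $C$ in Condition~\ref{cond::targetsmoothloss::three} is, which delivers the asserted uniformity. The only place demanding attention beyond routine bookkeeping is the symmetry-and-homogenization swap just described; everything else is a direct transcription of the argument already used for the remainder $R(h_1,h_2)$ in the proof of Lemma~\ref{lemma:smoothlosssuff}.
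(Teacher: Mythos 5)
Your proposal is correct and follows essentially the same route as the paper's proof: a one-dimensional Taylor expansion with integral remainder of $t \mapsto \partial_\theta L_0(\theta + t h_2, \eta)(h_1)$, followed by the Lipschitz bound of Condition~\ref{cond::targetsmoothloss::three} applied with $\theta' = \theta + t h_2$, $\eta' = \eta$, and integration over $t$. The paper's proof invokes Conditions~\ref{cond::targetsmoothloss::two} and~\ref{cond::targetsmoothloss::three} together to assert the pointwise bound $t L \|h_2\|_{\mathcal{H}}^2 \rho_{\mathcal{H}}(h_1)$ without comment; your explicit symmetry-and-homogenization swap (placing $h_1$ in the $\rho_{\mathcal{H}}$ slot via the symmetry of $\partial_\theta^2 L_0$) is exactly the step the paper leaves implicit, so your write-up is, if anything, more careful on that point.
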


\begin{proof}
Let $\theta, h_1, h_2 \in \mathcal{H}_{\mathcal{P}}$ and $\eta \in \mathcal{N}_{\mathcal{P}}$. By Condition~\ref{cond::targetsmoothloss::two} and the integral form of Taylor's theorem applied to the map $t \mapsto \partial_\theta L_0(\theta + t h_2,\eta)(h_1),$ we have
\[
\partial_\theta L_0(\theta + h_2, \eta)(h_1) - \partial_\theta L_0(\theta, \eta)(h_1) = \int_0^1 \partial_\theta^2 L_0(\theta + t h_2, \eta)(h_1, h_2) \, dt.
\]
Thus, we may write
\[
\partial_\theta L_0(\theta + h_2,\eta)(h_1) = \partial_\theta L_0(\theta,\eta)(h_1) + \partial_\theta^2 L_0(\theta,\eta)(h_1,h_2) + \text{Rem}(h_1,h_2),
\]
where the remainder is given by
\[
\text{Rem}(h_1,h_2) = \int_0^1 \Bigl[ \partial_\theta^2 L_0(\theta+ t h_2, \eta)(h_1,h_2) - \partial_\theta^2 L_0(\theta,\eta)(h_1,h_2)\Bigr]\,dt.
\]
By Conditions \ref{cond::targetsmoothloss::two} and \ref{cond::targetsmoothloss::three}, there exists a constant \(L < \infty\) such that
\begin{align*}
  \left| \partial_\theta^2 L_0(\theta + t h_2, \eta)(h_1, h_2) - \partial_\theta^2 L_0(\theta, \eta)(h_1, h_2) \right| &\leq t L \|h_2\|_{\mathcal{H}}^2 \rho_{\mathcal{H}}(h_1)
\end{align*}
uniformly over \(t \in [0,1]\) and \(h_1, h_2 \in \mathcal{H}_{\mathcal{P}}\). Hence, the remainder term satisfies $|\text{Rem}(h_1, h_2)| \leq L  \rho_{\mathcal{H}}(h_1) \|h_2\|_{\mathcal{H}}^2.$
\end{proof}

\begin{lemma}[Quadratic expansion for the cross derivative of risk]
\label{lemma::riskTaylorcross}  Suppose that Condition~\ref{cond::nuisancesmooth} holds. Then, uniformly over \(\theta \in \mathcal{H}_{\mathcal{P}}\),  \(\eta \in \mathcal{N}_{\mathcal{P}}\), \(h \in \mathcal{H}\), and \(g \in \mathcal{N}\), we have the expansion
\[
\partial_\theta L_0(\theta,\eta+g)(h) = \partial_\theta L_0(\theta,\eta)(h) + \partial_\eta\partial_\theta L_0(\theta,\eta)(g,h) + \text{Rem}_{\theta, \eta}(g, h),
\]
 where  $|\text{Rem}_{\theta, \eta}(g, h)| \leq L'  \rho_{\mathcal{H}}(h)\, \|g\|_{\mathcal{N}}^2$ for some constant \(L' < \infty\).
\end{lemma}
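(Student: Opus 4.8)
The plan is to replicate, almost verbatim, the integral-remainder argument used in the proofs of Lemmas~\ref{lemma::functionalTaylor} and~\ref{lemma::riskTaylor}, now differentiating in the nuisance direction rather than in \(\theta\). Fix \(\theta \in \mathcal{H}_{\mathcal{P}}\), \(\eta \in \mathcal{N}_{\mathcal{P}}\), \(h \in \mathcal{H}\), and \(g \in \mathcal{N}\), and consider the scalar path \(s \mapsto \partial_\theta L_0(\theta, \eta + s g)(h)\) on \([0,1]\). By Condition~\ref{cond::crossderiv}, the map \(\eta \mapsto \partial_\theta L_0(\theta, \eta)(h)\) is Fr\'echet differentiable with derivative \(g \mapsto \partial_\eta \partial_\theta L_0(\theta, \eta)(g, h)\); this path is therefore absolutely continuous with derivative \(s \mapsto \partial_\eta \partial_\theta L_0(\theta, \eta + sg)(g, h)\), and the fundamental theorem of calculus gives
\[
\partial_\theta L_0(\theta, \eta + g)(h) - \partial_\theta L_0(\theta, \eta)(h) = \int_0^1 \partial_\eta \partial_\theta L_0(\theta, \eta + s g)(g, h)\, ds .
\]
Adding and subtracting \(\partial_\eta \partial_\theta L_0(\theta, \eta)(g, h)\) inside the integral isolates the linear term and identifies the remainder as
\[
\text{Rem}_{\theta,\eta}(g,h) = \int_0^1 \bigl[ \partial_\eta \partial_\theta L_0(\theta, \eta + s g)(g, h) - \partial_\eta \partial_\theta L_0(\theta, \eta)(g, h) \bigr]\, ds .
\]

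Next I would bound the integrand using Lipschitz control on the cross-derivative. The cross-derivative \(\partial_\eta \partial_\theta L_0(\theta, \eta)(\cdot,\cdot)\) is bilinear in \((g,h)\)—linear in \(g\) as a directional derivative, and linear in \(h\) because \(\partial_\theta L_0\) is a continuous linear operator in its increment—so the normalized bound of Condition~\ref{cond::crossderivlipschitz} (stated for \(\|g\|_{\mathcal{N}} + \rho_{\mathcal{H}}(h) \le 1\)) rescales, up to a universal constant absorbed into \(L'\), to \(\bigl| \partial_\eta \partial_\theta L_0(\theta, \eta')(g,h) - \partial_\eta \partial_\theta L_0(\theta, \eta)(g,h) \bigr| \le C\, \|\eta' - \eta\|_{\mathcal{N}}\, \|g\|_{\mathcal{N}}\, \rho_{\mathcal{H}}(h)\) for the relevant argument pairs. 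Taking \(\eta' = \eta + s g\), so \(\|\eta' - \eta\|_{\mathcal{N}} = s\|g\|_{\mathcal{N}}\), and integrating \(\int_0^1 C\, s\, \|g\|_{\mathcal{N}}^2 \rho_{\mathcal{H}}(h)\, ds = \tfrac{C}{2}\|g\|_{\mathcal{N}}^2 \rho_{\mathcal{H}}(h)\) yields \(|\text{Rem}_{\theta,\eta}(g,h)| \le L' \rho_{\mathcal{H}}(h)\|g\|_{\mathcal{N}}^2\) with \(L' = C/2\), which is the claim.

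The subtle point—and the main obstacle—is that Condition~\ref{cond::crossderivlipschitz} is \emph{anchored} at the base point \((\theta_0, \eta_0)\): it controls \(\partial_\eta \partial_\theta L_0(\theta_0, \eta) - \partial_\eta \partial_\theta L_0(\theta_0, \eta_0)\), not the difference between two arbitrary nuisance arguments. Hence the rescaled two-point bound above, and with it the clean \(O(\|g\|_{\mathcal{N}}^2)\) remainder, follows immediately when the expansion is centered at \(\eta = \eta_0\) (and at \(\theta = \theta_0\))—precisely the configuration invoked by the functional von Mises expansion of Theorem~\ref{theorem::vonmises}. To obtain the fully uniform-in-\(\eta\) statement one additionally needs a genuine two-point Lipschitz bound on \(\eta \mapsto \partial_\eta \partial_\theta L_0(\theta, \eta)\) (rather than the anchored form), together with membership \(\eta + s g \in \mathcal{N}_{\mathcal{P}}\) along the segment so that the cross-derivative is defined; both hold, for instance, for the smooth-loss class of Lemma~\ref{lemma:smoothlosssuff}, whose proof establishes full joint Lipschitz continuity of \((\theta, \eta) \mapsto \partial_\eta \partial_\theta L_0(\theta, \eta)\). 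I would therefore either restrict to the base point or invoke such a two-point bound, in which case the argument above carries over with \(\eta_0\) replaced by the general center \(\eta\). All remaining steps—justifying the integral form via the continuity supplied by the Lipschitz estimate, and the bilinear rescaling—are routine and mirror Lemma~\ref{lemma::riskTaylor}.
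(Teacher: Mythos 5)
Your proof is correct and follows essentially the same route as the paper's: Taylor's theorem with integral remainder applied to \(t \mapsto \partial_\theta L_0(\theta, \eta + t g)(h)\), identification of the remainder as the integrated difference of cross-derivatives, bilinear rescaling of the normalized Lipschitz bound in Condition~\ref{cond::crossderivlipschitz}, and integration of the factor \(t\) to get the constant. The one place you diverge is your caution about the anchoring of the conditions, and that caution is warranted---it applies to the paper's own proof. Condition~\ref{cond::nuisancesmooth} states both the differentiability (\ref{cond::crossderiv}) and the Lipschitz bound (\ref{cond::crossderivlipschitz}) at \(\theta_0\), with the Lipschitz bound additionally anchored at \(\eta_0\), yet the lemma claims an expansion centered at arbitrary \((\theta,\eta) \in \mathcal{H}_{\mathcal{P}} \times \mathcal{N}_{\mathcal{P}}\); the paper's proof silently applies \ref{cond::crossderivlipschitz} as a two-point bound between \(\eta + t g\) and \(\eta\) at a generic \(\theta\), which the condition as written does not literally provide (and, as you implicitly recognize, a triangle-inequality detour through \(\eta_0\) does not recover it, since the resulting bound fails to vanish with \(t\)). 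Your two repairs are exactly the right ones: either center the expansion at \((\theta_0,\eta_0)\) with \(g = \eta - \eta_0\)---in which case convexity of \(\mathcal{N}_{\mathcal{P}}\) keeps the segment inside \(\mathcal{N}_{\mathcal{P}}\), the anchored condition applies verbatim, and this is the only configuration actually invoked in the proofs of Theorems~\ref{theorem::vonmises} and~\ref{theorem::EIF}---or assume a genuine two-point Lipschitz property, which holds for the smooth-loss class of Lemma~\ref{lemma:smoothlosssuff}. In short, your argument matches the paper's, and the ``obstacle'' you flag is a real (if minor and inconsequential-in-use) imprecision in the paper rather than a defect of your proof.
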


\begin{proof}
Let \(\theta \in \mathcal{H}_{\mathcal{P}}\), \(\eta \in \mathcal{N}_{\mathcal{P}}\), \(g \in \mathcal{N}\), and \(h \in \mathcal{H}\). By Condition~\ref{cond::nuisancesmooth} and the integral form of Taylor's theorem applied to the map $t \mapsto \partial_\theta L_0(\theta,\eta+t\, g)(h),$ we have
\[
\partial_\theta L_0(\theta,\eta+g)(h) - \partial_\theta L_0(\theta,\eta)(h) = \int_0^1 \partial_\eta\partial_\theta L_0(\theta,\eta+t\, g)(g,h) \, dt.
\]
Thus, we may write
\[
\partial_\theta L_0(\theta,\eta+g)(h) = \partial_\theta L_0(\theta,\eta)(h) + \partial_\eta\partial_\theta L_0(\theta,\eta)(g,h) + \text{Rem}(g,h),
\]
where the remainder is given by
\[
\text{Rem}(g,h) = \int_0^1 \Bigl[\partial_\eta\partial_\theta L_0(\theta,\eta+t\, g)(g,h) - \partial_\eta\partial_\theta L_0(\theta,\eta)(g,h)\Bigr]\,dt.
\]
By Condition~\ref{cond::crossderivlipschitz}, there exists a constant \(L' < \infty\) such that, for each \(t \in [0,1]\),
\[
\Bigl|\partial_\eta\partial_\theta L_0(\theta,\eta+t\, g)(g,h) - \partial_\eta\partial_\theta L_0(\theta,\eta)(g,h)\Bigr| \le t\, L' \|g\|_{\mathcal{N}}^2\, \rho_{\mathcal{H}}(h).
\]
Integrating over \(t\) from 0 to 1 yields
\[
|\text{Rem}(g,h)| \le \int_0^1 t\, L' \|g\|_{\mathcal{N}}^2\, \rho_{\mathcal{H}}(h)\, dt = \frac{L'}{2} \rho_{\mathcal{H}}(h)\, \|g\|_{\mathcal{N}}^2.
\]
Hence, $|\text{Rem}(g,h)| \leq L'  \rho_{\mathcal{H}}(h) \|g\|_{\mathcal{N}}^2$. 
\end{proof}

\subsection{Proof of functional von Mises expansion}

\label{appendix::vonmises}

\begin{lemma}
    Suppose Conditions \ref{cond::uniqueness}-\ref{cond::targetsmoothloss} and \ref{cond::PDHessian} hold. Then, there exists a unique Hessian Riesz representer $\alpha_0 \in \overline{\mathcal{H}}$ such that the linear functional $\dot{\psi}_0(\theta_0)$ admits the representation:$$\dot{\psi}_0(\theta_0)(h) = \partial_\theta^2 L_0(\theta_0, \eta_0)(\alpha_0, h), \quad \forall h \in \mathcal{H}. $$
\end{lemma}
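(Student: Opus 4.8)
The plan is to recognize the claimed identity as an instance of the Riesz representation theorem, applied in the Hilbert space obtained by equipping $\overline{\mathcal{H}}$ with the Hessian bilinear form. Write $B(h_1,h_2) := \partial_\theta^2 L_0(\theta_0,\eta_0)(h_1,h_2)$. First I would verify that $B$ is a genuine inner product on $\mathcal{H}$. Bilinearity is immediate from the definition of the second Fr\'echet derivative as a bilinear map, and symmetry is guaranteed by Condition~\ref{cond::targetsmoothloss::two}. Condition~\ref{cond::PDHessian} supplies the two-sided bound $\kappa_1\|h\|_{\mathcal{H}}^2 \le B(h,h) \le \kappa_2\|h\|_{\mathcal{H}}^2$; the lower bound gives positive-definiteness ($B(h,h)>0$ for $h \neq 0$) and coercivity, while the upper bound controls the induced seminorm $\|h\|_B := \sqrt{B(h,h)}$. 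Since $B$ is positive-definite and symmetric, the Cauchy--Schwarz inequality yields $|B(h_1,h_2)| \le \|h_1\|_B\|h_2\|_B \le \kappa_2\|h_1\|_{\mathcal{H}}\|h_2\|_{\mathcal{H}}$, so $B$ is bounded off the diagonal as well. Together these facts show that $\|\cdot\|_B$ and $\|\cdot\|_{\mathcal{H}}$ are equivalent norms on $\mathcal{H}$.

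Next I would pass to the completion. Because $\|\cdot\|_B$ and $\|\cdot\|_{\mathcal{H}}$ are equivalent, they determine the same Cauchy sequences, so the $\|\cdot\|_{\mathcal{H}}$-completion $\overline{\mathcal{H}}$ coincides, as a set with equivalent norms, with the $\|\cdot\|_B$-completion. The bounded bilinear form $B$ then extends uniquely by density to a bounded symmetric bilinear form on $\overline{\mathcal{H}} \times \overline{\mathcal{H}}$, and the coercivity bound persists under this extension, so $(\overline{\mathcal{H}}, B)$ is a Hilbert space. From Condition~\ref{cond::smoothfunctional}, Fr\'echet differentiability of $\psi_0$ means that $\dot{\psi}_0(\theta_0)$ is a bounded linear functional on $(\mathcal{H}, \|\cdot\|_{\mathcal{H}})$; by norm equivalence it is also bounded for $\|\cdot\|_B$, and hence extends uniquely to a continuous linear functional on the Hilbert space $(\overline{\mathcal{H}}, B)$.

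Finally I would invoke the Riesz representation theorem on $(\overline{\mathcal{H}}, B)$: there exists a unique $\alpha_0 \in \overline{\mathcal{H}}$ with $\dot{\psi}_0(\theta_0)(h) = B(\alpha_0, h)$ for all $h \in \overline{\mathcal{H}}$, and in particular for all $h \in \mathcal{H}$, which is the claimed representation. Uniqueness of $\alpha_0$ is precisely the uniqueness clause of the Riesz theorem and rests on the positive-definiteness from Condition~\ref{cond::PDHessian}. The only point requiring genuine care, as opposed to routine verification, is the passage to the completion: one must check that the Hessian form and the functional $\dot{\psi}_0(\theta_0)$ both extend continuously and consistently from the possibly incomplete space $(\mathcal{H}, \|\cdot\|_{\mathcal{H}})$ to $\overline{\mathcal{H}}$, since the representer $\alpha_0$ need not lie in $\mathcal{H}$ itself. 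This is where the norm equivalence from Condition~\ref{cond::PDHessian} does the essential work, simultaneously guaranteeing that $B$ extends to an inner product and that continuity of $\dot{\psi}_0(\theta_0)$ is preserved.
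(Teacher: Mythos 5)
Your proof is correct and follows essentially the same route as the paper's: Conditions~\ref{cond::targetsmoothloss::two} and~\ref{cond::PDHessian} give equivalence between $\|\cdot\|_{\mathcal{H}}$ and the Hessian-induced norm, Condition~\ref{cond::smoothfunctional} gives boundedness of $\dot{\psi}_0(\theta_0)$ with respect to that norm, and the Riesz representation theorem on the completion $\overline{\mathcal{H}}$ delivers existence and uniqueness of $\alpha_0$. Your write-up is simply a more careful version of the paper's argument, making explicit the extension of the bilinear form and of the functional from $\mathcal{H}$ to $\overline{\mathcal{H}}$, which the paper passes over in a single sentence.
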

\begin{proof}
    By Conditions~\ref{cond::targetsmoothloss::two} and~\ref{cond::PDHessian}, we have the norm equivalence: there exist constants \(0 < c < C < \infty\) such that $c \|\cdot\|_{\mathcal{H}} \leq \sqrt{\partial_\theta^2 L_0(\theta_0, \eta_0)(\cdot, \cdot)} \leq C \|\cdot\|_{\mathcal{H}},$ so that \(\overline{\mathcal{H}}\) is also closed under the norm induced by the Hessian inner product \(\partial_\theta^2 L_0(\theta_0, \eta_0)\).
 Consequently, by \ref{cond::smoothfunctional}, the functional derivative $\dot{\psi}_0(\theta_0)$ defines a bounded linear functional on $\mathcal{H}$ with respect to the inner product \(\partial_\theta^2 L_0(\theta_0, \eta_0)(\cdot, \cdot)\). By the Riesz representation theorem, there exists a unique Hessian Riesz representer $\alpha_0 \in \overline{\mathcal{H}}$ such that the linear functional $\dot{\psi}_0(\theta_0)$ admits the representation in \eqref{eqn::rieszrep}.
\end{proof}

\begin{proof}[Proof of Theorem \ref{theorem::vonmises}]
Let \(\theta, \alpha \in \mathcal{H}\) and \(\eta \in \mathcal{N}\) with \(\max \{\rho_{\mathcal{H}}(\theta), \rho_{\mathcal{H}}(\alpha), \|\eta\|_{\mathcal{G}, \diamond}\} < M'\) for some \(M' < \infty\). Hereafter, we will absorb \(M'\) into the constants appearing in our big-\(O\) notation.

Applying Lemma \ref{lemma::functionalTaylor}  with \ref{cond::boundedtrue} and \ref{cond::smoothfunctional} at $ \theta_0$ in the direction $h := \theta - \theta_0$, we obtain the Taylor expansion 
\begin{equation}
\begin{aligned}
    \psi_0(\theta) - \psi_0(\theta_0) - P_0\dot{\ell}_{\eta}(\theta)(\alpha)  &= P_0\{m(\cdot, \theta)  - m(\cdot, \theta_0)\} -\partial_{\theta} L_0(\theta, \eta)(\alpha)   \\
    &=\dot{\psi}_0(\theta_0)( \theta - \theta_0)  -  \partial_{\theta} L_0(\theta, \eta)(\alpha)  +  \delta_{\mathrm{lin}}O\left(\|\theta - \theta_0\|_{\mathcal{H},{P_0}}^2\right), \label{eqn::decompfirst}
\end{aligned}
\end{equation}where $\delta_{\mathrm{lin}} = 0$ if the expansion holds without remainder.
Note that by \ref{cond::smoothfunctional}, \ref{cond::PDHessian}, and the Riesz representation property of $\alpha_0$ in \eqref{eqn::rieszrep}, we have $h \mapsto \dot{\psi}_0(\theta_0)(h) = \partial_{\theta}^2 L_0(\theta_0, \eta_0)(\alpha_0, h)$. Thus, the first two terms on the right-hand side satisfy
\begin{equation}
\begin{aligned}
  \dot{\psi}_0(\theta_0)( \theta - \theta_0)  - \partial_{\theta} L_0(\theta, \eta)(\alpha)  &=  \partial_{\theta}^2 L_0(\theta_0, \eta_0)(\alpha_0, \theta - \theta_0)  - \partial_{\theta} L_0(\theta, \eta)(\alpha).\label{eqn::decomp1}
\end{aligned}
\end{equation}
By Lemma \ref{lemma::riskTaylor}  with \ref{cond::boundedtrue} and \ref{cond::targetsmoothloss}, we have the second-order Taylor expansion:
\begin{equation}
\begin{aligned}
 \partial_{\theta} L_0(\theta, \eta)(\alpha) &=\partial_\theta L_0(\theta_0, \eta)(\alpha)  +  \partial_\theta^2 L_0(\theta_0, \eta)(\theta - \theta_0, \alpha)  + \delta_{\mathrm{quad}} O(\|\theta - \theta_0\|_{\mathcal{H}}^2),\label{eqn::second}
\end{aligned}
\end{equation}
where $\delta_{\mathrm{quad}} = 0$ if the expansion holds without remainder.
By Lemma \ref{lemma::riskTaylorcross} with \ref{cond::boundedtrue} and \ref{cond::nuisancesmooth}, we can expand the first term on the right-hand side in $\eta$:
\begin{equation}
\begin{aligned}
  \partial_\theta L_0(\theta_0, \eta)(\alpha)  &=  \partial_\theta L_0(\theta_0, \eta_0)(\alpha)  +  \partial_{\eta} \partial_\theta L_0(\theta_0, \eta_0)(\eta - \eta_0, \alpha)  +  O(\|\eta - \eta_0\|_{\mathcal{N}}^2). \label{eqn::third}
\end{aligned}
\end{equation}
Since $\theta_0$ uniquely minimizes the risk by \ref{cond::uniqueness}, and the risk is smooth by \ref{cond::targetsmoothloss::one}, the first-order optimality condition implies that $\partial_\theta L_0(\theta_0, \eta_0)(\alpha) = 0$. Moreover, $\partial_{\eta} \partial_\theta L_0(\theta_0, \eta_0)(\eta - \eta_0, \alpha) = 0$ by Neyman orthogonality at $(\theta_0, \eta_0)$ in \ref{cond::orthogonal}. Hence, \eqref{eqn::third} implies that $\partial_\theta L_0(\theta_0, \eta)(\alpha)  = O(\|\eta - \eta_0\|_{\mathcal{N}}^2)$.
Therefore, substituting this relation into \eqref{eqn::second}, we have
\begin{align*}
 \partial_{\theta} L_0(\theta, \eta)(\alpha) =  \partial_\theta^2 L_0(\theta_0, \eta)(\theta - \theta_0, \alpha)  + \delta_{\mathrm{quad}}O( \|\theta - \theta_0\|_{\mathcal{H}}^2)+ O(\|\eta - \eta_0\|_{\mathcal{N}}^2). 
\end{align*}
Returning to \eqref{eqn::decomp1}, we obtain
\begin{align*}
     \dot{\psi}_0(\theta_0)( \theta - \theta_0)  - \partial_{\theta} L_0(\theta, \eta)(\alpha)  &=  \partial_{\theta}^2 L_0(\theta_0, \eta_0)(\alpha_0, \theta - \theta_0)  -\partial_\theta^2 L_0(\theta_0, \eta)(\theta - \theta_0, \alpha) \\
     & \quad + \delta_{\mathrm{quad}}O( \|\theta - \theta_0\|_{\mathcal{H}}^2)+ O(\|\eta - \eta_0\|_{\mathcal{N}}^2).
\end{align*}
By Condition \ref{cond::targetsmoothloss::three}, we have the uniform Lipschitz continuity property:
\begin{equation}
\begin{aligned}
    \left|\partial_\theta^2 L_0(\theta_0, \eta)(\theta - \theta_0, \alpha) - \partial_\theta^2 L_0(\theta_0, \eta_0)(\theta - \theta_0, \alpha) \right| & \lesssim \|\eta- \eta_0\|_{\mathcal{H}} \|\theta - \theta_0\|_{\mathcal{H}}\rho_{\mathcal{H}}(\alpha)\\
    & =  O(\|\eta- \eta_0\|_{\mathcal{H}} \|\theta - \theta_0\|_{\mathcal{H}})
\end{aligned} \label{proofeqn::lipschitz}
\end{equation}
Hence,
\begin{align*}
     \dot{\psi}_0(\theta_0)( \theta - \theta_0)  - \partial_{\theta} L_0(\theta, \eta)(\alpha)  &=  \partial_{\theta}^2 L_0(\theta_0, \eta_0)(\alpha_0 - \alpha, \theta - \theta_0) \\
     & \quad + \delta_{\mathrm{quad}}O( \|\theta - \theta_0\|_{\mathcal{H}}^2)+ O(\|\eta - \eta_0\|_{\mathcal{N}}^2)+    O(\|\eta- \eta_0\|_{\mathcal{H}} \|\theta - \theta_0\|_{\mathcal{H}}).
\end{align*}
Combining the previous expression with \eqref{eqn::decompfirst} and \eqref{eqn::decomp1}, this implies that
\begin{align*}
       \psi_0(\theta) - \psi_0(\theta_0) - P_0\dot{\ell}_{\eta}(\theta)(\alpha)  &= \partial_{\theta}^2 L_0(\theta_0, \eta_0)(\alpha_0 - \alpha,\theta - \theta_0)  +  (\delta_{\mathrm{lin}} + \delta_{\mathrm{quad}})O\left(\|\theta - \theta_0\|_{\mathcal{H}}^2\right)  \\
     & \quad   +   O(\|\eta- \eta_0\|_{\mathcal{H}} \|\theta - \theta_0\|_{\mathcal{H}}) + O\left(\|\eta - \eta_0\|^2_{\mathcal{N}} \right).
\end{align*}

\end{proof}

The following theorem shows that when the loss is universally Neyman-orthogonal and the improved Lipschitz bound
\[
\left|\partial_\theta^2 L_0(\theta_0, \eta)(\theta - \theta_0, \alpha) - \partial_\theta^2 L_0(\theta_0, \eta_0)(\theta - \theta_0, \alpha) \right| = O\left(\|\eta - \eta_0\|_{\mathcal{N}}^2\right)
\]
can be established, the cross-product remainder term \(O\left(\|\eta - \eta_0\|_{\mathcal{N}} \|\theta - \theta_0\|_{\mathcal{H}}\right)\) may be dropped from the von Mises expansion.

\begin{theorem}[Von Mises expansion under universal Neyman‐orthogonality]
\label{theorem::neymanorthogonality_secondorder}
Assume the conditions of Theorem~\ref{theorem::vonmises}, and in addition:
\begin{enumerate}[label=(\roman*)]
  \item (\emph{Universal Neyman‐orthogonality}) $\partial_\eta \partial_\theta^2 L_0(\theta_0, \eta_0)
      \bigl(g ,\, h_1 ,\, h_2\bigr)
    = 0
    \quad \text{for all } h, h_1 \in \mathcal{H}, g \in \mathcal{N}.$
  
\item \textit{(Fr\'echet differentiability of the Hessian in $\eta$)} For each $\theta \in \mathcal{H}_{\mathcal{P}}$, the map $\eta \mapsto \partial_\theta^2 L_0(\theta,\eta)$ is Fr\'echet differentiable as a map $(\mathcal{N},\|\cdot\|_{\mathcal N}) \to (\mathcal{H}_{\mathcal{P}}\times\mathcal{H}_{\mathcal{P}},\rho_{\mathcal H} +\rho_{\mathcal H})^*$, with derivative $\partial_\eta \partial_\theta^2 L_0(\theta,\eta)\colon \mathcal{N}\times\mathcal{H}_{\mathcal{P}}\times\mathcal{H}_{\mathcal{P}}\to\mathbb{R}$.

\item \textit{(Lipschitz continuity of cross‐derivative)}  
There exists a constant $C < \infty$ such that, for all $\eta,\eta' \in \mathcal{N}_{\mathcal{P}}$, $n \in \mathcal{N}_{\mathcal{P}}$, and $h_1,h_2 \in \mathcal{H}_{\mathcal{P}}$ with $\|n\|_{\mathcal{N}} +  \rho_{\mathcal{H}}(h_1)  + \rho_{\mathcal{H}}(h_2) \le 1$, we have
\[
\big| \partial_\eta \partial_\theta^2 L_0(\theta_0,\eta')(n,h_1,h_2)
      - \partial_\eta \partial_\theta^2 L_0(\theta_0,\eta)(n,h_1,h_2) \big|
\ \le\ C\,\|\eta' - \eta\|_{\mathcal{N}}.
\]

\end{enumerate}
Then, for all \(\eta \in \mathcal{N}_{\mathcal{P}},\, \theta \in \mathcal{H}_{\mathcal{P}},\, \alpha \in \mathcal{H}\) satisfying $\rho_{\mathcal{H}}(\alpha) < M,$
the following expansion holds:
\begin{align*}
    \psi_0(\theta) - \psi_0(\theta_0) 
    &= \int \dot{\ell}_{\eta}(\theta)(\alpha)(z)\, P(dz) \\
    &\quad + \partial_\theta^2 L_0(\theta_0, \eta_0)(\alpha_0 - \alpha,\, \theta - \theta_0) \\
    &\quad + \left(\delta_{\mathrm{lin}} + \delta_{\mathrm{quad}}\right) O\left(\|\theta - \theta_0\|_{\mathcal{H}}^2\right)
    + O\left(\|\eta - \eta_0\|_{\mathcal{N}}^2\right),
\end{align*}
where the constants in the big-\(O\) notation depend only on \(M\) and universal constants appearing in our assumptions.
\end{theorem}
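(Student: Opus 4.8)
The plan is to follow the proof of Theorem~\ref{theorem::vonmises} line by line, modifying only the single step in which the crude cross-product bound \eqref{proofeqn::lipschitz} was invoked. In that proof the mixed remainder $O(\|\eta - \eta_0\|_{\mathcal{H}} \|\theta - \theta_0\|_{\mathcal{H}})$ enters exactly once, when Condition~\ref{cond::targetsmoothloss::three} is used to control the Hessian discrepancy $\partial_\theta^2 L_0(\theta_0, \eta)(\theta - \theta_0, \alpha) - \partial_\theta^2 L_0(\theta_0, \eta_0)(\theta - \theta_0, \alpha)$. My goal is to replace that first-order Lipschitz estimate with a genuinely second-order bound in $\|\eta - \eta_0\|_{\mathcal{N}}$, which can then be absorbed into the $O(\|\eta - \eta_0\|_{\mathcal{N}}^2)$ term already present in the expansion, so that the mixed term disappears.

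The refined bound follows from a cross-derivative Taylor expansion of the Hessian bilinear form, directly analogous to Lemma~\ref{lemma::riskTaylorcross} but applied to $\partial_\theta^2 L_0$ rather than $\partial_\theta L_0$. Fix $\theta \in \mathcal{H}_{\mathcal{P}}$ and $\alpha$ with $\rho_{\mathcal{H}}(\alpha) < M$, and note that $\eta_0 + t(\eta - \eta_0) \in \mathcal{N}_{\mathcal{P}}$ for $t \in [0,1]$ by convexity of $\mathcal{N}_{\mathcal{P}}$. Applying hypothesis (ii) and the integral form of Taylor's theorem to the map $t \mapsto \partial_\theta^2 L_0(\theta_0, \eta_0 + t(\eta - \eta_0))(\theta - \theta_0, \alpha)$ gives
\begin{align*}
&\partial_\theta^2 L_0(\theta_0, \eta)(\theta - \theta_0, \alpha) - \partial_\theta^2 L_0(\theta_0, \eta_0)(\theta - \theta_0, \alpha) \\
&\qquad = \int_0^1 \partial_\eta \partial_\theta^2 L_0\bigl(\theta_0, \eta_0 + t(\eta - \eta_0)\bigr)(\eta - \eta_0, \theta - \theta_0, \alpha)\, dt.
\end{align*}
Adding and subtracting the integrand evaluated at $t = 0$, the term $\partial_\eta \partial_\theta^2 L_0(\theta_0, \eta_0)(\eta - \eta_0, \theta - \theta_0, \alpha)$ vanishes identically by the universal Neyman-orthogonality of the Hessian in hypothesis (i). The remaining integral is controlled, using multilinearity together with the Lipschitz estimate in hypothesis (iii), by
\[
\int_0^1 C\, t\, \|\eta - \eta_0\|_{\mathcal{N}}^2\, \rho_{\mathcal{H}}(\theta - \theta_0)\, \rho_{\mathcal{H}}(\alpha)\, dt = \tfrac{C}{2}\, \rho_{\mathcal{H}}(\theta - \theta_0)\, \rho_{\mathcal{H}}(\alpha)\, \|\eta - \eta_0\|_{\mathcal{N}}^2,
\]
which is $O(\|\eta - \eta_0\|_{\mathcal{N}}^2)$ since $\theta, \theta_0 \in \mathcal{H}_{\mathcal{P}}$ and $\rho_{\mathcal{H}}(\alpha) < M$ are uniformly bounded.

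Substituting this estimate in place of \eqref{proofeqn::lipschitz} at the corresponding point of the proof of Theorem~\ref{theorem::vonmises}, and keeping every other step unchanged, yields the stated expansion with only the $O(\|\eta - \eta_0\|_{\mathcal{N}}^2)$ nuisance term retained. I expect no deep obstacle here; the principal care required is bookkeeping the norms, since hypothesis (iii) measures the two $\mathcal{H}$-directions in $\rho_{\mathcal{H}}$ and the nuisance direction in $\|\cdot\|_{\mathcal{N}}$. One must therefore confirm that $\rho_{\mathcal{H}}(\theta - \theta_0)$ and $\rho_{\mathcal{H}}(\alpha)$ stay uniformly bounded---which holds on $\mathcal{H}_{\mathcal{P}}$ and under the standing restriction $\rho_{\mathcal{H}}(\alpha) < M$---so that these factors fold into a universal constant while the quadratic dependence on $\eta - \eta_0$ is preserved.
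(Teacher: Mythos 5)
Your proposal is correct and follows essentially the same route as the paper's own proof: a Taylor expansion with integral remainder of $\eta \mapsto \partial_\theta^2 L_0(\theta_0,\eta)(\theta-\theta_0,\alpha)$ via hypothesis (ii), annihilation of the first-order term by universal Neyman-orthogonality (i), a Lipschitz bound (iii) yielding the $O(\|\eta-\eta_0\|_{\mathcal{N}}^2)$ control, and substitution in place of \eqref{proofeqn::lipschitz} in the proof of Theorem~\ref{theorem::vonmises}. The only difference is cosmetic bookkeeping of how the $\rho_{\mathcal{H}}$-factors enter the constant, which both arguments handle identically via the boundedness of $\rho_{\mathcal{H}}(\theta-\theta_0)$ and $\rho_{\mathcal{H}}(\alpha)$.
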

\begin{proof}
Let $\Delta := \eta - \eta_0$ and define $F(\eta) := \partial_\theta^2 L_0(\theta_0,\eta)(\theta - \theta_0, \alpha)$.  
By Taylor's theorem with integral remainder in $\dot{\mathcal N}$,
\[
F(\eta) - F(\eta_0) = \int_0^1 \partial_\eta F(\eta_0 + s\Delta)[\Delta]\,ds,
\]
where $\partial_\eta F(\eta)[\Delta] = \partial_\eta \partial_\theta^2 L_0(\theta_0,\eta)(\Delta,\,\theta - \theta_0,\,\alpha)$.  
Universal Neyman--orthogonality (i) yields $\partial_\eta F(\eta_0) = 0$, so
\[
F(\eta) - F(\eta_0) = \int_0^1 \{\partial_\eta F(\eta_0 + s\Delta) - \partial_\eta F(\eta_0)\}[\Delta]\,ds.
\]
By Lipschitz continuity (iii), $\|\partial_\eta F(\eta_0 + s\Delta) - \partial_\eta F(\eta_0)\| \le L\,s\,\|\Delta\|_{\mathcal N,P_0}$, hence
\[
|\{\partial_\eta F(\eta_0 + s\Delta) - \partial_\eta F(\eta_0)\}[\Delta]| \le L\,s\,\|\Delta\|_{\mathcal N,P_0} \bigl(\|\Delta\|_{\mathcal N,P_0} + \rho_{\mathcal H}(\theta - \theta_0) + \rho_{\mathcal H}(\alpha)\bigr).
\]
Since $\max\{\rho_{\mathcal H}(\theta - \theta_0),\ \rho_{\mathcal H}(\alpha)\} \le M$, we have  
$|F(\eta) - F(\eta_0)| \le L\,M^2 \int_0^1 s\,\|\Delta\|_{\mathcal N,P_0}^2\,ds = (L M^2/2)\,\|\Delta\|_{\mathcal N,P_0}^2$.  
Thus
\[
\big|\partial_\theta^2 L_0(\theta_0, \eta)(\theta - \theta_0, \alpha) - \partial_\theta^2 L_0(\theta_0, \eta_0)(\theta - \theta_0, \alpha)\big| = O\bigl(\|\eta - \eta_0\|_{\mathcal N,P_0}^2\bigr),
\]
and substituting into \eqref{proofeqn::lipschitz} in Theorem~\ref{theorem::vonmises} shows that the cross-product remainder $O(\|\eta - \eta_0\|_{\mathcal N,P_0}\,\|\theta - \theta_0\|_{\mathcal H,P_0})$ can be dropped.
\end{proof}

\subsection{Proofs for pathwise differentiability and EIF}

We begin by providing an informal yet illustrative proof of Theorem \ref{theorem::EIF}, intentionally being nonrigorous in our use of differentiation. Subsequently, we present a rigorous proof that applies differentiation with greater care.

\begin{proof}[Informal derivation of EIF for Theorem \ref{theorem::EIF}]
For any bounded score $S \in L^2_0(P_0)$, define the smooth submodel through $P_0 \in \mathcal{P}$ as $dP_{0,\varepsilon} =  (1+ \varepsilon S(z)) dP_0(z)$. The first-order optimality conditions of the M-estimand $\theta_{P_{0,\varepsilon}}$ imply that, for each direction $h \in \mathcal{H}$: 
\begin{equation}\partial_\theta L_{P_{0,\varepsilon}}(\theta_{P_{0,\varepsilon}}, \eta_{0, P_{0,\varepsilon}})(h) = 0. \label{eqn::zero}
\end{equation}
Taking the derivative of both sides, we find that
 \begin{equation*}
     \frac{d}{d\varepsilon} \partial_\theta L_{P_{0,\varepsilon}}(\theta_0, \eta_0)(h) \big|_{\varepsilon = 0} + \frac{d}{d\varepsilon} \partial_\theta L_0(\theta_{P_{0,\varepsilon}}, \eta_0)(h) \big|_{\varepsilon = 0} + \frac{d}{d\varepsilon} \partial_\theta L_0(\theta_0, \eta_{P_{0,\varepsilon}})(h) \big|_{\varepsilon = 0} = 0. 
 \end{equation*}
 By calculations given in the rigorous version of this proof, the first term equals $\langle  \dot{\ell}_{\eta_0}(\theta_0)(h), S\rangle$ and the second term equals $ \partial_\theta^2 L_0(\theta, \eta_0)\left(\frac{d}{d\varepsilon}\theta_{P_{0,\varepsilon}} \big |_{\varepsilon =0}\right) $; the final term is zero by Neyman-orthogonality of the loss. Rearranging terms, we find that
 \begin{equation}
   \partial_\theta^2 L_0(\theta, \eta_0)\left(\frac{d}{d\varepsilon}\theta_{P_{0,\varepsilon}} \big |_{\varepsilon =0}, h\right) = - \langle  \dot{\ell}_{\eta_0}(\theta_0)(h), S\rangle_{L^2(P_0)} .   \label{eqn::totalderiv2}
 \end{equation}
Differentiating the functional and applying the chain rule, we have
 \begin{align}
     \frac{d}{d\varepsilon} \psi_{P_{0,\varepsilon}}(\theta_{P_{0,\varepsilon}}) \big |_{\varepsilon = 0} &=  \dot{\psi}_0(\theta_0)\left(\frac{d}{d\varepsilon}\theta_{P_{0,\varepsilon}} \big |_{\varepsilon =0}\right) + \langle m(\cdot, \theta_0) - \Psi(P) , S\rangle_{L^2(P_0)}. \label{eqn::expandpsi}
 \end{align} 
By the Riesz representation property in \eqref{eqn::rieszrep}, we have 
 $$\dot{\psi}_0(\theta_0)\left(\frac{d}{d\varepsilon}\theta_{P_{0,\varepsilon}} \big |_{\varepsilon =0}\right)  = \partial_{\theta}^2 L_0(\theta_0, \eta_0)\left(\frac{d}{d\varepsilon}\theta_{P_{0,\varepsilon}} \big |_{\varepsilon =0}, \alpha_0\right).$$
Applying \eqref{eqn::totalderiv2} and plugging the above into \eqref{eqn::expandpsi}, we conclude
 \begin{align*}
     \frac{d}{d\varepsilon} \psi_{P_{0,\varepsilon}}(\theta_{P_{0,\varepsilon}}) \big |_{\varepsilon = 0}  = \langle  - \dot{\ell}_{\eta_0}(\theta_0)(\alpha_0) + m(\cdot, \theta_0) - \Psi(P), S \rangle_{L^2(P_0)}.
 \end{align*} 
Since the above holds for any bounded score $S$, we can show that $\chi_0 := - \dot{\ell}_{\eta_0}(\theta_0)(\alpha_0) + m(\cdot, \theta_0) - \Psi(P_0)$ is the efficient influence function of $P \mapsto \psi_0(\theta_0)$.
\end{proof}


\begin{proof}[Proof of Theorem \ref{theorem::EIF}]
By \ref{cond::uniqueness}, the M-estimand $\theta_P$ uniquely exists for each $P \in \mathcal{P}$. For any bounded score \(S \in L^2_0(P_0)\), define the smooth submodel through \(P_0\) as \(dP_{0,\varepsilon} = (1 + \varepsilon S(z))\, dP_0(z)\) with $\varepsilon \in (-\delta, \delta)$ for $\delta > 0$ sufficiently small. Since \(\mathcal{P}\) is convex by assumption, \(P_{0,\varepsilon} \in \mathcal{P}\). Condition~\ref{cond::targetsmoothloss::one} and \eqref{eqn::popminimizer} imply that, for \(\varepsilon\) in a neighborhood of zero, \(\theta_{P_{0,\varepsilon}}\) satisfies the following first-order optimality condition:
\begin{equation}\partial_\theta L_{P_{0,\varepsilon}}(\theta_{P_{0,\varepsilon}}, \eta_{0, P_{0,\varepsilon}})(\alpha_0) = 0. \label{eqn::zero}
\end{equation}
Applying the above and then adding and subtracting terms,
\begin{equation}
\begin{aligned}
  0 =  \partial_\theta L_{P_{0,\varepsilon}}(\theta_{P_{0,\varepsilon}}, \eta_{P_{0,\varepsilon}})(\alpha_0)  
    - \partial_\theta L_0(\theta_0, \eta_0)(\alpha_0)  
    &= \left\{\partial_\theta L_{0}(\theta_{0}, \eta_{P_{0,\varepsilon}})(\alpha_0)- \partial_\theta L_0(\theta_0, \eta_0)(\alpha_0) \right\}\\
    &\quad + \left\{ \partial_\theta L_{0}(\theta_{P_{0,\varepsilon}}, \eta_{P_{0,\varepsilon}})(\alpha_0) - \partial_\theta L_{0}(\theta_{0}, \eta_{P_{0,\varepsilon}})(\alpha_0)\right\} \\
    &\quad + \left\{ \partial_\theta L_{P_{0,\varepsilon}}(\theta_{P_{0,\varepsilon}}, \eta_{P_{0,\varepsilon}})(\alpha_0) - \partial_\theta L_{0}(\theta_{P_{0,\varepsilon}}, \eta_{P_{0,\varepsilon}})(\alpha_0)\right\}.
\end{aligned}
\label{eqn::addsub}
\end{equation}
In what follows, we will show that the three differences on the right-hand side of the above are, respectively, $o(\varepsilon)$, $\partial^2_{\theta}L(\theta_0, \eta_0)(\theta_{P_{0, \varepsilon}} - \theta_0, \alpha_0) + o(\varepsilon)$, and $\langle \dot{\ell}_{\eta_0}(\theta_0), S \rangle_{L^2(P_0)} + o(\varepsilon)$. Consequently, \eqref{eqn::addsub} implies that $ \langle \dot{\ell}_{\eta_0}(\theta_0), S \rangle_{L^2(P_0)} = -\partial^2_{\theta}L(\theta_0, \eta_0)(\varepsilon^{-1}(\theta_{P_{0, \varepsilon}} - \theta_0), \alpha_0) + o(1)$.

To this end, by \ref{cond::nuisancesmooth} and Lemma~\ref{lemma::riskTaylorcross}, the first difference on the right-hand side of \eqref{eqn::addsub} satisfies
\begin{align*}
    \partial_\theta L_0(\theta_0, \eta_{P_{0,\varepsilon}})(\alpha_0) - \partial_\theta L_0(\theta_0, \eta_0)(\alpha_0)
    = \partial_{\eta} \partial_\theta L_0(\theta_0, \eta_0)(\eta_{P_{0,\varepsilon}} - \eta_0, \alpha_0) + O\left( \|\eta_{P_{0,\varepsilon}} - \eta_0\|_{\mathcal{N}}^2 \right).
\end{align*}
By \ref{cond::orthogonal}, we have  $\partial_{\eta} \partial_\theta L_0(\theta_0, \eta_0)(\eta_{P_{0,\varepsilon}} - \eta_0, \alpha_0) = 0,$
and by the Lipschitz Hellinger continuity of \( P \mapsto \eta_P \) in Condition~\ref{cond::smoothparam}, as $\varepsilon \rightarrow 0$, $\|\eta_{P_{0,\varepsilon}} - \eta_0\|_{\mathcal{N}}^2 = O(\varepsilon^2) = o(\varepsilon).$
Hence,
\begin{equation}
    \partial_\theta L_0(\theta_0, \eta_{P_{0,\varepsilon}})(\alpha_0) - \partial_\theta L_0(\theta_0, \eta_0)(\alpha_0) = o(\varepsilon).
    \label{eqn::firstterm}
\end{equation}

The second difference on the right-hand side of \eqref{eqn::addsub} satisfies, by \ref{cond::targetsmoothloss} and Lemma \ref{lemma::riskTaylor},
\begin{align*}
    \partial_\theta L_0(\theta_{P_{0,\varepsilon}}, \eta_{P_{0,\varepsilon}})(\alpha_0) 
    - \partial_\theta L_0(\theta_0, \eta_{P_{0,\varepsilon}})(\alpha_0) 
    &= \partial_\theta^2 L_0(\theta_{P_{0,\varepsilon}}, \eta_{P_{0,\varepsilon}})(\theta_{P_{0,\varepsilon}} - \theta_0, \alpha_0) 
    + O\left(\|\theta_{P_{0,\varepsilon}} - \theta_0\|_{\mathcal{H}}^2\right)\\
    &= \partial_\theta^2 L_0(\theta_0, \eta_0)(\theta_{P_{0,\varepsilon}} - \theta_0, \alpha_0)  + O\left(\|\theta_{P_{0,\varepsilon}} - \theta_0\|_{\mathcal{H}}^2\right) \\
    & \quad + \left\{ \partial_\theta^2 L_0(\theta_{P_{0,\varepsilon}}, \eta_{P_{0,\varepsilon}})(\theta_{P_{0,\varepsilon}} - \theta_0, \alpha_0) - \partial_\theta^2 L_0(\theta_0, \eta_0)(\theta_{P_{0,\varepsilon}} - \theta_0, \alpha_0) \right\}.
\end{align*}
By the Lipschitz Hellinger continuity of \( P \mapsto \theta_P \) and $P \mapsto \eta_P$ in Condition~\ref{cond::smoothparam}, we have \( \|\theta_{P_{0,\varepsilon}} - \theta_0\|_{\mathcal{H}} = O(\varepsilon) \) and \( \|\eta_{P_{0,\varepsilon}} - \eta_0\|_{\mathcal{N}} = O(\varepsilon) \). Moreover, combining this with the Lipschitz continuity of the second derivative in Condition~\ref{cond::targetsmoothloss::three}, we obtain
\begin{align*}
    \partial_\theta^2 L_0(\theta_{P_{0,\varepsilon}}, \eta_{P_{0,\varepsilon}})(\theta_{P_{0,\varepsilon}} - \theta_0, \alpha_0) 
    - \partial_\theta^2 L_0(\theta_0, \eta_0)(\theta_{P_{0,\varepsilon}} - \theta_0, \alpha_0) 
    &\lesssim \rho_{\mathcal{H}}(\alpha_0) \|\theta_{P_{0,\varepsilon}} - \theta_0\|_{\mathcal{H}} 
    \left\{ \|\theta_{P_{0,\varepsilon}} - \theta_0\|_{\mathcal{H}} + \|\eta_{P_{0,\varepsilon}} - \eta_0\|_{\mathcal{N}} \right\} \\
    &= O(\varepsilon^2) = o(\varepsilon).
\end{align*}
Thus,
\begin{equation}
\begin{aligned}
    \partial_\theta L_0(\theta_{P_{0,\varepsilon}}, \eta_{P_{0,\varepsilon}})(\alpha_0) 
    - \partial_\theta L_0(\theta_0, \eta_{P_{0,\varepsilon}})(\alpha_0) 
     &= \partial_\theta^2 L_0(\theta_0, \eta_0)(\theta_{P_{0,\varepsilon}} - \theta_0, \alpha_0) + o(\varepsilon).
\end{aligned}
\label{eqn::secondterm}
\end{equation}

The third difference on the right-hand side of \eqref{eqn::addsub} satisfies, by Condition~\ref{cond::targetsmoothloss::one},
\begin{align*}
    \partial_\theta L_{P_{0, \varepsilon}}(\theta_{P_{0,\varepsilon}}, \eta_{P_{0,\varepsilon}})(\alpha_0) 
    - \partial_\theta L_0(\theta_{P_{0,\varepsilon}}, \eta_{P_{0,\varepsilon}})(\alpha_0) 
    &= (P_{0, \varepsilon} - P_0)\dot{\ell}_{\eta_{P_{0,\varepsilon}}}(\theta_{P_{0,\varepsilon}}) \\
    &= \varepsilon \int \dot{\ell}_{\eta_{P_{0,\varepsilon}}}(\theta_{P_{0,\varepsilon}})(z) S(z) \, P_0(dz) \\
    &= \varepsilon \int \dot{\ell}_{\eta_0}(\theta_0)(z) S(z) \, P_0(dz) 
    + \varepsilon \left\langle \dot{\ell}_{\eta_{P_{0,\varepsilon}}}(\theta_{P_{0,\varepsilon}}) - \dot{\ell}_{\eta_0}(\theta_0), S \right\rangle_{L^2(P_0)}.
\end{align*}
By the Hellinger continuity of \( P \mapsto \eta_P \) and $P \mapsto \theta_P$, the continuity of the map \( (\eta, \theta) \mapsto \dot{\ell}_\eta(\cdot, \theta) \) in Condition~\ref{cond::smoothparam}, and the continuity of the inner product, it follows that, as $\varepsilon \rightarrow 0$,
\[
\left\langle \dot{\ell}_{\eta_{P_{0,\varepsilon}}}(\theta_{P_{0,\varepsilon}}) - \dot{\ell}_{\eta_0}(\theta_0), S \right\rangle_{L^2(P_0)} \leq \|\dot{\ell}_{\eta_{P_{0,\varepsilon}}}(\theta_{P_{0,\varepsilon}}) - \dot{\ell}_{\eta_0}(\theta_0)\|_{L^2(P_0)}\|S\|_{L^2(P_0)}  = o(1).
\]
Hence,
\begin{equation}
    \partial_\theta L_{P_{0, \varepsilon}}(\theta_{P_{0,\varepsilon}}, \eta_{P_{0,\varepsilon}})(\alpha_0)  
    - \partial_\theta L_0(\theta_{P_{0,\varepsilon}}, \eta_{P_{0,\varepsilon}})(\alpha_0)  
    = \varepsilon \langle \dot{\ell}_{\eta_0}(\theta_0), S \rangle_{L^2(P_0)} + o(\varepsilon).
    \label{eqn::thirdterm}
\end{equation}

Finally, plugging the expressions in \eqref{eqn::firstterm}, \eqref{eqn::secondterm}, and \eqref{eqn::thirdterm} into \eqref{eqn::addsub}, we obtain
\begin{align*}
  0=\partial_\theta L_{P_{0,\varepsilon}}(\theta_{P_{0,\varepsilon}}, \eta_{P_{0,\varepsilon}})(\alpha_0)  
    - \partial_\theta L_0(\theta_0, \eta_0)(\alpha_0)  
    =  \partial_\theta^2 L_0(\theta_0, \eta_0)(\theta_{P_{0,\varepsilon}} - \theta_0, \alpha_0) + \varepsilon \langle \dot{\ell}_{\eta_0}(\theta_0), S \rangle_{L^2(P_0)} + o(\varepsilon).
\end{align*}
Rearranging terms and dividing both sides by $\varepsilon$ yields
\begin{equation}
\label{eqn::pathwisetotal}
 \partial_\theta^2 L_0(\theta_0, \eta_0)\left(\frac{\theta_{P_{0,\varepsilon}} - \theta_0}{\varepsilon}, \alpha_0\right)  =  \langle  -\dot{\ell}_{\eta_0}(\theta_0), S \rangle_{L^2(P_0)}  + o(1).
\end{equation}

We are now in a position to compute the pathwise derivative of $\Psi$. By \ref{cond::smoothfunctional}, we have that
 \begin{align*}
    \psi_{P_{0,\varepsilon}}(\theta_{P_{0,\varepsilon}}) - \psi_0(\theta_0) &=    \psi_P(\theta_{P_{0,\varepsilon}}) - \psi_0(\theta_0)  +   \psi_{P_{0,\varepsilon}}(\theta_{P_{0,\varepsilon}}) - \psi_0(\theta_{P_{0,\varepsilon}})  \\
     &= \varepsilon \dot{\psi}_0(\theta_0)\left(\varepsilon^{-1}\{\theta_{P_{0,\varepsilon}} - \theta_0\} \right)  + o\left( \|\theta_{P_{0,\varepsilon}} - \theta_0\|_{\mathcal{H}}\right) + \varepsilon \langle m(\cdot, \theta_{P_{0,\varepsilon}}) , S\rangle_{L^2(P_0)},
 \end{align*} 
 where we use that $\psi_{P_{0,\varepsilon}}(\theta_{P_{0,\varepsilon}}) - \psi_0(\theta_{P_{0,\varepsilon}}) = \varepsilon \langle m(\cdot, \theta_{P_{0,\varepsilon}}) , S\rangle_{L^2(P_0)}$ by the definition of $\psi_P$ and the choice of submodel. Moreover, by the Cauchy-Schwarz inequality and \ref{cond::smoothparam}, we have $\|\theta_{P_{0,\varepsilon}} - \theta_0\|_{\mathcal{H}} = O(\varepsilon)$ and $\langle m(\cdot, \theta_{P_{0,\varepsilon}}) -  m(\cdot, \theta_0 )  , S\rangle_{L^2(P_0)}\leq \| m(\cdot, \theta_{P_{0,\varepsilon}}) -  m(\cdot, \theta_0) \|_{L^2(P_0)} \|S\|_{L^2(P_0)}  = o(1)$ as $\varepsilon \rightarrow 0$ . Hence,
  \begin{align*}
    \psi_{P_{0,\varepsilon}}(\theta_{P_{0,\varepsilon}}) - \psi_0(\theta_0) &=    \varepsilon \dot{\psi}_0(\theta_0)\left(\varepsilon^{-1}\{\theta_{P_{0,\varepsilon}} - \theta_0\} \right) + \varepsilon \langle m(\cdot, \theta_0 )  , S\rangle_{L^2(P_0)} + o(\varepsilon).
 \end{align*} 
 Conditions \ref{cond::targetsmoothloss::two} and \ref{cond::PDHessian} imply that \(\partial_\theta^2 L_0(\theta_0, \eta_0)(\cdot, \cdot)\) defines an inner product on \(\overline{\mathcal{H}}\), and Condition \ref{cond::smoothfunctional} combined with \ref{cond::PDHessian} ensures that the functional derivative \(\dot{\psi}_0(\theta_0)\) is continuous with respect to this inner product. Hence, the Riesz representer $\alpha_0$ exists, and it satisfies the Riesz representation property:
 $$\dot{\psi}_0(\theta_0)(h')  = \partial_{\theta}^2 L_0(\theta_0, \eta_0)(h', \alpha_0).$$
Taking $h' =\varepsilon^{-1}\{\theta_{P_{0,\varepsilon}} - \theta_0\}$ in the above display, we find that
  \begin{align*}
    \psi_{P_{0,\varepsilon}}(\theta_{P_{0,\varepsilon}}) - \psi_0(\theta_0) &=    \varepsilon \partial_{\theta}^2 L_0(\theta_0, \eta_0)\left(\varepsilon^{-1}\{\theta_{P_{0,\varepsilon}} - \theta_0\}, \alpha_0\right) + \varepsilon \langle m(\cdot, \theta_0 )  , S\rangle_{L^2(P_0)} + o(\varepsilon).
 \end{align*} 
Substituting \eqref{eqn::pathwisetotal} into the above expression, this implies that
  \begin{align*}
    \psi_{P_{0,\varepsilon}}(\theta_{P_{0,\varepsilon}}) - \psi_0(\theta_0) &=    \varepsilon \langle  -  \dot{\ell}_{\eta_P}(\theta_0)(\alpha_0) , S\rangle_{L^2(P_0)}   + \varepsilon \langle m(\cdot, \theta_0 )  , S\rangle_{L^2(P_0)} + o(\varepsilon).
 \end{align*} 
Hence,
  \begin{align*}
   \left.\lim_{\varepsilon \rightarrow 0} \frac{\psi_{P_{0,\varepsilon}}(\theta_{P_{0,\varepsilon}}) - \psi_0(\theta_0)}{\varepsilon} \right |_{\varepsilon = 0} &= \langle  -  \dot{\ell}_{\eta_P}(\theta_0)(\alpha_0) , S\rangle_{L^2(P_0)}  +   \langle m(\cdot, \theta_0 ) - \Psi(P_0)  , S\rangle_{L^2(P_0)}  \\
    &= \langle  - \dot{\ell}_{\eta_P}(\theta_0)(\alpha_0) + m(\cdot, \theta_0) - \Psi(P_0), S \rangle_{L^2(P_0)}.
 \end{align*} 
The above holds for any bounded score $S$. Since the linear space of uniformly bounded scores is dense in $L^2_0(P)$ and $\Psi$ is Hellinger Lipschitz continuous by \ref{cond::smoothparam}, the above also holds, by continuity of the inner product, for any score $S \in L^2_0(P)$ \citep[Lemma 2 of][]{luedtke2023one}. We conclude that $ \chi_0 = - \dot{\ell}_{\eta_0}(\theta_0)(\alpha_0) + m(\cdot, \theta_0) - \Psi(P_0)$ is a gradient at $P_0$ for the pathwise derivative of $\Psi$. Since the model is nonparametric, $\chi_0 $ is the efficient influence function.

\end{proof}

\subsection{Pathwise derivative of M-estimand}

In the following theorem, let $\dot{\ell}_{\eta_0}^*(\theta_0): L^2_0(P_0) \rightarrow \overline{\mathcal{H}}$ be the adjoint of $\dot{\ell}_{\eta_0}(\theta_0): \overline{\mathcal{H}} \rightarrow L^2_0(P_0)$, where  $\overline{\mathcal{H}}$ is the closure of $\mathcal{H}$ equipped with the inner product $\partial_\theta^2 L_0(\theta_0, \eta_0)$.

\begin{theorem}[Pathwise derivative of M-estimand]
    Assume conditions \ref{cond::uniqueness}-\ref{cond::boundedtrue}. Suppose that $P \mapsto \theta_P$ is pathwise differentiable at $P_0$ in a Hilbert-valued sense \citep{luedtke2023one} with derivative $\dot{\theta}_0: L^2_0(P_0)\rightarrow \overline{\mathcal{H}}$. Then, 
    $\dot{\theta}_0  =  -\dot{\ell}_{\eta_0}^*(\theta_0) $  and $\partial_\theta^2 L_0(\theta_0, \eta_0)(\dot{\theta}_0(S), h) = \langle - \dot{\ell}_{\eta_0}(\theta_0)(h), S \rangle_{L^2(P_0)}$.
\end{theorem}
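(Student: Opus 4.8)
The plan is to recycle the computation from the proof of Theorem~\ref{theorem::EIF}, observing that the special direction $\alpha_0$ used there may be replaced by an arbitrary direction $h \in \mathcal{H}$ with $\rho_{\mathcal{H}}(h) < \infty$. This replacement is legitimate because the first-order optimality condition $\partial_\theta L_P(\theta_P, \eta_P)(h) = 0$ holds for \emph{every} such direction, not only for $\alpha_0$. Concretely, I would fix a bounded score $S \in L^2_0(P_0)$, form the submodel $dP_{0,\varepsilon} = (1 + \varepsilon S)\,dP_0$, and repeat the add-and-subtract decomposition \eqref{eqn::addsub} with $h$ in place of $\alpha_0$. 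The three remainder estimates \eqref{eqn::firstterm}, \eqref{eqn::secondterm}, and \eqref{eqn::thirdterm} carry over unchanged: they use only Neyman orthogonality~\ref{cond::orthogonal}, the smoothness of the risk~\ref{cond::targetsmoothloss} and~\ref{cond::nuisancesmooth}, and the Hellinger-Lipschitz continuity~\ref{cond::smoothparam}, together with $\rho_{\mathcal{H}}(h) < \infty$ playing the role of $\rho_{\mathcal{H}}(\alpha_0) < \infty$. This produces the analogue of \eqref{eqn::pathwisetotal},
\[
\partial_\theta^2 L_0(\theta_0, \eta_0)\!\left(\frac{\theta_{P_{0,\varepsilon}} - \theta_0}{\varepsilon},\, h\right) = \big\langle -\dot{\ell}_{\eta_0}(\theta_0)(h),\, S \big\rangle_{L^2(P_0)} + o(1).
\]

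Next I would let $\varepsilon \to 0$. By the assumed Hilbert-valued pathwise differentiability of $P \mapsto \theta_P$, the difference quotient $\varepsilon^{-1}(\theta_{P_{0,\varepsilon}} - \theta_0)$ converges to $\dot{\theta}_0(S)$ in $\overline{\mathcal{H}}$. Because $h$ is fixed and the Hessian bilinear form is bounded on $\overline{\mathcal{H}}$ (by~\ref{cond::targetsmoothloss::two} and the norm equivalence furnished by~\ref{cond::PDHessian}), the map $g \mapsto \partial_\theta^2 L_0(\theta_0, \eta_0)(g, h)$ is a continuous linear functional on $\overline{\mathcal{H}}$, so the left-hand side converges to $\partial_\theta^2 L_0(\theta_0, \eta_0)(\dot{\theta}_0(S), h)$. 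This yields the second displayed identity of the theorem for every bounded $h$. To extend it to all $h \in \overline{\mathcal{H}}$, I would note that both sides are continuous linear functionals of $h$ with respect to the Hessian norm --- the left one by continuity of the inner product, the right one because~\ref{cond::targetsmoothloss::one} and the norm equivalence make $\dot{\ell}_{\eta_0}(\theta_0)$ bounded from $\overline{\mathcal{H}}$ into $L^2_0(P_0)$ --- and then pass to the limit using the density of the bounded functions in $\overline{\mathcal{H}}$.

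Finally, to obtain $\dot{\theta}_0 = -\dot{\ell}_{\eta_0}^*(\theta_0)$, I would invoke the defining relation of the adjoint, $\langle \dot{\ell}_{\eta_0}(\theta_0)(h),\, S \rangle_{L^2(P_0)} = \partial_\theta^2 L_0(\theta_0, \eta_0)(h,\, \dot{\ell}_{\eta_0}^*(\theta_0)(S))$, which holds with respect to the Hessian inner product on $\overline{\mathcal{H}}$. Substituting this into the identity just established and using the symmetry of the Hessian form (part of~\ref{cond::targetsmoothloss::two}) gives $\partial_\theta^2 L_0(\theta_0, \eta_0)(\dot{\theta}_0(S) + \dot{\ell}_{\eta_0}^*(\theta_0)(S),\, h) = 0$ for all $h \in \overline{\mathcal{H}}$. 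Positive definiteness of the Hessian inner product~\ref{cond::PDHessian} then forces $\dot{\theta}_0(S) = -\dot{\ell}_{\eta_0}^*(\theta_0)(S)$, and since $S$ was an arbitrary bounded score the operator identity follows on $L^2_0(P_0)$ by density and continuity.

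I expect the only delicate point to be the interchange of the limit $\varepsilon \to 0$ with the Hessian bilinear form. Since the functional $g \mapsto \partial_\theta^2 L_0(\theta_0, \eta_0)(g, h) = \langle g, h\rangle_{\mathrm{Hess}}$ is fixed and continuous, even weak convergence of the difference quotient in $\overline{\mathcal{H}}$ would suffice; the main task is therefore simply to confirm that the notion of Hilbert-valued pathwise differentiability from \citet{luedtke2023one} delivers at least this weak convergence. All remaining steps are routine given the estimates already assembled in the proof of Theorem~\ref{theorem::EIF}.
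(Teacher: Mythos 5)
Your proposal is correct and follows essentially the same route as the paper's proof: both rest on the key display \eqref{eqn::pathwisetotal}, pass to the limit via the assumed Hilbert-valued pathwise differentiability, and conclude through the adjoint relation together with the non-degeneracy of the Hessian inner product guaranteed by \ref{cond::PDHessian}. If anything, you are more careful than the paper, which quotes \eqref{eqn::pathwisetotal} as holding for all $h \in \overline{\mathcal{H}}$ even though it was derived in the proof of Theorem~\ref{theorem::EIF} only for the particular direction $\alpha_0$; your re-run of that argument for an arbitrary direction with $\rho_{\mathcal{H}}(h) < \infty$, followed by a density-and-continuity extension to all of $\overline{\mathcal{H}}$, fills exactly that gap.
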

\begin{proof}
From \eqref{eqn::pathwisetotal} in the proof of Theorem \ref{theorem::EIF}, we know, as $\varepsilon \rightarrow 0$ and for all $h \in \overline{\mathcal{H}}$, that
\begin{align*}
 \partial_\theta^2 L_0(\theta_0, \eta_0)\left(\frac{\theta_{P_{0,\varepsilon}} - \theta_0}{\varepsilon}, h\right)  =  \langle  -\dot{\ell}_{\eta_0}(\theta_0), S \rangle_{L^2(P_0)}  + o(1)
\end{align*} 
Since $P \mapsto \theta_P$ is pathwise differentiable, we can take the limit as $\varepsilon \rightarrow 0$ to find
$$\partial_\theta^2 L_0(\theta_0, \eta_0)\left(\dot{\theta}_0(S), h\right) =  \langle -  \dot{\ell}_{\eta_0}(\theta_0)(h) , S\rangle_{L^2(P_0)}. $$
Then, by the definition of the adjoint, 
$$\partial_\theta^2 L_0(\theta_0, \eta_0)\left(\dot{\theta}_0(S), h\right)  =   \partial_\theta^2 L_0(\theta_0, \eta_0)\left( -\dot{\ell}_{\eta_0}^*(\theta_0)(S), h\right). $$
Since the above holds for all $h \in \overline{\mathcal{H}}$, we conclude that
$\dot{\theta}_0(S)   =  -\dot{\ell}_{\eta_0}^*(\theta_0)(S). $
\end{proof}

Assuming that $P \mapsto \theta_P$ is pathwise differentiable at $P_0$, we can derive the EIF of a smooth functional $\psi$ of $\theta_0$ using the chain rule \citep{luedtke2023one}. Specifically, by the chain rule, the pathwise derivative of $P \mapsto \psi(\theta_0)$ is $\dot{\psi}(\theta_0) \circ \dot{\theta}_0 = - \dot{\psi}(\theta_0) \circ \dot{\ell}_{\eta_0}^*(\theta_0)$. Suppose that $\dot{\psi}(\theta_0)(h) = \partial_\theta^2 L_0(\theta_0, \eta_0)(\alpha_0, h)$ for a Riesz representer $\alpha_0$. Then, 
\begin{align*}
    (\dot{\psi}(\theta_0)\circ \dot{\theta}_0)(S) &= \partial_\theta^2 L_0(\theta_0, \eta_0)\left( \alpha_0, -\dot{\ell}_{\eta_0}^*(\theta_0)(S) \right)\\
    &= \langle- \dot{\ell}_{\eta_0}(\theta_0)(\alpha_0), S \rangle_{L^2(P_0)}
\end{align*} 
where $-\dot{\ell}_{\eta_0}(\theta_0)(\alpha_0)$ is the efficient influence function of $P \mapsto \psi(\theta_P)$ at $P_0$.

\section{Proofs for Section \ref{sec::autodml}}

Denote the EIF estimate corresponding to $\theta_n$, $\eta_n$, and $\alpha_n$ by $\chi_n := m(\cdot, \theta_n) - P_n m(\cdot, \theta_n) - \dot{\ell}_{\eta_n}(\theta_n)(\alpha_n)$. For each $P \in \mathcal{P}$, $\eta \in \mathcal{N},$ and $\theta, \alpha \in \mathcal{H}$, define the remainder $\text{Rem}_{P_0}(\eta, \theta, \alpha) :=  \psi_0(\theta) - \psi_0(\theta_0) - P_0\dot{\ell}_{\eta}(\theta)(\alpha)$, which by the von Mises expansion in Theorem \ref{theorem::vonmises} is second-order.

\begin{lemma}[Decomposition of bias of one-step estimator]
\label{lemma::biasexpansion} Assume \ref{cond::uniqueness}-\ref{cond::boundedtrue} hold at $P := P_0$. Then,
  $$   \widehat{\psi}_n^{\mathrm{dml}} - \Psi(P_0) = P_n \chi_0 + (P_n-P_0)(\chi_n - \chi_0) + \text{Rem}_{P_0}(\eta_n, \theta_n, \alpha_n).  $$
\end{lemma}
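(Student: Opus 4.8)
The plan is to prove the claim as an \emph{exact deterministic identity}; no asymptotic analysis is required, and Conditions \ref{cond::uniqueness}-\ref{cond::boundedtrue} enter only to guarantee that the representer $\alpha_0$ exists and that $\chi_0$ is centered under $P_0$. Writing $Pf = \int f\,dP$ throughout and recording the estimator as $\widehat{\psi}_n^{\mathrm{dml}} = P_n m(\cdot,\theta_n) - P_n \dot{\ell}_{\eta_n}(\theta_n)(\alpha_n)$, I would first rewrite the estimation error as a single empirical average,
\[
\widehat{\psi}_n^{\mathrm{dml}} - \Psi(P_0) = P_n\bigl\{ m(\cdot,\theta_n) - \dot{\ell}_{\eta_n}(\theta_n)(\alpha_n) - \Psi(P_0)\bigr\}.
\]
Second, I would verify the auxiliary fact $P_0\chi_0 = 0$: since $P_0 m(\cdot,\theta_0) = \psi_0(\theta_0) = \Psi(P_0)$, and the first-order optimality condition implied by \ref{cond::uniqueness} and \ref{cond::targetsmoothloss::one} gives $P_0\dot{\ell}_{\eta_0}(\theta_0)(\alpha_0) = \partial_\theta L_0(\theta_0,\eta_0)(\alpha_0) = 0$, the definition $\chi_0 = -\dot{\ell}_{\eta_0}(\theta_0)(\alpha_0) + m(\cdot,\theta_0) - \Psi(P_0)$ yields $P_0\chi_0 = 0$. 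Note that $\mathrm{Rem}_{P_0}$ is used here only through its definition, not through the second-order bound of Theorem~\ref{theorem::vonmises}.

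Next I would simplify the claimed right-hand side. Expanding $(P_n - P_0)(\chi_n - \chi_0)$ and cancelling the $\chi_0$ contributions using $P_0\chi_0 = 0$ gives
\[
P_n\chi_0 + (P_n - P_0)(\chi_n - \chi_0) = (P_n - P_0)\chi_n + P_0\chi_0 = (P_n - P_0)\chi_n,
\]
so it suffices to establish the reduced identity $\widehat{\psi}_n^{\mathrm{dml}} - \Psi(P_0) = (P_n - P_0)\chi_n + \mathrm{Rem}_{P_0}(\eta_n,\theta_n,\alpha_n)$. The one point requiring care is the data-dependent centering term $P_n m(\cdot,\theta_n)$ inside $\chi_n = m(\cdot,\theta_n) - P_n m(\cdot,\theta_n) - \dot{\ell}_{\eta_n}(\theta_n)(\alpha_n)$: because this is a scalar constant, both $P_n$ and $P_0$ leave it unchanged, so $P_n\chi_n = -P_n\dot{\ell}_{\eta_n}(\theta_n)(\alpha_n)$ and $P_0\chi_n = \psi_0(\theta_n) - P_n m(\cdot,\theta_n) - P_0\dot{\ell}_{\eta_n}(\theta_n)(\alpha_n)$.

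Finally I would assemble the pieces. Subtracting the two preceding displays yields
\[
(P_n - P_0)\chi_n = P_n m(\cdot,\theta_n) - \psi_0(\theta_n) + P_0\dot{\ell}_{\eta_n}(\theta_n)(\alpha_n) - P_n\dot{\ell}_{\eta_n}(\theta_n)(\alpha_n),
\]
and adding $\mathrm{Rem}_{P_0}(\eta_n,\theta_n,\alpha_n) = \psi_0(\theta_n) - \Psi(P_0) - P_0\dot{\ell}_{\eta_n}(\theta_n)(\alpha_n)$ (its definition, with $\psi_0(\theta_0) = \Psi(P_0)$) cancels both the $\psi_0(\theta_n)$ and $P_0\dot{\ell}_{\eta_n}(\theta_n)(\alpha_n)$ terms, leaving $P_n m(\cdot,\theta_n) - P_n\dot{\ell}_{\eta_n}(\theta_n)(\alpha_n) - \Psi(P_0)$, which equals $\widehat{\psi}_n^{\mathrm{dml}} - \Psi(P_0)$ by the first display. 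This closes the identity. The \textbf{main obstacle}, modest as it is, is the careful bookkeeping of the empirical centering constant $P_n m(\cdot,\theta_n)$ together with the verification $P_0\chi_0 = 0$; the remainder is pure term-matching.
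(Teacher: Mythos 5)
Your proof is correct and takes essentially the same route as the paper's: both reduce the claim to the exact algebraic identity $\widehat{\psi}_n^{\mathrm{dml}} - \Psi(P_0) = (P_n - P_0)\chi_n + \mathrm{Rem}_{P_0}(\eta_n,\theta_n,\alpha_n)$, using only the definition of $\mathrm{Rem}_{P_0}$, the fact that the empirical centering constant in $\chi_n$ makes $P_n\chi_n = -P_n\dot{\ell}_{\eta_n}(\theta_n)(\alpha_n)$, and $P_0\chi_0 = 0$. The only cosmetic differences are direction (the paper expands the plug-in bias forward and adds $P_n\chi_n$, whereas you simplify the claimed right-hand side and match terms) and that you explicitly justify $P_0\chi_0 = 0$ via the first-order optimality condition, a step the paper leaves implicit through Theorem~\ref{theorem::EIF}.
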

\begin{proof}[Proof of Lemma \ref{lemma::biasexpansion}]
Under conditions \ref{cond::uniqueness}-\ref{cond::boundedtrue}, $\Psi$ is a pathwise differentiable parameter at $P_0$ with EIF $\chi_0$ provided in Theorem \ref{theorem::EIF}. By definition of $ \text{Rem}_{P_0}(\eta_n, \theta_n, \alpha_n)$, we have the following von Mises expansion of the bias:
\begin{align*}
    P_n m(\cdot, \theta_n)  - \Psi(P_0) &= - P_0 \chi_n +   \text{Rem}_{P_0}(\eta_n, \theta_n, \alpha_n),
\end{align*} 
noting that $ \text{Rem}_{P_0}(\eta_n, \theta_n, \alpha_n) = P_n m(\cdot, \theta_n) - \Psi(P_0) + P_0 \chi_n$.
Adding $P_n \chi_n$ to both sides, we find that
\begin{align*}
    \widehat{\psi}_n^{\mathrm{dml}} &=  P_n m(\cdot, \theta_n) + P_n \chi_n - \Psi(P_0)  \\
    &= (P_n - P_0) \chi_n +  R_{n,2}(P_0) \\
       &= (P_n - P_0) \chi_0 + (P_n - P_0)(\chi_n - \chi_0) +   \text{Rem}_{P_0}(\eta_n, \theta_n, \alpha_n).
\end{align*} 
The result follows noting that $(P_n - P_0) \chi_0 = P_n \chi_0$, since $P_0 \chi_0 = 0$.
\end{proof}

\begin{proof}[Proof of Theorem \ref{theorem::limitautoDML}]
    By Lemma \ref{lemma::biasexpansion}, we have the bias expansion
    \[
        \widehat{\psi}_n^{\mathrm{dml}} - \Psi(P_0) = P_n \chi_0 + (P_n - P_0)(\chi_n - \chi_0) +  \text{Rem}_{P_0}(\eta_n, \theta_n, \alpha_n).
    \]
    By \ref{cond::empproc}, we have $(P_n - P_0)(\chi_n - \chi_0) = o_p(n^{-1/2})$, so that
    \[
        \widehat{\psi}_n^{\mathrm{dml}} - \Psi(P_0) = P_n \chi_0 +  \text{Rem}_{P_0}(\eta_n, \theta_n, \alpha_n) + o_p(n^{-1/2}).
    \]
    Next, by \ref{cond::boundnuis}, we can apply Theorem \ref{theorem::vonmises} to conclude that
    \begin{align*}
         \text{Rem}_{P_0}(\eta_n, \theta_n, \alpha_n)  &= O_p\left(\partial_{\theta}^2 L_0(\theta_0, \eta_0)(\alpha_0 - \alpha_n, \theta_n - \theta_0)\right) \\
        &\quad + (\delta_{\mathrm{lin}} + \delta_{\mathrm{quad}}) O_p\left(\|\theta_n - \theta_0\|_{\mathcal{H},{0}}^2\right) + O_p\left(\|\eta_n - \eta_0\|^2_{\mathcal{N}}\right)\\
        & \quad + + O_p\left(\|\eta_n - \eta_0\|_{\mathcal{N}} \|\theta_n - \theta_0\|_{\mathcal{H}}\right)
    \end{align*}
    By \ref{cond::nuisancerate}-\ref{cond::DRrate}, it holds that $ \text{Rem}_{P_0}(\eta_n, \theta_n, \alpha_n) = o_p(n^{-1/2})$. We conclude that
    \[
        \widehat{\psi}_n^{\mathrm{dml}} - \Psi(P_0) = P_n \chi_0 + o_p(n^{-1/2}).
    \]
    Thus, $\widehat{\psi}_n^{\mathrm{dml}}$ is an asymptotically linear estimator of $\Psi(P_0)$ with its influence function being the EIF of $\Psi$. It follows that $\widehat{\psi}_n^{\mathrm{dml}}$ is a regular and efficient estimator of $\Psi$.
\end{proof}

\subsection{Proof of Theorem \ref{theorem::admlbias}}

\begin{proof}[Proof of Theorem \ref{theorem::admlbias}]
By \ref{cond::smoothfunctional} and \ref{cond::targetsmoothloss}, we have
\begin{equation}
    \begin{aligned}
        \Psi_n(P_0) - \Psi_0(P_0) &= \psi_0(\theta_{0, \mathcal{H}_n}) - \psi_0(\theta_0) \\
        &= \dot{\psi}_0(\theta_0)(\theta_{0, \mathcal{H}_n} - \theta_0) + \delta_{\mathrm{lin}} O_p\left(\|\theta_{0, \mathcal{H}_n} - \theta_{0}\|_{\mathcal{H}}^2\right) \\
        &= \partial_{\theta}^2 L_0(\theta_0, \eta_0)(\alpha_{0, \mathcal{H}_{n,0}}, \theta_{0, \mathcal{H}_n} - \theta_0) + \delta_{\mathrm{lin}} O_p\left(\|\theta_{0, \mathcal{H}_n} - \theta_{0}\|_{\mathcal{H}}^2\right),
    \end{aligned} \label{eqn::adml1}
\end{equation}
where the final equality uses the Riesz representation property of $\alpha_{0, \mathcal{H}_{n,0}}$ and that $\theta_{0, \mathcal{H}_n} - \theta_0 \in \mathcal{H}_{n,0}$.

Next, note that the first-order conditions defining the risk minimizer $\theta_{0, \mathcal{H}_n}$ imply that
\begin{align*}
    \partial_{\theta}L_0(\theta_{0, \mathcal{H}_n}, \eta_0)(h) = 0 \text{ for all } h \in \mathcal{H}_{n}.
\end{align*}
Applying \ref{cond::targetsmoothloss} and taking a derivative, we find, for all $h \in \mathcal{H}_n$, 
\begin{align*}
    0 = \partial_{\theta}L_0(\theta_{0, \mathcal{H}_n}, \eta_0)(h) = \partial_{\theta}^2 L_0(\theta_{0}, \eta_0)(\theta_{0, \mathcal{H}_n} - \theta_0, h) + \delta_{\mathrm{quad}} O_p\left(\|\theta_{0, \mathcal{H}_n} - \theta_{0}\|_{\mathcal{H}}^2\right).
\end{align*}
Taking $h$ equal to $\alpha_{0, \mathcal{H}_n}$, we find that
\begin{align*}
    \partial_{\theta}^2 L_0(\theta_{0}, \eta_0)(\theta_{0, \mathcal{H}_n} - \theta_0, \alpha_{0, \mathcal{H}_n}) = \delta_{\mathrm{quad}} O_p\left(\|\theta_{0, \mathcal{H}_n} - \theta_{0}\|_{\mathcal{H}}^2\right).
\end{align*}
Combining the above expression with the final equality in \eqref{eqn::adml1}, we conclude that 
\begin{align*}
    \Psi_n(P_0) - \Psi_0(P_0) &= \partial_{\theta}^2 L_0(\theta_0, \eta_0)(\alpha_{0, \mathcal{H}_{n,0}} - \alpha_{0, \mathcal{H}_n}, \theta_{0, \mathcal{H}_n} - \theta_0) + (\delta_{\mathrm{lin}} + \delta_{\mathrm{quad}}) O_p\left(\|\theta_{0, \mathcal{H}_n} - \theta_{0}\|_{\mathcal{H}}^2\right),
\end{align*}
as desired.
\end{proof}

\section{autoDML for beta-geometric survival model}

\label{appendix::betageo}

\subsection{Functional and loss}

 \label{appendix::betageooverview}

Following our autoDML framework, for $P \in \mathcal{P}$, we define our M-estimand $\theta_P$ as the tuple $(a_P, b_P)$ and our target space as the product Hilbert space $\mathcal{H} := \mathcal{H}_a \times \mathcal{H}_b$, where $\mathcal{H}_a \subseteq L^2(\mu_{X})$ and $\mathcal{H}_b \subseteq L^2(\mu_{X})$ for some measure $\mu_{X}$ on $\text{supp}(X)$. Under the conditional ignorability of the censoring mechanism, the log-transformed shape parameters $\theta_P = (a_P , b_P)$ are identified via the following optimization problem: 
\[
\theta_P = \argmin_{(a, b) \in \mathcal{H}} E_P[\ell(Z, a, b)],
\]
where $\ell$ is the negative log-likelihood loss given by:
\[
    \ell: (z, a, b) \mapsto 
    - \delta \log\big(\lambda_{a,b}(t \mid x)\big) 
    - \sum_{s=1}^{t-1} \log\big(1 - \lambda_{a,b}(s \mid x)\big),
\]
where we use $z := (x, t, \delta)$ to denote a realization of $Z$. Given a reference time $t_0 > 0$, we define our target parameter as $\psi_P(\theta_P) = E_P[m(Z, \theta_P)]$, where the functional $m: \mathcal{Z} \times \mathcal{H} \rightarrow \mathbb{R}$ is given by:
\[
m(z, a, b) := \prod_{s=1}^{t_0} \big(1 - \lambda_{a,b}(s \mid x)\big),
\]
which represents the mean survival probability $P(T > t_0)$ under the beta-geometric model. In Appendix~\ref{appendix::betageo::derive}, we derive the first and second derivatives \(\dot{\ell}(\theta)\) and \(\ddot{\ell}(\theta)\) of the loss \(\ell\) at \(\theta \in \mathcal{H}\), as well as the first derivative \(\dot{m}_{\theta}\) of the functional \(m\), which together determine the loss function for the Riesz representer \(\alpha_0 \in \mathcal{H}\) in Theorem~\ref{theorem::EIF}. Since we use the negative log-likelihood loss, as discussed below Theorem~\ref{theorem::limitautoDML}, the autoDML estimator is not only nonparametrically efficient for the projection parameter \(\Psi\), but also semiparametrically efficient for \(P_0(T > t_0)\) under the beta-geometric model.

\subsection{Derivation of gradients and Hessians}

\label{appendix::betageo::derive}

It is shown in \cite{hubbard2021beta} that the conditional distribution of $T$ given $X$ under a beta-geometric distribution $P$ is determined by the following recursions:
\begin{align*}
    P(T = 1 \mid X)&= \frac{\alpha_P(X)}{\alpha_P(X) + \beta_P(X)};\\
      P_0(T = t \mid X)&= \frac{\beta_P(X) + t - 2}{\alpha_P(X) + \beta_P(X) + t - 1}  P(T = t - 1 \mid X);\\
     P(T > 1 \mid X)&= \frac{\beta_P(X)}{\alpha_0(X) + \beta_P(X)};\\
    P(T > t \mid X)&= \frac{\beta_P(X) + t - 1}{\alpha_P(X) + \beta_P(X) + t - 1}  P(T > t - 1 \mid X),\\
\end{align*}
where $\alpha_P(X) = \exp a_P(X)$ and $\beta_P(X) = \exp b_P(X)$. Recall that our M-estimand is $\theta_P := (a_P, b_P)$, where $\theta_P$ lies in the product space $\mathcal{H} := \mathcal{H}_a \times \mathcal{H}_b$, where $\mathcal{H}_a \subseteq L^2(\mu_{X})$ and $\mathcal{H}_b \subseteq L^2(\mu_{X})$ for some measure $\mu_{X}$ on $\text{supp}(X)$. Under noninformative censoring, the log-transformed shape parameters $a_P$ and $b_P$ are identified via the following optimization problem: 
\[
(a_P, b_P) = \argmin_{(a, b) \in \mathcal{H}} E_P[\ell(Z, a, b)],
\]
where $\ell$ is the negative log-likelihood loss and can be written as:
\begin{align*}
    \ell: (z, a, b) \mapsto - \delta \log P_{a, b}(T = t \mid X = x) - (1 - \delta) \log P_{a, b}(T > t \mid X = x),
\end{align*}
and $P_{a, b}$ denotes the distribution of a beta-geometric random variable with shape parameters $\alpha(\cdot) = \exp a(\cdot)$ and $\beta(\cdot) = \exp b(\cdot)$.

Recall our parameter of interest $\Psi(P) = \psi_P(\theta_P) $ is the mean survival probability $P(T > t_0)$ at some time $t_0 \in \mathbb{N}$.We can write the functional $\psi_P: \mathcal{H} \rightarrow \mathbb{R}$ as
\[
\theta \mapsto \psi_P(\theta) := P_{\theta}(T > t_0) = E_P[m(Z, \theta)],
\]
where $m: (z, \theta) \mapsto P_{\theta}(T > t_0 \mid X = x)$ is a nonlinear functional. The derivative of this functional at $\theta$ in the direction of $h \in \mathcal{H}$ satisfies the expression:
\[
\dot{m}_{\theta}(h): z \mapsto \partial_{\theta} P_{\theta}(T > t_0 \mid X = x)(h) = P_{\theta}(T > t_0 \mid X = x) \cdot \partial_{\theta} \log P_{\theta}(T > t_0 \mid X = x)(h),
\]
where $\partial_{\theta} \log P_{\theta}(T > t_0 \mid X = x)(h)$ is determined by the recursions given later in this section.

The derivative of the functional $m$, as well as the gradient and Hessian operators of the loss function $\ell$, can be computed using the recursion relations derived in Appendix A.2 of \cite{hubbard2021beta}. For completeness, we restate these recursions here and use them to derive the relevant derivative terms. Let $\theta = (a, b) \in \mathcal{H}$ and set $\alpha := \exp a$ and $\beta := \exp b$. For a direction $(h_a, h_b) \in \mathcal{H}$, the derivative of the loss function $\ell$ at $\theta$ is given by:
\[
\dot{\ell}_{\theta}(h_a, h_b)(z) := - \delta \partial_{\theta} \log P_{\theta}(T = t \mid X = x)(h_a, h_b) 
- (1 - \delta) \partial_{\theta} \log P_{\theta}(T > t \mid X = x)(h_a, h_b).
\]
Specifically, the derivative of the log event probabilities satisfies:
 \begin{align*}
    \partial_\theta \log P_{\theta}(T = 1 \mid X)(h_a, h_b) &= \frac{\beta(X)}{\alpha(X) + \beta(X)} h_a(X) 
    - \frac{\beta(X)}{\alpha(X) + \beta(X)} h_b(X), \\
    \partial_\theta \log P_{\theta}(T = t \mid X)(h_a, h_b) &= \partial_\theta \log P_{\theta}(T = t - 1 \mid X)(h_a, h_b) \\
    &\quad - \frac{\alpha(X)}{\alpha(X) + \beta(X) + t - 1} h_a(X) \\
    &\quad + \frac{(\alpha(X) + 1)\beta(X)}{(\beta(X) + t - 2)(\alpha(X) + \beta(X) + t - 1)} h_b(X).
\end{align*}
Similarly, the derivative of the log survival probabilities satisfies:
\begin{align*}
    \partial_\theta \log P_{\theta}(T > 1 \mid X)(h_a, h_b) &= - \frac{\alpha(X)}{\alpha(X) + \beta(X)} h_a(X) 
    + \frac{\alpha(X)}{\alpha(X) + \beta(X)} h_b(X), \\
    \partial_\theta \log P_{\theta}(T > t \mid X)(h_a, h_b) &= \partial_\theta \log P_{\theta}(T > t - 1 \mid X)(h_a, h_b) \\
    &\quad - \frac{\alpha(X)}{\alpha(X) + \beta(X) + t - 1} h_a(X) \\
    &\quad + \frac{\alpha(X) \beta(X)}{(\beta(X) + t - 1)(\alpha(X) + \beta(X) + t - 1)} h_b(X).
\end{align*}

Furthermore, for two directions $\theta_1: (h_a, h_b) \in \mathcal{H}$ and $\theta_2 := (g_a, g_b) \in \mathcal{H}$, the second derivative of the loss function $\ell$ is given by:
\[
\ddot{\ell}_{\theta}(\theta_1, \theta_2)(z) := - \delta \partial_{\theta}^2 \log P_{\theta}(T = t \mid X=x)(\theta_1, \theta_2) 
- (1 - \delta)  \partial_{\theta}^2 \log P_{\theta}(T > t \mid X=x)(\theta_1, \theta_2).
\]
To determine the second derivatives on the right-hand side, we use the fact that, for any functional $\theta \mapsto f_{\theta}(t, a, x)$, its second derivative satisfies the Jacobian formula:
\begin{align*}
    \partial_{\theta}^2 f_{\theta}(t, a, x)(\theta_1, \theta_2) &= \partial_{a}^2 f_{\theta}(t, a, x)(h_a, g_a)   + \partial_{b}^2 f_{\theta}(t, a, x)(h_b, g_b) \\
    &\quad + \partial_{a} \partial_b f_{\theta}(t, a, x)(h_a, g_b) + \partial_b \partial_{a} f_{\theta}(t, a, x)(h_b, g_a).
\end{align*}
Applying the chain rule and using that  $\partial_a\alpha(X) = \partial_a \exp \{a(X) \} = \alpha(X)$ and $\partial_b\beta(X) = \partial_b \exp \{b(X) \} = \beta(X)$, we can show that
\begin{align*}
    \partial_a^2 \log P_{\theta}(T =1 \mid X=x)(h_a, g_a)  &=   -\frac{\alpha(X)\beta(X)}{\{\alpha(X) + \beta(X)\}^2} h_a(X) g_a(X); \\
    \partial_a \partial_b \log P_{\theta}(T =1 \mid X=x)(h_a, g_b)  &=   \frac{\alpha(X)\beta(X)}{\{\alpha(X) + \beta(X)\}^2} h_a(X) g_a(X);\\
   \partial_b     \partial_a \log P_{\theta}(T =1 \mid X=x)(h_b, g_a)  &=   \frac{\alpha(X)\beta(X)}{\{\alpha(X) + \beta(X)\}^2} h_b(X) g_a(X) ;\\
 \partial_b^2 \log P_{\theta}(T =1 \mid X=x)(h_b, g_b)  &=   -\frac{\alpha(X)\beta(X)}{\{\alpha(X) + \beta(X)\}^2} h_b(X) g_b(X).
\end{align*} 
Furthermore, applying the recursions, we find that
\begin{align*}
\partial_a^2& \log P_{\theta}(T =t  \mid X=x)(h_a, g_a)   =    \partial_a^2 \log P_{\theta}(T =t - 1  \mid X=x)(h_a, g_a)\\
    & \quad \hspace{7cm} + \left[\frac{-\alpha(X)(\beta(X) + t - 1)}{(\alpha(X) + \beta(X) + t - 1)^2} \right] h_b(X) g_b(X); \\
     \partial_b^2& \log P_{\theta}(T =t  \mid X=x)(h_b, g_b)  =    \partial_b^2 \log P_{\theta}(T =t - 1  \mid X=x)(h_b, g_b)\\
     & \quad + \beta(X)(\alpha(X) + 1) \cdot \Bigg[
\frac{1}{(\beta(X) + t - 2)(\alpha(X) + \beta(X) + t - 1)} \\
& \hspace{5cm}   - \frac{\beta(X) (2\beta(X) + \alpha(X) + 2t - 3)}{\left[(\beta(X) + t - 2)(\alpha(X) + \beta(X) + t - 1)\right]^2}
\Bigg] h_b(X)g_b(X)\\
\partial_a&  \partial_b   \log P_{\theta}(T =t  \mid X=x)(h_a, g_b)   =   \partial_a \partial_b  \log P_{\theta}(T =t - 1  \mid X=x)(h_a, g_b)\\
    & \quad \hspace{7cm} +  \frac{\alpha(X)\beta(X)}{\{\alpha(X) + \beta(X) + t - 1\}^2} h_a(X)g_b(X) \\
    \partial_b&  \partial_a   \log P_{\theta}(T =t  \mid X=x)(h_b, g_a)   =   \partial_b \partial_a  \log P_{\theta}(T =t - 1  \mid X=x)(h_b, g_a)\\
    & \quad \hspace{7cm} +  \frac{\alpha(X)\beta(X)}{\{\alpha(X) + \beta(X) + t - 1\}^2} h_b(X)g_a(X)
\end{align*}

Similarly, we can show the second derivatives of the log survival probability are equal to those of the log event probabilities:
\begin{align*}
    \partial_a^2 \log P_{\theta}(T > 1 \mid X=x)(h_a, g_a)  &=   -\frac{\alpha(X)\beta(X)}{\{\alpha(X) + \beta(X)\}^2} h_a(X) g_a(X); \\
    \partial_a \partial_b \log P_{\theta}(T >1 \mid X=x)(h_a, g_b)  &=   \frac{\alpha(X)\beta(X)}{\{\alpha(X) + \beta(X)\}^2} h_a(X) g_a(X);\\
   \partial_b     \partial_a \log P_{\theta}(T > 1 \mid X=x)(h_b, g_a)  &=   \frac{\alpha(X)\beta(X)}{\{\alpha(X) + \beta(X)\}^2} h_b(X) g_a(X) ;\\
 \partial_b^2 \log P_{\theta}(T >1 \mid X=x)(h_b, g_b)  &=   -\frac{\alpha(X)\beta(X)}{\{\alpha(X) + \beta(X)\}^2} h_b(X) g_b(X).
\end{align*} 
Moreover, we can show that
\begin{align*}
    \partial_a^2 \log P_{\theta}(T  > t  \mid X=x)(h_a, g_a) & = \partial_a^2 \log P_{\theta}(T =t  \mid X=x)(h_a, g_a) \\
      \partial_a \partial_b \log P_{\theta}(T  > t  \mid X=x)(h_a, g_b) & = \partial_a \partial_b  \log P_{\theta}(T =t  \mid X=x)(h_a, g_b) \\
      \partial_b \partial_a \log P_{\theta}(T  > t  \mid X=x)(h_b, g_a) & = \partial_a \partial_b  \log P_{\theta}(T =t  \mid X=x)(h_b, g_a).
\end{align*}
The remaining Hessian component $ \partial_b^2  \log P_{\theta}(T  > t  \mid X=x)(h_b, g_b)$ can be found in \cite{hubbard2021beta}.

\subsection{Additional details on synthetic experiment}
\label{appendix::simdetails}

The data structure \( Z = (X, A, \widetilde{T}, \Delta) \) is generated as follows. The covariate vector \( X \) is drawn from a uniform distribution on \([-1, 1]^3\). Given \( X \), the treatment assignment \( A \) is sampled from a Bernoulli distribution with success probability 
\[
\pi_0(X) := \text{expit}(2\sqrt{3}(X_1 + X_2 + X_3)).
\]
The uncensored survival time $T$ is drawn conditional on $(A,X)$ from a Beta-geometric event time distribution with shape parameters 
\begin{align*}
\alpha_0(A,X) &:= \exp(-0.1 + \sqrt{3}(X_1 + X_2 + X_3) + 0.25A)\\
\beta_0(A,X) &:= \exp(\sqrt{3}(X_1 + X_2 + X_3) + 0.1).
\end{align*}
We impose administrative censoring at time \( c = 6 \), with the observed event time $\widetilde{T} = \min(T, c)$ and censoring indicator $\Delta = 1(T \leq c)$.

\section{Additional experiments for R-learner loss}

\subsection{Inference on ATE using R-learner}
\label{sec::exp::cate}

In this experiment, we evaluate various autoDML estimators for the ATE derived from the R-learner loss, a specific Neyman-orthogonal loss for the CATE introduced by \cite{QuasiOracleWager}. One advantage of developing autoDML estimators based on a loss function that directly targets the CATE is that it provides a straightforward framework for constructing semiparametric estimators with model assumptions on the CATE, such as partially linear regression models, by selecting an appropriate Hilbert space $\mathcal{H}$. In this experiment, we impose an additive model assumption on the CATE by defining $\mathcal{H}$ as the space of all functions that are additive in the covariates.

We consider the data structure \( Z = (X, A, Y) \), where \( X \in \mathbb{R}^d \) represents covariates, \( A \in \{0,1\} \) is a binary treatment, and \( Y \in \mathbb{R} \) is the outcome. The R-learner loss function is a weighted pseudo-regression loss, given in Example \ref{example::orthoLS}. The M-estimand \( \theta_P \) is the CATE function, \( x \mapsto E_P[Y \mid A = 1, X = x] - E_P[Y \mid A = 0, X = x] \), with the functional of interest \( \psi_P: \theta \mapsto E_P[\theta(1,X) - \theta(0,X)] \). The covariate vector \( X = (X_1, X_2, X_3) \) is drawn independently as \( X_1, X_2, X_3 \sim \text{Uniform}(-1, 1) \). Given \( (X_1, X_2, X_3)=(x_1,x_2,x_3) \), treatment \( A \) follows a Bernoulli distribution with success probability \( \pi_0(x_1,x_2,x_3) := \text{expit}(\sin(2x_1) + 2x_2 + |x_3|) \). We define the auxiliary probability \( \pi_{0,\text{add}}(x_2) := \text{expit}(x_2) \) to construct a local perturbation, where the weight is \( \omega(x_2) = \pi_{0,\text{add}}(x_2)(1 - \pi_{0,\text{add}}(x_2)) \) and the local fluctuation term is \( \text{local\_fluct} = (A - \pi_{0,\text{add}}(X_2)) / \omega(X_2) \). The outcome \( Y \), given \( A = a \) and \( (X_1, X_2, X_3)=(x_1,x_2,x_3) \), follows a Normal distribution with mean \( \mu_0(a,x_1,x_2,x_3) = \text{local\_fluct} / \sqrt{n} + \sin(2x_1) + |x_3| + a (0.3 + 2x_3^2 + \sin(2x_1)) \) and standard deviation \( \sigma_0(x_1,x_3) = 0.5 + (x_1 + x_3)/8 \). Irregular estimators exhibit nonvanishing asymptotc bias under sampling of local perturbation to $P_0$ of distance $O(n^{-1/2})$.

We evaluate four autoDML estimators based on the R-learner loss, each employing an additive model $\mathcal{H}$ for the CATE. For all estimators, the nuisance functions for the R-learner loss, specifically the propensity score $\pi_0$ and the conditional mean of the outcome given covariates $E_0[Y \mid X = \cdot]$, are estimated using an ensemble of multivariate adaptive regression splines \citep{friedman1991multivariate} and a generalized additive spline model \citep{GAMhastie1987generalized}, respectively implemented in the \texttt{R} packages \texttt{earth} \citep{milborrow2019package} and \texttt{mgcv} \citep{wood2001mgcv}. The first estimator is the one-step autoDML estimator, $\widehat{\psi}_n^{\mathrm{dml}}$, as described in Example \ref{examp3est} of Section \ref{sec::autodmlest} and implemented according to Algorithm \ref{alg:autoDML}. The second is the autoTML estimator, $\widehat{\psi}_n^{\mathrm{tmle}}$, which is also based on the R-learner loss and implemented according to Algorithm \ref{alg:autoTML}. For both the autoDML and autoTML estimators, the CATE, $\theta_0$, is learned using the R-learner loss with the highly adaptive lasso (HAL) spline estimator \citep{vanderlaanGenerlaTMLEFIRST, HAL2016}, while the Riesz representer, $\alpha_0$, is learned using the Riesz loss with the HAL estimator. The HAL estimators are implemented using the \texttt{hal9001} \citep{hal2} package, with the additive model constraint enforced by specifying the argument \texttt{max\_degree = 1}. The estimators of the nuisances in the loss, the CATE, and the Riesz representer are cross-fitted using Algorithm \ref{alg:crossfit} with 10 folds.  The third estimator, $\widehat{\psi}_n^{\mathrm{sieve}}$, is an autoSieve estimator that leverages a data-adaptive sieve, $\{\mathcal{H}_{n, \lambda_k}: k \in \mathbb{N}\}$, learned using the additive HAL estimator, trained without cross-fitting on all available data. Here, $\lambda_k \in \mathbb{R}$ represents the lasso regularization parameter, and $\mathcal{H}_{n, \lambda_k}$ denotes the linear span of basis functions with nonzero coefficients from the additive HAL estimator fit with regularization parameter $\lambda_k$. The regularization parameter $\lambda_{k(n)}$ used to compute $\widehat{\psi}_n^{\mathrm{sieve}}$ is selected data-adaptively using the automatic undersmoothing method detailed in Section \ref{sec::autodmlest}. The final estimator is an adaptive sieve plug-in estimator for the ATE, where the sieve is again learned using HAL but without undersmoothing, resulting in superefficient behavior. This estimator is a specific instance of the adaptive DML estimators introduced in \cite{van2023adaptive} and is also a special case of the estimators proposed in Section \ref{sec::adml}, corresponding to a HAL-based sieve estimator fit with the regularization parameter $\lambda_{k(n)}$ selected through cross-validation using the R-learner loss. As a baseline for comparison, we also evaluate the nonparametric AIPW estimator of the ATE, which is constructed using estimators of the propensity score and the outcome regression.

\begin{figure}
    \centering
    \includegraphics[width=0.5\linewidth]{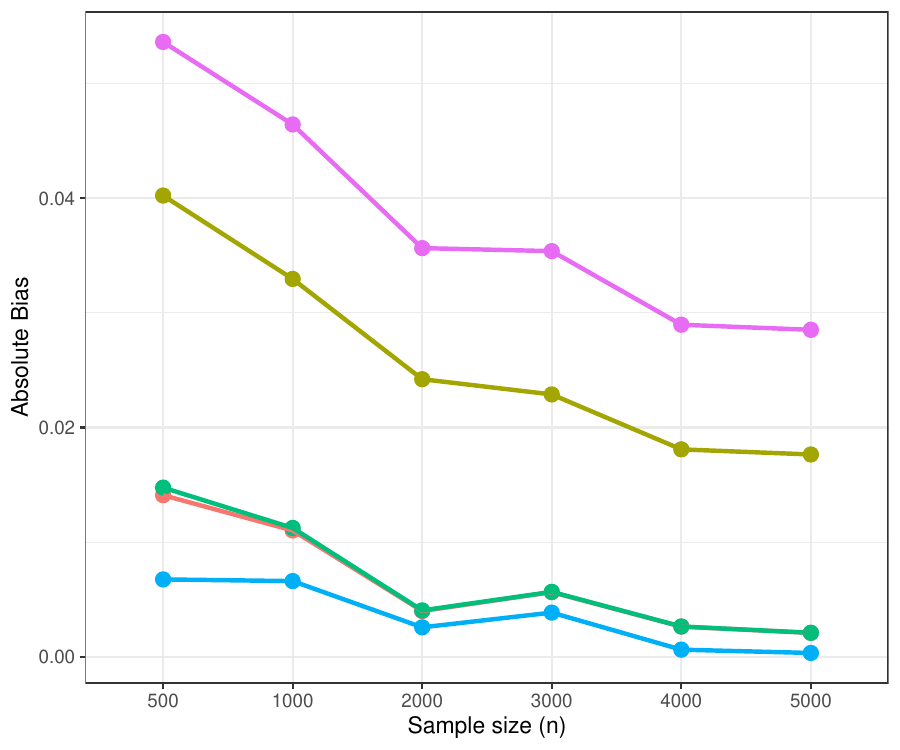}\includegraphics[width=0.5\linewidth]{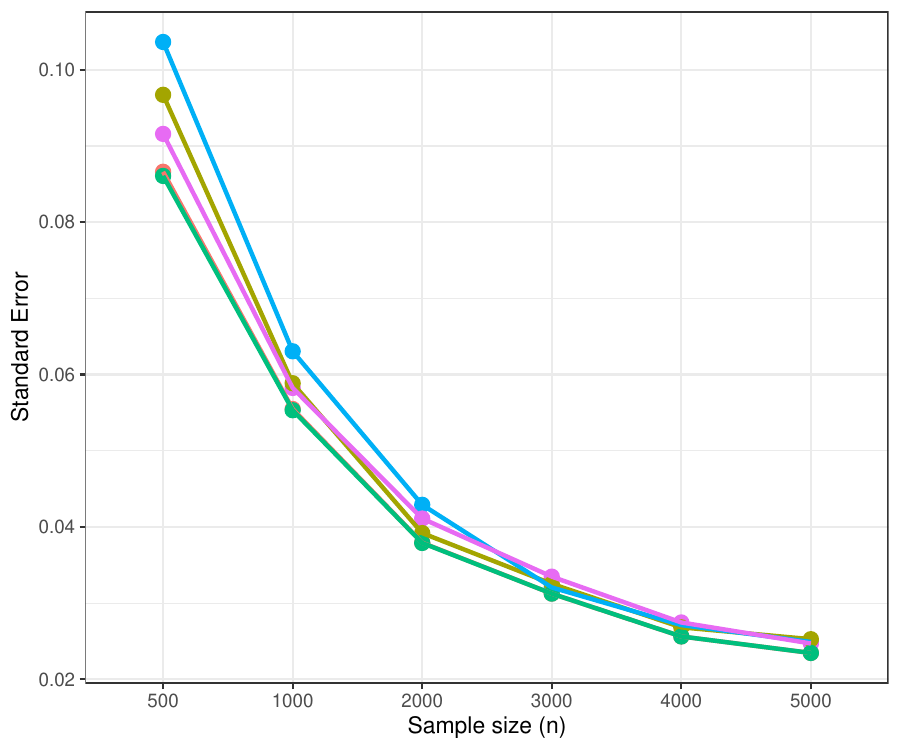}

    \includegraphics[width=0.5\linewidth]{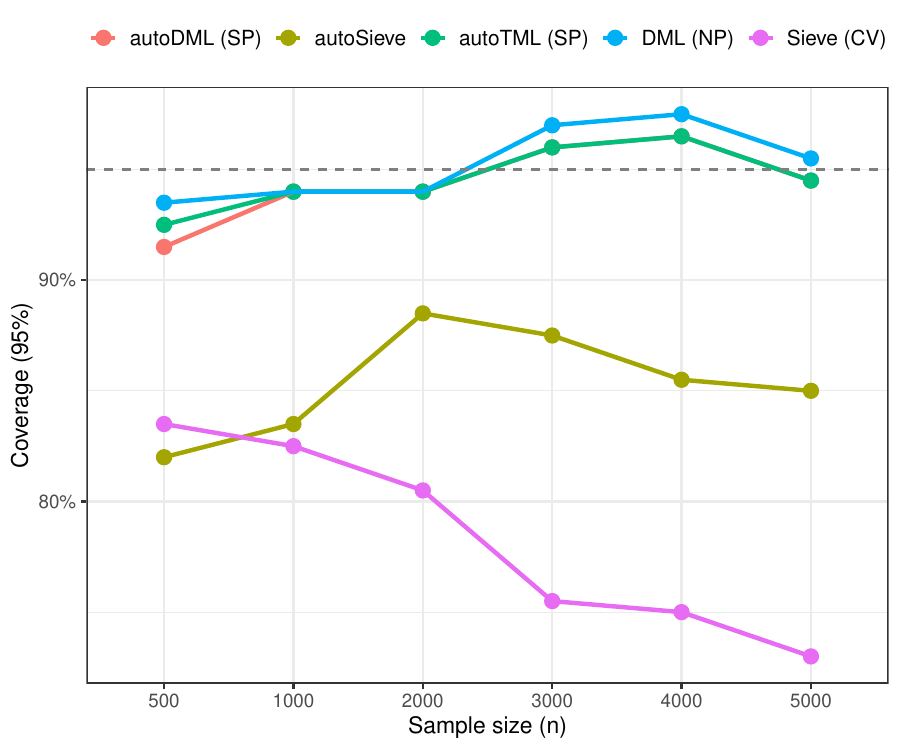}
    \caption{Bias, standard error, and coverage probability for the ATE, comparing various methods across sample sizes.}
    \label{fig:cate-simulation-results}
\end{figure}

The results are displayed in Figure \ref{fig:cate-simulation-results}. In terms of bias and coverage, autoTML, autoDML, and DML perform best among the methods evaluated. autoTML and autoDML exhibit less variability compared to DML because they perform debiasing within a semiparametric model and are thus less sensitive to issues with treatment overlap. Additionally, autoDML and autoTML demonstrate similar performance across all sample sizes, likely due to the quadratic nature of the loss and the linearity of the ATE functional, which aligns the two approaches closely. Conversely, the cross-validated and auto-undersmoothed plug-in HAL estimators exhibit the poorest performance regarding bias and coverage. The cross-validated estimator shows significant bias and inadequate coverage under sampling from local perturbations, primarily due to its irregular nature. Although undersmoothing partially reduces this bias, it remains substantial enough to negatively affect coverage.

\end{document}